\documentclass[12pt]{article}
\usepackage[T1]{fontenc}
\usepackage[
left=0.6in,
right=0.6in,
top=0.6in,
bottom=0.6in
]{geometry}   % Reduced margins (change to 0.75in if required)
\usepackage{setspace}
\usepackage{amsmath,amssymb} 
\usepackage{amsthm}
\newtheorem{remark}{Remark}
\usepackage{algorithm} 
\usepackage{algpseudocode}
\usepackage{float}
\usepackage{placeins}

\usepackage{parskip}
\usepackage{mathrsfs}
\usepackage{graphicx}
\usepackage{xcolor}
\usepackage{subcaption}
\usepackage{forest}
\usepackage{amsmath}
\usepackage{amssymb}
\usepackage{hyperref}
 \usepackage{booktabs}
\usepackage{longtable}
\usepackage{array}
\usepackage{tabularx}

\newtheorem{definition}{Definition}
\newtheorem{lemma}{Lemma}
\newtheorem{theorem}{Theorem}
\newtheorem{corollary}{Corollary}

\title{\bfseries The Surplus Parking Gathering Problem in Infinite Grids}

\author{
Animesh Maiti$^{1}$ \and
Abhinav Chakraborty$^{2}$ \and
Subhash Bhagat$^{1}$
}

\date{}

\begin{document}

\maketitle

\begin{center}
{\small
$^{1}$Department of Mathematics\\
Indian Institute of Technology Jodhpur, Rajasthan, India\\
Email: \texttt{animesh7.iitj@gmail.com, sbhagat@iitj.ac.in} \\[1em]

$^{2}$Department of Mathematics\\
Birla Institute of Technology Mesra, Ranchi, Jharkhand, India\\
Email: \texttt{abhinavchakraborty@bitmesra.ac.in}
}
\end{center}

\begin{abstract}
% Your abstract goes here.
In this paper, we introduce the \emph{Surplus Parking Gathering Problem} ($\mathcal{SPG}$), a new coordination problem for robots deployed on an infinite grid. The input consists of a set of designated parking nodes, each associated with a prescribed capacity, while the total number of robots exceeds the total parking capacity. The objective is to saturate every parking node exactly according to its capacity while gathering all remaining surplus robots at a common grid node that is not specified a priori. The robots are assumed to be autonomous, anonymous, oblivious, identical, disoriented, and homogeneous. We consider the asynchronous (\textsc{async}) model with global visibility and global strong multiplicity detection. 

We first establish necessary conditions for the solvability of $\mathcal{SPG}$ by characterizing the initial configurations that admit no deterministic distributed algorithm. For all the remaining solvable configurations, we present a deterministic distributed algorithm that correctly solves the problem. The proposed algorithm proceeds in several phases and avoids collisions throughout its execution. We prove that the algorithm terminates in finite time and, upon termination, every parking node is saturated according to its prescribed capacity while all surplus robots are gathered at a uniquely determined gathering node. We further analyze the move complexity of the proposed algorithm, obtaining an $O(n(a+b)+n^2)$ upper bound together with an $\Omega(n(a+b))$ worst-case lower bound for the SPG problem.

\end{abstract}

              % typeset the header of the contribution

%
%
\section{Introduction}
The coordination of large collections of autonomous mobile robots has been one of the central research topics in distributed computing and swarm robotics over the past three decades. Fundamental tasks such as gathering \cite{10.1007/3-540-45061-0_90,doi:10.1137/S009753979628292X,doi:10.1137/S0097539704446475,Flocchini2005}, pattern formation \cite{FLOCCHINI2008412,bose2020arbitrary,CICERONE20231}, exploration \cite{10.1007/978-3-540-77096-1_8, baldoni:inria-00333179} and mutual visibility \cite{9732663,ADHIKARY2022218,DBLP:journals/iandc/LunaFCPSV17} have been extensively studied under various computational models. These problems are motivated by various applications including automated parking systems, environmental monitoring, search-and-rescue operations, and precision agriculture \cite{inproceedings,inproceedings1}. The primary objective is to design distributed algorithms that enable robots to accomplish a common task despite their limited computational capabilities and the absence of explicit communication.

In the traditional robot model, the robots are assumed to be \textit{autonomous} (there is no central coordinator), \textit{anonymous} (they have no unique identifiers), \textit{homogeneous} (all robots execute the same deterministic algorithm), and \textit{identical} (they are physically indistinguishable). Depending on the underlying computational model, robots may operate either in a continuous Euclidean space or on the nodes of a graph. Robots are generally modeled as dimensionless points in Euclidean space in most existing work. In recent years, graph-based models have attracted considerable attention because they are able to capture many structured environments where robot movements are constrained by the topology of the underlying network. In this work, we consider robots deployed on the nodes of an infinite grid, where robot movements are restricted to adjacent grid nodes \cite{bhagat2022gathering,DISTEFANO2017377}.

The robots do not share a common global coordinate system; instead, each robot is equipped with its own local coordinate system whose origin is located at its current position. The directions and orientations of the coordinate axes may differ from one robot to another. However, since the robots are deployed on the nodes of an infinite square grid, they share a common unit of distance, namely the distance between two adjacent grid nodes, commonly referred to as the \emph{one-hop distance}.

In some computational models, the robots are assumed to have agreement on the directions and orientations of one or both coordinate axes. Furthermore, the robots may share a common sense of \textit{chirality}, i.e., a common notion of clockwise orientation. However, in this paper, we assume that the robots are fully disoriented. That is, they do not agree on either the directions or the orientations of their local coordinate axes, nor do they possess a common chirality.

Each robot is equipped with a \textit{visibility sensor}, that enables it to observe the positions of other robots. Depending on the model, the visibility range may be limited or global. In this paper, we assume \emph{global visibility}; hence, every robot can observe the positions of all other robots in the system. The robots are assumed to be \emph{silent}; that is, they do not possess any explicit means of communication. The grid also contains a set of designated nodes, called \emph{parking nodes}, each associated with a prescribed parking capacity.

At any given time, a robot is either active or inactive. If a robot is active, it executes a Look–Compute- Move (LCM) cycle. During the \textit{Look} phase, a robot obtains a snapshot of the current robot configuration and the locations of the parking nodes. In the \textit{Compute} phase, it computes a destination node based on the snapshot obtained during the Look phase. The computed destination node may coincide
with its current position. During the \textit{Move} phase, it moves toward the computed destination. If the computed position is the current position, then a robot performs a \textit{null movement}. After completing the cycle, the robot becomes inactive until it is activated again by the scheduler. In this paper, we assume instantaneous movement. Consequently, robots are observed only at grid nodes and never while traversing the edges of the grid graph. At any instant of time, a robot may move toward one of its adjacent nodes, and the movement must be only along the edges of the input grid graph. The robots are \textit{oblivious}; that is, they possess no persistent memory and cannot recall any information from previous Look–Compute–Move (LCM) cycles, including past observations, computations, or movements. 

The activation of the robots and the execution of the phases in their computational cycles are governed by a scheduler. Depending on the synchronization assumptions, three standard computational models are commonly considered in the literature: \textit{Fully Synchronous} (\textsc{fsync}), \textit{Semi-Synchronous} (\textsc{ssync}), and \textit{Asynchronous} (\textsc{async}). In the \textsc{fsync} model, all robots are activated simultaneously and execute their Look--Compute--Move cycles. The \textsc{ssync} model relaxes this assumption by allowing only a subset of the robots to be activated in each round; however, all activated robots complete their computational cycles synchronously within that round. In contrast, the \textsc{async} model assumes no global notion of time. Robots are activated independently by an adversarial scheduler, and the duration of each phase of the Look--Compute--Move cycle is finite but unpredictable. Consequently, the Look, Compute, and Move phases of different robots may overlap arbitrarily. Furthermore, the robots are not equipped with motion-detection capabilities and therefore cannot distinguish between moving and stationary robots. As a result, a robot may perform its computation based on an outdated snapshot of the configuration. Throughout this paper, we consider the \textsc{async} model. The scheduler is assumed to be fair, ensuring that every robot is activated infinitely often.

The robots may be equipped with multiplicity detection, which enables them to distinguish nodes occupied by multiple robots. Depending on the computational model, multiplicity detection may be either weak or strong. In this paper, we assume that the robots possess a global strong multiplicity detection. That is, every robot can determine the exact number of robots located at each occupied node in the configuration. Under local strong multiplicity detection, a robot can determine only the number of robots occupying its own node, rather than the multiplicities at all occupied nodes. Global strong multiplicity detection is essential for solving our problem, since otherwise the robots cannot determine whether the desired terminal configuration has been reached.

We allow robots to be initially deployed on parking nodes. Each parking node is assigned a fixed capacity that determines the maximum number of robots that may occupy the node at any time during the execution. The capacities of the parking nodes are part of the input and are known to all robots. We assume that the total number of robots is strictly greater than the sum of the capacities of all parking nodes. 

The $\mathcal {SPG}$ problem combines two fundamental coordination tasks that have traditionally been studied independently. The first is the \emph{parking} problem, in which robots are required to occupy a prescribed set of target locations. In our setting, this task is further constrained by the requirement that each designated parking node be occupied by exactly the number of robots specified by its capacity \cite{chakraborty2025parking}. The second is the classical \emph{gathering} problem, in which all robots must converge to a single node that is not known in advance.
Formally, the objective of this problem is to transform any initial configuration into a final configuration such that, at some finite time ($t>0$), every parking node contains exactly as many robots as its capacity. The remaining robots, referred to as \textit{surplus robots}, must gather at a common grid node that is not specified in advance and must be determined autonomously by the robots.
Consequently, the robots must simultaneously determine which robots should occupy the parking nodes and which robots should eventually form the surplus gathering, while preserving the correctness of both tasks throughout the execution.

\subsection{Motivation}

The $\mathcal {SPG}$ problem is motivated by both algorithmic and practical considerations. From an algorithmic perspective, $\mathcal {SPG}$ integrates two fundamental coordination tasks that have traditionally been studied independently. The parking problem assumes that every robot can eventually be assigned to a target location \cite{chakraborty2025parking}, whereas classical gathering algorithms require all robots to converge to a single location. In contrast, $\mathcal {SPG}$ requires both objectives to be achieved simultaneously under capacity constraints, i.e., the designated parking nodes must be saturated exactly according to their prescribed capacities, while all remaining surplus robots must gather at a common node that is not specified a priori.

The problem is also motivated by practical applications involving capacity-constrained service locations, such as autonomous parking facilities, charging stations in robotic warehouses, and docking hubs in logistics systems. In these environments, a limited number of robots can be accommodated at designated service stations, while the remaining robots must wait in an organized manner until resources become available. 

Furthermore, most classical robot coordination models assume that robots move freely as dimensionless points in a continuous Euclidean space, where infinitesimally small movements with arbitrary precision can be performed to avoid collisions. In some models, robots are even allowed to execute guided movements, that is, to follow prescribed curves during motion \cite{10.1007/s00446-018-0325-7,PATTANAYAK2019145}. Such assumptions, however, are often unrealistic in practical settings. Many real-world robotic systems operate in structured environments, such as warehouses, automated parking facilities, and agricultural fields, where robot movements are constrained by aisles, tracks, or other predefined pathways. Motivated by these applications, we model the environment as an infinite grid and restrict robot movements to its edges.

\subsection{Related Work}
The coordination of autonomous mobile robots has been extensively studied in distributed computing under a variety of computational models and environmental settings. Among the most fundamental coordination problems are gathering
\cite{10.1007/3-540-45061-0_90,doi:10.1137/S009753979628292X,
doi:10.1137/S0097539704446475,Flocchini2005},
pattern formation
\cite{FLOCCHINI2008412,bose2020arbitrary,CICERONE20231} and
mutual visibility
\cite{9732663,ADHIKARY2022218,DBLP:journals/iandc/LunaFCPSV17}, each aiming to enable a collection of autonomous robots to accomplish a common task despite severe computational limitations.

Gathering is one of the most extensively studied coordination problems for autonomous mobile robots. The objective is to bring all robots to a single location that is not specified a priori. Depending on the computational model, gathering has been investigated under various assumptions. In recent years, considerable attention has been devoted to gathering in discrete environments, where robots are deployed on the nodes of anonymous graphs \cite{d2016gathering,DISTEFANO2017377,klasing2008gathering,GOSWAMI2025114930}. Klasing et al. \cite{klasing2008gathering} investigated the gathering problem on anonymous rings and established that deterministic gathering is impossible in the absence of weak multiplicity detection. D’Angelo et al. \cite{d2016gathering} addressed deterministic gathering on finite grids. Di Stefano et al. \cite{DISTEFANO2017377} investigated the optimal gathering problem on infinite grids and proposed a distributed algorithm that gathers all robots at a Weber point while minimizing the total travel distance of the swarm. Their algorithm assumes global strong multiplicity detection, where the Weber point is defined as the graph node minimizing the sum of the shortest-path distances from all robot positions. Shibata et al. \cite{DBLP:conf/ipps/ShibataOS00K21} considered the gathering problem for a system of seven autonomous mobile robots deployed on a triangular grid. Goswami et al. \cite{GOSWAMI2025114930} considered the gathering problem of $n \geq 2$ mobile robots on an infinite triangular grid, assuming that the robots possess limited visibility. They further proved that gathering on a triangular grid with $1$-hop vision of robots is not possible even under a \textsc{fsync} scheduler if the robots do not agree on any axis.

Arbitrary Pattern Formation (\textsc{apf}) is one of the fundamental coordination problems in swarm robotics, in which a collection of autonomous mobile robots is required to form an arbitrary geometric pattern specified as input. The problem was first introduced in \cite{Sugihara1996DistributedAF}, where the authors characterized the class of formable patterns using the notion of symmetricity. Subsequently, Flocchini et al. \cite{FLOCCHINI2008412} investigated \textsc{apf} in the asynchronous (\textsc{async}) model for oblivious robots. Bose et al. \cite{bose2020arbitrary} further investigated \textsc{apf} in the Euclidean plane under the opaque robot model, where the visibility of a robot may be obstructed by other robots. In contrast to \textsc{apf}, which requires robots to realize a prescribed geometric pattern, $\mathcal {SPG}$  requires robots to satisfy prescribed parking capacities while simultaneously gathering the remaining surplus robots.

The \textit{parking} problem \cite{chakraborty2025parking} is another important coordination problem in swarm robotics, where robots are required to occupy a prescribed set of designated parking nodes while satisfying application-specific constraints. Unlike gathering, the objective is not to bring all robots to a common location but to distribute them among designated parking nodes. The parking problem may be regarded as a variant of the partitioning problem \cite{inproceedings3}, where robots partition themselves into $m$ groups and converge to separate regions.

\begin{table}[htbp]
\centering
\renewcommand{\arraystretch}{1.10}
\small
\resizebox{0.80\linewidth}{!}{%
\begin{tabular}{|p{0.18\linewidth}|p{0.22\linewidth}|p{0.24\linewidth}|p{0.21\linewidth}|}
\hline
\textbf{Problem} & \textbf{Domain} & \textbf{Objective} & \textbf{Reference} \\
\hline

Gathering &
Plane, rings, grids &
Gather all robots at one location &
\cite{klasing2008gathering,d2016gathering,DISTEFANO2017377} \\
\hline

Triangular-grid gathering &
Triangular grid &
Gather under limited visibility &
\cite{DBLP:conf/ipps/ShibataOS00K21,GOSWAMI2025114930} \\
\hline

Pattern formation &
Plane or regular grids &
Form a prescribed pattern &
\cite{Sugihara1996DistributedAF,FLOCCHINI2008412,bose2020arbitrary,CICERONE20231} \\
\hline

Parking &
Infinite grid &
Occupy designated parking nodes &
\cite{chakraborty2025parking} \\
\hline

Fixed-point / meeting-node problems &
Plane or infinite grid &
Occupy fixed points or meet at designated nodes &
\cite{10.1007/978-3-642-17653-1_1,DBLP:journals/dc/CiceroneSN19a,bhagat2022gathering} \\
\hline

Mutual visibility / exploration &
Grids and graphs &
Ensure visibility or explore nodes &
\cite{9732663,ADHIKARY2022218,10.1007/978-3-540-77096-1_8} \\
\hline

$\mathcal {SPG}$ &
\textbf{Infinite grid with capacities} &
\textbf{Saturate parking nodes and gather surplus robots} &
\textbf{This paper} \\
\hline

\end{tabular}%
}
\caption{Comparison with related coordination problems.}
\label{tab:related-work}
\end{table}

Several related coordination problems also require robots to occupy prescribed target locations. Fujinaga et al. \cite{10.1007/978-3-642-17653-1_1} introduced the notion of fixed points while studying the \textit{landmark covering problem} in the Euclidean plane. The problem requires the robots to reach a configuration in which they all occupy a common fixed point or landmark. Building on this concept, Cicerone et al. \cite{DBLP:journals/dc/CiceroneSN19a} investigated the \textit{Embedded Pattern Formation} problem and proposed a distributed algorithm that guarantees each robot occupies a unique fixed point within finite time without assuming common chirality. Similarly, the $k$-circle formation problem \cite{DBLP:journals/tcs/DasCBM22,DBLP:journals/algorithms/BhagatDCM21} requires robots to form disjoint circles centered at predefined locations, with each circle containing exactly $k$ robots positioned at distinct points on its circumference. Bhagat et al. \cite{bhagat2022gathering,DBLP:journals/ijfcs/BhagatCDM23} subsequently extended this line of research by studying the \textit{Gathering over Meeting Nodes} problem on infinite square grids, where the robots are deployed on grid nodes and a subset of the nodes is designated as meeting nodes. Unlike these problems, $\mathcal {SPG}$  requires robots to simultaneously satisfy parking-capacity constraints while gathering the remaining surplus robots at a common location.
Other coordination problem investigated on infinite grids include the \textit{Mutual Visibility}
\cite{9732663,ADHIKARY2022218,DBLP:journals/iandc/LunaFCPSV17}. The Mutual Visibility problem requires the design of a distributed algorithm that enables robots to relocate to distinct positions such that no three of them become collinear. A comparison of $\mathcal {SPG}$  with closely related coordination problems is presented in
Table~\ref{tab:related-work}.

Although the above coordination problems have been extensively investigated, they have traditionally been studied independently. To the best of our knowledge, no existing work considers the simultaneous execution of parking and gathering under parking-capacity constraints on infinite grids. We address this gap by introducing the $\mathcal {SPG}$ , which integrates these two objectives within a unified distributed framework and introduces new challenges arising from symmetry, asynchronous execution, collision avoidance, and parking-capacity constraints.

\subsection{Technical Challenges}
The $\mathcal {SPG}$ problem poses several challenges that do not arise in classical gathering or parking problems. First, the robots must simultaneously accomplish two interdependent objectives. Unlike the parking problem, not all robots can be assigned to the designated parking nodes because the total number of robots exceeds the combined capacities of all parking nodes. Unlike the classical gathering problem, however, a subset of the robots must always be parked at a parking node in accordance with the specified capacity. Therefore, the robots must simultaneously determine, in a distributed manner, which robots will occupy the parking nodes and which will constitute the gathering.

Second, the parking-node configuration may have reflectional or rotational symmetry, and the initial robot configuration may either preserve or break this symmetry. Since the robots are anonymous, oblivious, and execute the same deterministic algorithm, they may be unable to make different movement decisions in symmetric configurations. Therefore, the algorithm must preserve symmetry whenever possible and break it only when necessary to ensure deterministic progress.

Third, the robots operate in the asynchronous (\textsc{async}) model, where they are activated independently and may observe the configuration at different times. As a result, a robot may compute its movement using an outdated view of the system. Moreover, since the robots cannot distinguish between moving and stationary robots, coordinating their movements becomes more difficult than in synchronous models.

Finally, all robot movements must remain collision-free while preserving the desired structural properties of the configuration. It is crucial that the algorithm avoids creating undesired multiplicity nodes that may prevent robots from uniquely identifying the current execution phase. Due to asynchronous activations and movement restrictions imposed by the underlying grid, the challenge is not only to achieve the desired final configuration, but also to maintain the necessary invariants throughout the execution. These challenges show that the $\mathcal{SPG}$ problem cannot be solved by simply combining existing parking and gathering algorithms. They motivate the design of the distributed algorithm presented in the next section.

\subsection{Our Contributions}

We introduce the $\mathcal {SPG}$ problem, a new coordination problem for autonomous mobile robots deployed on an infinite grid. In $\mathcal{SPG}$, a set of designated parking nodes is associated with prescribed capacities, and the objective is to ensure that each parking node is occupied by exactly the number of robots specified by its capacity while all remaining surplus robots gather at a common grid node.

Our main contributions are summarized as follows:
\begin{itemize}
\item We formalize the $\mathcal {SPG}$ problem and its underlying computational model.

\item We characterize the initial configurations for which a deterministic solution to $\mathcal{SPG}$ is impossible.

\item For all remaining solvable configurations, we present a deterministic distributed algorithm that solves the $\mathcal {SPG}$ problem.

\item We show that the proposed algorithm is collision-free despite the movement restrictions imposed by the grid environment and the asynchronous execution of the robots.

\item We prove that the proposed algorithm always terminates in finite time, saturates every parking node according to its prescribed capacity, and gathers all surplus robots at a common node.
\item We establish an upper bound of $O(n(a+b)+n^2)$ and a worst-case lower bound of $\Omega(n(a+b))$ on the move complexity of the proposed algorithm and the $\mathcal {SPG}$  problem, respectively.
\end{itemize}

\subsection{Outline}

The remainder of the paper is organized as follows. Section~\ref{sec:model}
introduces the computational model, notation, and formal definition of the
Surplus Parking Gathering ($\mathcal {SPG}$ ) problem. Section~\ref{sec:impossibility}
characterizes the initial configurations for which the $\mathcal {SPG}$  problem is
unsolvable. Section~\ref{sec:algorithm} presents our deterministic distributed
algorithm for solving the $\mathcal {SPG}$  problem from all solvable configurations.
Section~\ref{sec:correctness} establishes the correctness of the proposed
algorithm and proves that it terminates in finite time. Section~\ref{sec:complexity} presents the move complexity analysis of the proposed algorithm by establishing an upper bound and a worst-case lower bound. Finally, Section~\ref{sec:conclusion} concludes the paper and
discusses possible directions for future research.

\section{Model, Definitions, and Notations}
\label{sec:model}
We consider a system of $n$ mobile robots operating on an infinite square grid. 
The robots are autonomous, anonymous, homogeneous, identical, silent, oblivious, 
dimensionless, and disoriented. Let $\mathscr P_\infty=(\mathbb Z,E')$ denote the 
infinite path graph on the set of integers, where 
$
E'=\{(i,i+1)\mid i\in\mathbb Z\}.
$
The underlying environment is the infinite square grid graph
$
G=(V,E)=\mathscr P_\infty \times \mathscr P_\infty,
$
obtained as the Cartesian product of two infinite path graphs. The vertex set $V$ 
represents the grid nodes, and robot movements are restricted to adjacent grid nodes. 
Let
$
\mathcal R=\{r_1,r_2,\ldots,r_n\}
$
be the set of robots. The grid also contains a finite set of designated parking nodes
$
\mathcal P=\{p_1,p_2,\ldots,p_m\}\subset V.
$
Each parking node has an associated capacity, specifying the number of robots that 
must eventually occupy that node.

\subsection{Basic Notations}

For two grid nodes $u,v\in V$, let $d_{\mathcal M}(u,v)$ denote the Manhattan
distance between $u$ and $v$ in the grid graph $G$. For a robot
$r_i\in\mathcal R$, the node occupied by $r_i$ at time $t$ is denoted by
$r_i(t)$. The multiset of robot positions at time $t$ is denoted by
$
\mathcal R(t)=\{r_1(t),r_2(t),\ldots,r_n(t)\}.
$
Initially, say at $t_0$, all robots occupy distinct grid nodes, that is,
$
r_i(t_0)\neq r_j(t_0) \text{, for all } i\neq j.
$
However, during the execution, multiple robots may occupy the same grid node.

For every node $v\in V$, let
$
\lambda_t(v)=|\{r_i\in\mathcal R:r_i(t)=v\}|
$
denote the number of robots located at $v$ at time $t$. Hence,
$\lambda_t(v)=0$ if no robot occupies $v$.

Each parking node $p\in\mathcal P$ has a prescribed capacity, denoted by
$\kappa(p)$. We extend the capacity function $\kappa$ to all grid nodes by
defining
\[
\kappa(v)=
\begin{cases}
\kappa(p), & \text{if } v=p\in\mathcal P,\\
0, & \text{otherwise}.
\end{cases}
\]
Thus, $\kappa(v)>0$ precisely when $v$ is a parking node.

The configuration of the system at time $t$ is represented by
$
\mathcal C(t)=(\mathcal R(t),\mathcal P,\kappa).
$
The function $\lambda_t$ is not listed as an independent component of
$\mathcal C(t)$ because it is uniquely determined by the multiset
$\mathcal R(t)$.

For every node $v\in V$, we define its status at time $t$ by the pair
$
\Theta_t(v)=(\lambda_t(v),\kappa(v)).
$
The first component of $\Theta_t(v)$ records the number of robots located at
$v$, while the second component records the parking capacity of $v$. If $v$ is
not a parking node, then $\kappa(v)=0$.

Thus, the status pair $\Theta_t(v)$ distinguishes the following cases:
\[
\begin{array}{c|c|c|c}
\text{Node type} & \lambda_t(v) & \kappa(v) & \Theta_t(v) \\ \hline
\text{empty non-parking node} & 0 & 0 & (0,0)\\
\text{single robot on non-parking node} & 1 & 0 & (1,0)\\
\text{robot multiplicity on non-parking node} & \geq 2 & 0 & (\lambda_t(v),0)\\
\text{empty parking node of capacity } c & 0 & c & (0,c)\\
\text{one robot on parking node of capacity } c & 1 & c & (1,c)\\
\text{robot multiplicity on parking node of capacity } c & \geq 2 & c & (\lambda_t(v),c)
\end{array}
\]

\subsection{Configurations and Symmetry}

An automorphism of the infinite grid graph $G=(V,E)$ is a bijection
$\phi:V\to V$ such that, for any two nodes $u,v\in V$,
\[
(u,v)\in E
\quad \Longleftrightarrow \quad
(\phi(u),\phi(v))\in E.
\]
Thus, $\phi$ preserves adjacency in the grid.

An automorphism $\phi$ of $G$ is called an automorphism of the configuration
$\mathcal C(t)$ if it preserves the status of every grid node, that is,
$
\Theta_t(v)=\Theta_t(\phi(v))
\quad \text{for every } v\in V.
$
Equivalently, $\phi$ preserves both the robot multiplicity and the parking-node
capacity at every node. The set of all automorphisms of $\mathcal C(t)$ is
denoted by
$
\operatorname{Aut}(\mathcal C(t)).
$

If $\operatorname{Aut}(\mathcal C(t))$ contains only the identity automorphism,
then $\mathcal C(t)$ is called asymmetric. Otherwise, $\mathcal C(t)$ is called
symmetric.

Since the set of robots and parking nodes is finite, no non-trivial
translational symmetry can occur. Hence, a symmetric configuration in the
infinite square grid may admit only reflectional or rotational symmetry. A
reflectional symmetry is determined by a line of symmetry, which may be
horizontal, vertical, or diagonal. Such a line may pass through either grid nodes
or grid edges. A rotational symmetry is determined by a center of rotation and an
angle of rotation. In the square grid, the possible non-trivial rotation angles
are $90^\circ$ and $180^\circ$.

\begin{definition}[Reflection Map]
\label{def:reflection-map}
Let $\mathcal C(t)$ admit a unique line of symmetry $\mathcal L$, and let
$X$ be a finite set of entities of $\mathcal C(t)$, where an entity is
either an occupied robot node or a parking node. The reflection map with respect
to $\mathcal L$ is the map
$
\mu_{\mathcal L}:X\to X
$
defined as follows. For every $x\in X$, $\mu_{\mathcal L}(x)$ denotes
the entity whose position is the mirror image of the position of $x$ with
respect to $\mathcal L$. If $x$ lies on $\mathcal L$, then $x$ is fixed by the
reflection, and hence
$
\mu_{\mathcal L}(x)=x.
$
For notational simplicity, whenever the line of symmetry $\mathcal L$ is fixed
or uniquely determined by the configuration under consideration, we write
$\mu$ instead of $\mu_{\mathcal L}$.
\end{definition}

\begin{definition}[Rotational Map]
\label{def:rotational-map}
Let $\mathcal C(t)$ admit a rotational symmetry with center $c$ and order
$q\geq 2$. Let $X$ be a finite set of entities of $\mathcal C(t)$,
where an entity is either an occupied robot node or a parking node. The
rotational map about $c$ is the map
$
\rho_c:X\to X
$
defined as follows. For every $x\in X$, $\rho_c(x)$ denotes the entity
whose position is obtained by rotating the position of $x$ about $c$ through the
angle
$
\theta=\frac{2\pi}{q}.
$
Therefore,
$
\rho_c^q(x)=x
\quad \text{for every } x\in X.
$
The set
$
\mathbf{Orbit}_{x}
=
\{x,\rho_c(x),\rho_c^2(x),\ldots,\rho_c^{q-1}(x)\}
$
is called the rotational orbit of $x$ under $\rho_c$.
For notational simplicity, whenever the center of rotation $c$ is fixed or
uniquely determined by the configuration under consideration, we write $\rho$
instead of $\rho_c$.
\end{definition}

Let $\phi\in\operatorname{Aut}(\mathcal C(t))$ be an automorphism of finite
order $q>1$. The cyclic group generated by $\phi$ is denoted by
$
\langle \phi\rangle=
\{\phi^0,\phi^1,\ldots,\phi^{q-1}\},
$
where $\phi^0$ is the identity automorphism.

For a subgroup $\mathcal H\leq \operatorname{Aut}(\mathcal C(t))$, the orbit of
a node $v\in V$ under $\mathcal H$ is defined as
$
\mathcal O_{\mathcal H}(v)=\{\psi(v):\psi\in\mathcal H\}.
$
The orbits induced by $\mathcal H$ form a partition of $V$.

\begin{definition}[Partitive Automorphism]
Let $X\subseteq V$. An automorphism
$\phi\in\operatorname{Aut}(\mathcal C(t))$ is called \emph{partitive on $X$}
if the cyclic group $\langle\phi\rangle$ has order $q>1$ and every node of
$X$ has an orbit of size exactly $q$, i.e.,
$
|\mathcal O_{\langle\phi\rangle}(v)|=q
\quad \text{for every } v\in X.
$
Equivalently, no node of $X$ is fixed by any non-identity element of
$\langle\phi\rangle$.
\end{definition}

\begin{definition}[Partitive Configuration]
A configuration $\mathcal C(t)$ is said to be \emph{partitive on $X\subseteq V$}
if there exists an automorphism
$\phi\in\operatorname{Aut}(\mathcal C(t))$ such that $\phi$ is partitive on
$X$. In this case, the nodes of $X$ are divided into disjoint orbits of equal
size $q>1$ under the action of $\langle\phi\rangle$.

If $F=V\setminus X$ is the set of nodes fixed by the symmetry, then we also say
that $\mathcal C(t)$ is partitive outside $F$. Thus, every node in $X$ has a
non-trivial symmetric counterpart, whereas the nodes in $F$ may remain fixed
under the symmetry.
\end{definition}

For example, suppose that $\mathcal C(t)$ admits a reflection symmetry with
respect to a line $\mathcal L$. If no grid node lies on $\mathcal L$, then the
reflection maps every grid node to a distinct symmetric node. Hence,
$\mathcal C(t)$ is partitive on $V$. On the other hand, if $\mathcal L$ passes
through grid nodes, then the nodes on $\mathcal L$ are fixed, while all nodes
outside $\mathcal L$ occur in symmetric pairs. In this case, $\mathcal C(t)$ is
partitive on $V\setminus \mathcal L$.

Similarly, if $\mathcal C(t)$ admits a rotational symmetry about a center $c$,
then the configuration may be partitive on $V\setminus\{c\}$ whenever $c$ is a
grid node fixed by the rotation. If the center of rotation is not a grid node,
then no grid node is fixed by the center, and the configuration may be partitive
on the whole node set $V$.

\subsection{Minimum Enclosing Rectangle, Strings, Key Corner, and Leading Corner}

For a configuration $\mathcal C(t)$, let
$
\mathcal{MER}_{\mathcal C}(t)
$
denote the minimum grid-aligned rectangle containing all occupied robot nodes and
all parking nodes. Similarly, let
$
\mathcal{MER}_{\mathcal P}
$
denote the minimum grid-aligned rectangle containing only the parking-node set
$\mathcal P$. Both rectangles are aligned with the grid axes. The length of a
side of such a rectangle is measured by the number of grid edges on that side.

Suppose that the side lengths of $\mathcal{MER}_{\mathcal C}(t)$ are $a$ and
$b$. Since the side lengths are measured by the number of grid edges, there are
$a+1$ grid nodes along one side and $b+1$ grid nodes along the other side.
Hence, the total number of grid nodes contained in
$\mathcal{MER}_{\mathcal C}(t)$ is
$
(a+1)(b+1).
$

\begin{figure}[htbp]
    \centering
    \includegraphics[width=0.5\textwidth]{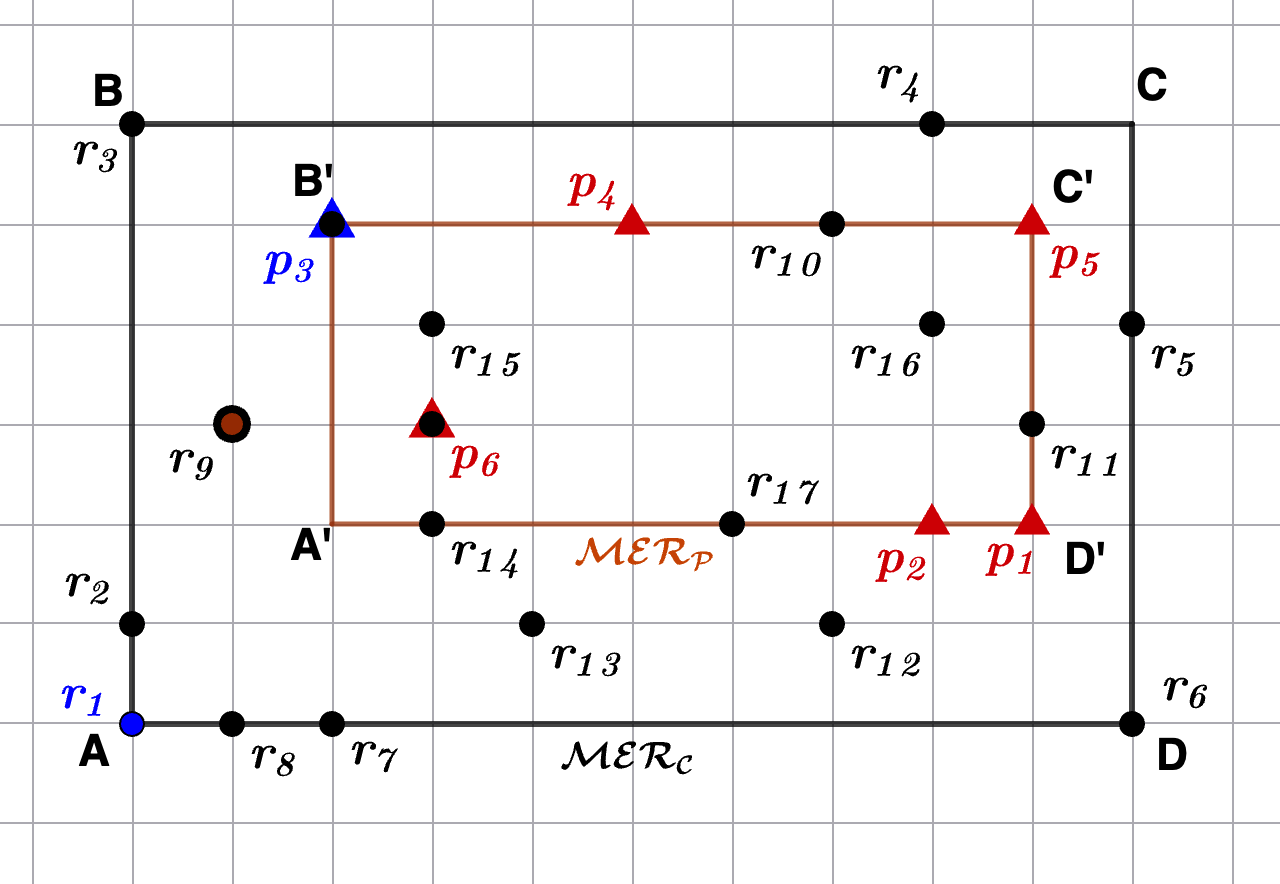}

    \caption[Leading-corner based ordering with respect to $\mathcal{MER}_{\mathcal P}$]
{An illustration of the key corner and leading-corner-based ordering of robots and parking nodes. The black rectangle denotes $\mathcal{MER}_{\mathcal C}$ with $\operatorname{len}(AB)<\operatorname{len}(AD)$, while the brown rectangle denotes $\mathcal{MER}_{\mathcal P}$ with $\operatorname{len}(A'B')<\operatorname{len}(A'D')$. The parking nodes are
$p_1,p_2,\ldots,p_6$ with capacities $2,2,4,1,2,$ and $3$, respectively.
The parking nodes $p_6$ and $p_3$ contain $2$ and $3$ robots, respectively,
and the node represented by $r_9$ is a multiplicity node containing $4$ robots. The key corner of $\mathcal{MER}_{\mathcal C}$ is $A$ with string direction $AD$, and the leading corner of $\mathcal{MER}_{\mathcal P}$ is $B'$ with string direction $B'A'$, with parking string $\rho_{B'}=40000030000010000000000000022002$.}

    \label{fig:leading}
\end{figure}

A scan string associated with a corner of $\mathcal{MER}_{\mathcal C}(t)$ is
defined as the sequence of symbols obtained by visiting all nodes of
$\mathcal{MER}_{\mathcal C}(t)$ exactly once in a prescribed scan order. During
such a scan, each encountered node $v$ contributes the status pair
$
\Theta_t(v)=(\lambda_t(v),\kappa(v)),
$
where $\lambda_t(v)$ denotes the number of robots located at $v$ at time $t$,
and $\kappa(v)$ denotes the capacity of $v$. Therefore, every scan string is a finite sequence of status pairs of
length
$
(a+1)(b+1).
$

The scan strings are compared lexicographically. The underlying order on the
status pairs is defined as follows: for two symbols $(x,y)$ and $(x',y')$,
$
(x,y)\prec (x',y')
$
if either $x<x'$, or $x=x'$ and $y<y'$. Hence, robot multiplicities are compared
first, and parking capacities are used to break ties.

Let $Q$ be a corner of $\mathcal{MER}_{\mathcal C}(t)$, and let $Q_1$ and
$Q_2$ be the two corners adjacent to $Q$. Starting from $Q$, the rectangle can
be scanned in two possible directions, namely in the directions parallel to
$QQ_1$ and $QQ_2$. The corresponding scan strings are denoted by $\mathcal{STR}^{QQ_1}(t)$
 and $\mathcal{STR}^{QQ_2}(t)$, respectively. Thus, each corner of $\mathcal{MER}_{\mathcal C}(t)$ gives rise to two scan strings, and altogether eight scan strings are obtained from the four corners.

First, consider the case where $\mathcal{MER}_{\mathcal C}(t)$ is a non-square
rectangle. For a corner $Q$, among the two sides incident to $Q$, the direction
parallel to the shorter side is chosen as the string direction associated with
$Q$. Suppose, without loss of generality, that the side from $Q$ to $Q_1$ is shorter
than the side from $Q$ to $Q_2$, i.e.,
$
\operatorname{len}(QQ_1)<\operatorname{len}(QQ_2).
$
Then the string associated with $Q$ is defined as
$
\mathcal{STR}^Q(t)=\mathcal{STR}^{QQ_1}(t).
$
The direction parallel to $QQ_1$ is called the string direction associated with
$Q$, while the direction parallel to $QQ_2$ is called the non-string direction
associated with $Q$.

If $\mathcal{MER}_{\mathcal C}(t)$ is a square, then the two strings
$\mathcal{STR}^{QQ_1}(t)$ and $\mathcal{STR}^{QQ_2}(t)$ associated with the same corner $Q$
are compared lexicographically, and the string associated with $Q$ is defined as
$
\mathcal{STR}^Q(t)=\max\{\mathcal{STR}^{QQ_1}(t),\mathcal{STR}^{QQ_2}(t)\}.
$

A corner $Q^*$ of $\mathcal{MER}_{\mathcal C}(t)$ is called a \emph{key corner}
of $\mathcal C(t)$ if
\[
\mathcal{STR}^{Q^*}(t)
=
\max\{\mathcal{STR}^Q(t): Q \text{ is a corner of }
\mathcal{MER}_{\mathcal C}(t)\}.
\]
Any corner that is not a key corner is called a \emph{non-key corner}. If the
maximum string is unique, then the key corner is uniquely determined. In
particular, if $\mathcal C(t)$ is asymmetric, then the lexicographically maximum
string is unique, and hence the key corner and the corresponding string direction are uniquely determined. In Figure~\ref{fig:leading}, the lexicographically maximum scan string associated with the key corner $A$ is  $\mathcal{STR}^{AB}(t)$=((1,0), (1,0), (0,0), (0,0), (0,0), (0,0), (1,0), (1,0), (0,0), (0,0), (4,0), (0,0), (0,0), (0,0), (1,0), (0,0), (0,0), (0,0), (0,0), (3,4), (0,0), (0,0), (0,0), (1,0), (2,3), (1,0), (0,0), (0,0), (0,0), (1,0), (0,0), (0,0), (0,0), (0,0), (0,0), (0,0), (0,0), (0,0), (0,0), (0,0), (0,1), (0,0), (0,0), (0,0), (1,0), (0,0), (0,0), (0,0), (0,0), (0,0), (0,0), (1,0), (0,0), (0,0), (0,0), (1,0), (0,0), (0,0), (0,0), (0,2), (0,0), (1,0), (0,0), (1,0), (0,0), (0,0), (0,2), (1,0), (0,0), (0,2), (0,0), (1,0), (0,0), (0,0), (0,0), (1,0), (0,0), (0,0)).

We define the leading corner of the parking-node set in an analogous way, but
using only parking capacities. For every node $v\in V$, define
$
\pi(v)=\kappa(v).
$
Thus, $\pi(v)=0$ if $v$ is not a parking node, and $\pi(v)$ is the capacity of
$v$ otherwise. For each corner $Q$ of $\mathcal{MER}_{\mathcal P}$, a string
$\rho_Q$ is constructed by scanning $\mathcal{MER}_{\mathcal P}$ and recording
$\pi(v)$ for every visited node $v$.

A corner $Q_{\mathcal P}$ of $\mathcal{MER}_{\mathcal P}$ is called a
\emph{leading corner} if
\[
\rho_{Q_{\mathcal P}}
=
\max\{\rho_Q: Q \text{ is a corner of } \mathcal{MER}_{\mathcal P}\}.
\]
If the maximum string is unique, then the parking-node set admits a unique leading corner (see Figure~\ref{fig:leading}). Otherwise, the parking-node set is symmetric with respect to the corresponding rectangle-string representation.

\subsection{Configuration View}

The configuration view is defined using the key corner and the string direction
introduced above. Suppose that the configuration $\mathcal C(t)$ has a unique
key corner $Q^*$ of $\mathcal{MER}_{\mathcal C}(t)$, and let the corresponding
string direction be fixed by the string $\mathcal{STR}^{Q^*}(t)$.

Starting from $Q^*$ and scanning $\mathcal{MER}_{\mathcal C}(t)$ in this string
direction, the nodes of $\mathcal{MER}_{\mathcal C}(t)$ are visited in a
canonical order. For a node $v\in\mathcal{MER}_{\mathcal C}(t)$, let
$
\operatorname{ind}_{Q^*}(v)
$
denote the position of $v$ in this scan order. The \emph{configuration view} of
$v$ is defined as
$
\operatorname{View}_{\mathcal C(t)}(v)
=
\bigl(\operatorname{ind}_{Q^*}(v),\Theta_t(v)\bigr).
$
Equivalently,
$
\operatorname{View}_{\mathcal C(t)}(v)
=
\bigl(\operatorname{ind}_{Q^*}(v),(\lambda_t(v),\kappa(v))\bigr).
$

Thus, the configuration view of a node records two pieces of information: its
position in the canonical scan order and its status pair, which encodes both the number of robots located at that node and the parking capacity of that node. Whenever the key corner and the corresponding string direction are unique, the
configuration view induces a deterministic ordering of the nodes of
$\mathcal{MER}_{\mathcal C}(t)$. Consequently, occupied robot nodes can be
ordered according to their positions in the scan order. If more than one robot
occupies the same node, then these robots have the same configuration view and
are treated as a multiplicity at that node. Similarly, if the parking-node set $\mathcal P$ has a unique leading corner
$Q_{\mathcal P}$, then the scan of $\mathcal{MER}_{\mathcal P}$ starting from
$Q_{\mathcal P}$ induces a deterministic ordering of the parking nodes. For two
parking nodes $p_i,p_j\in\mathcal P$, we write
$
p_i \prec_{\mathcal P} p_j
$, if $p_i$ appears before $p_j$ in this scan order.

\subsection{The Surplus Parking Gathering Problem}
We assume that $
n>\sum_{i=1}^{m}\kappa_i,
$ and define $
s=n-\sum_{i=1}^{m}\kappa_i
$ to be the number of surplus robots. Assume that $s \geq2$, since these $s$ robots are required for the gathering process. Let $
\mathcal{R}(v,t)=\{\,r\in\mathcal{R}\mid r(t)=v\,\}$. The objective of the $\mathcal {SPG}$ problem (\(\mathcal{SPG}\)) is to transform any initial configuration \(\mathcal C(t_0)\) into a configuration \(\mathcal C(t)\), for some finite time \(t>0\), such that the following conditions hold:

\begin{enumerate}
    \item Every parking node \(p_i\) is saturated at time \(t_0\), that is,
$|\mathcal R(p_i,t)| = \kappa_i,
    \forall i=1,2,\ldots,m,
    $
    where \(\mathcal R(p_i,t)\) denotes the set of robots occupying \(p_i\) and $\kappa_i$ denote the capacity of the parking node $p_i$.

    \item There exists a node \(\mathscr G \in V \setminus \mathcal P\), called the
    \emph{gathering node}, such that all surplus robots occupy \(\mathscr G\), i.e., at time $t$,
    $
    |\mathcal R(\mathscr G, t)| = s.
    $
    No capacity constraint is imposed on the gathering node \(\mathscr G\).

    \item Any robot activated after time \(t\) computes a null move and therefore remains at its current location.
\end{enumerate}

The gathering node \(\mathscr G\) is not specified a priori and must be determined autonomously by the robots during the execution of the algorithm.

A configuration satisfying Conditions~(1)--(3) is called a \emph{surplus parking gathering configuration}. The $\mathcal {SPG}$ problem is solved if, starting from any initial configuration, the robots reach a surplus parking gathering configuration in finite time.

% \begin{lemma} \label{chap4-lemma1}
% Let $\mathcal A$ be any algorithm that solves the parking problem in infinite grids. If there exists an execution of $\mathcal A$ such that the configuration $C(t)$ contains a robot multiplicity at a node that is not a parking node, then $\mathcal A$ cannot solve the parking problem.  
% \end{lemma}

\begin{figure}[htbp]
    \centering
    \includegraphics[width=0.6\textwidth]{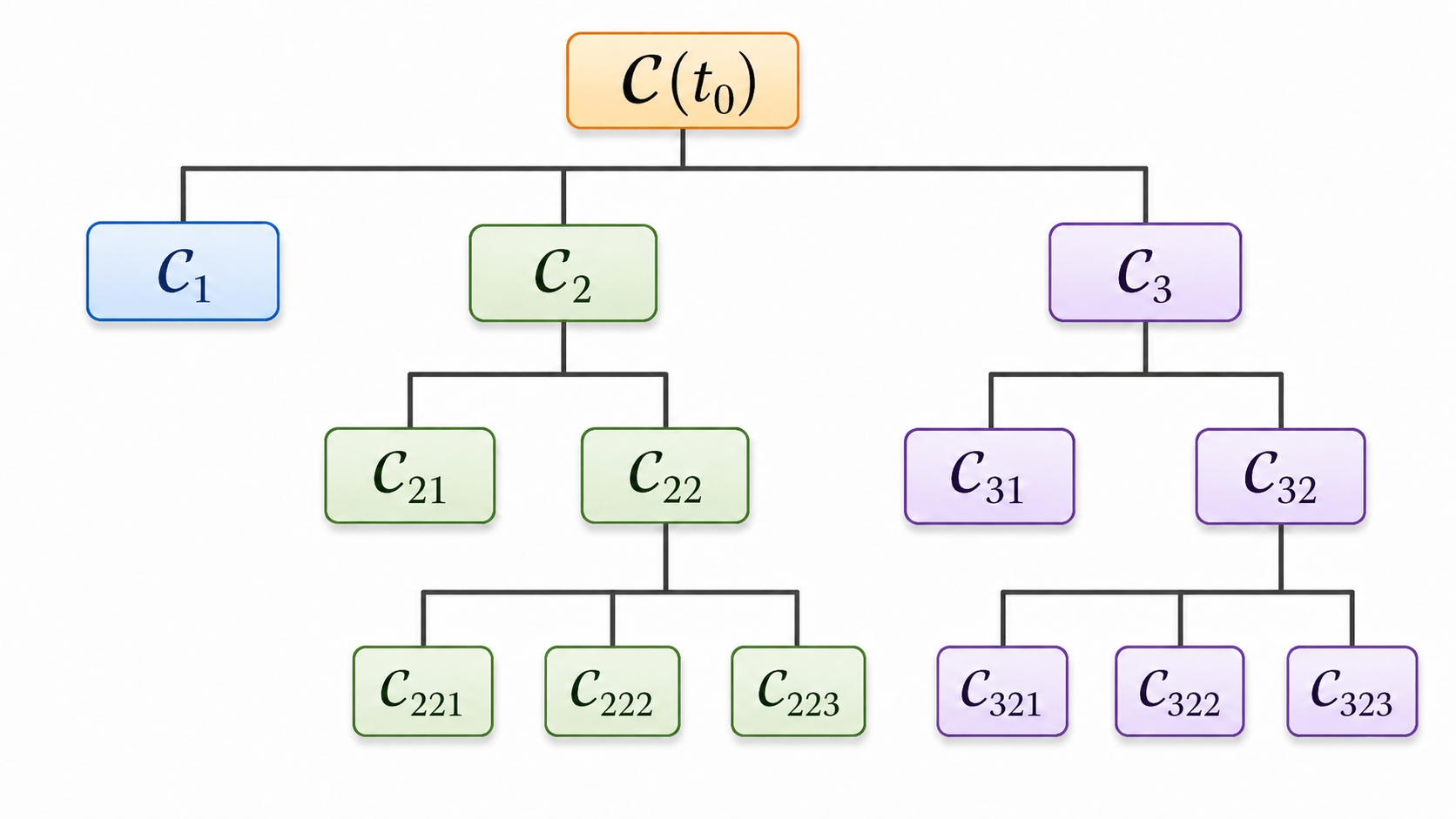}
    \caption{Partition of the configuration space based on the initial configuration $\mathcal C(t_0)$.}
    \label{Treee}
\end{figure}

\subsection{Partitioning of the Initial Configuration $\mathcal C(t_0)$}

The symmetry of the parking-node configuration plays a fundamental role in determining the behavior of the robots. Since anonymous and oblivious robots executing the same deterministic algorithm cannot distinguish symmetric situations, different symmetry classes require different algorithmic strategies. Accordingly, we partition the initial configurations into several classes based on the symmetry of the parking-node configuration and the overall robot configuration. (see Figure~\ref{Treee})

\paragraph{$\mathcal C_1$: Asymmetric parking-node configuration.}
The parking nodes are asymmetric. Consequently, the overall configuration
$\mathcal C(t_0)$ is also asymmetric. An example is shown in Figure \ref{fig:leading}.

\begin{figure}[htbp]
    \centering
    \includegraphics[width=0.4\textwidth]{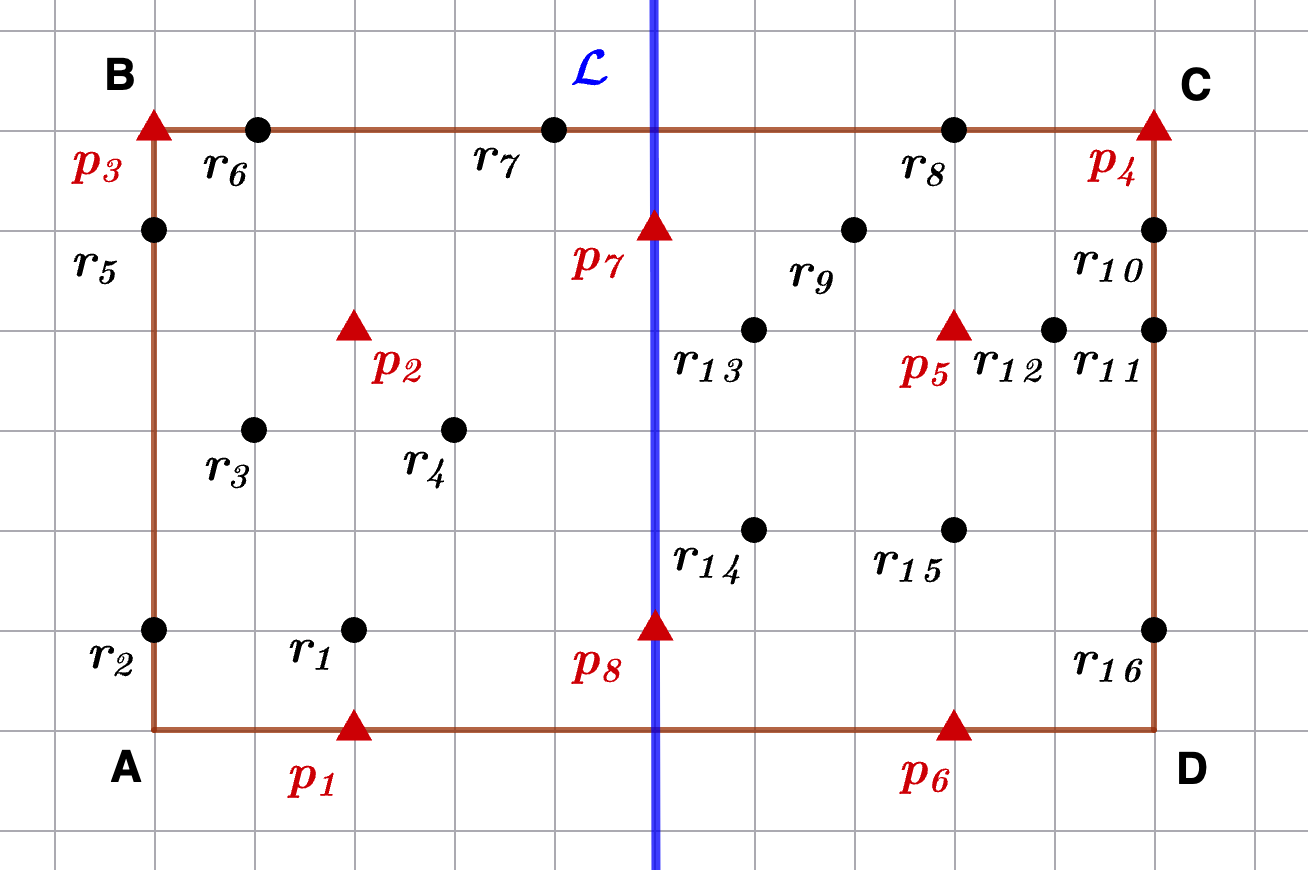}
    \caption{An example of a $\mathcal C_{21}$ configuration, where the parking-node set is symmetric with respect to line $\mathcal L$, and the robot configuration is asymmetric.}
    \label{fig:c21}
\end{figure}

\paragraph{$\mathcal C_2$: Parking nodes with a unique line of symmetry.}
The parking nodes are symmetric with respect to a unique line of symmetry $\mathcal L$. This class is divided according to whether the overall configuration $\mathcal C(t_0)$ is symmetric or asymmetric.

\begin{itemize}

     \item $\mathcal C_{21}$: If $\mathcal C(t_0)$ is asymmetric, then the configuration belongs to $\mathcal C_{21}$. (see Figure~ \ref{fig:c21})
     
    \item $\mathcal C_{22}$: If $\mathcal C(t_0)$ is symmetric with respect to $\mathcal L$, then the configuration belongs to $\mathcal C_{22}$. The class $\mathcal C_{22}$ is further partitioned as follows:

\begin{figure}[htbp]
    \centering

    \begin{subfigure}[t]{0.3\textwidth}
        \centering
        \includegraphics[width=\textwidth]{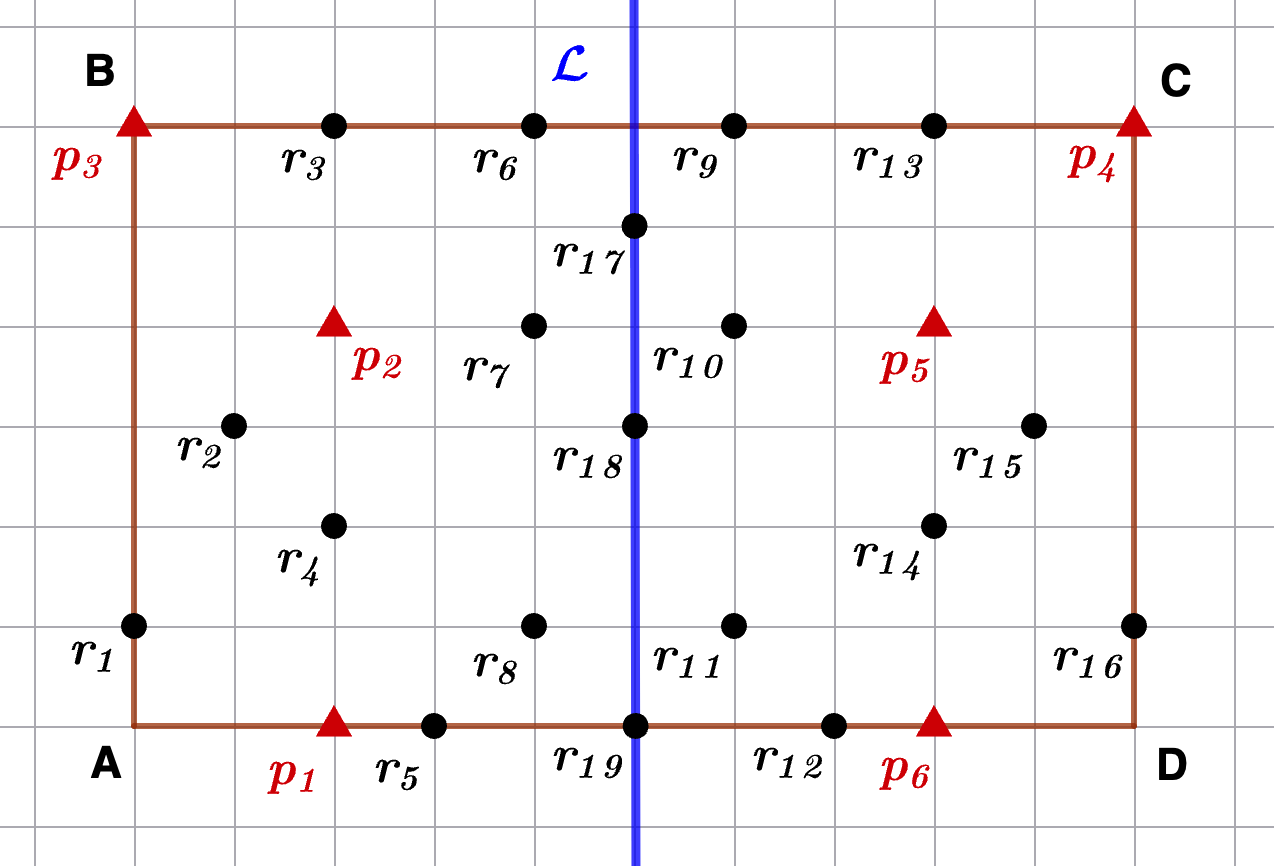}
        \caption{$\mathcal C_{221}$: robots on $\mathcal L$.}
        \label{fig:c221}
    \end{subfigure}
    \hspace{0.03\textwidth}
    \begin{subfigure}[t]{0.3\textwidth}
        \centering
        \includegraphics[width=\textwidth]{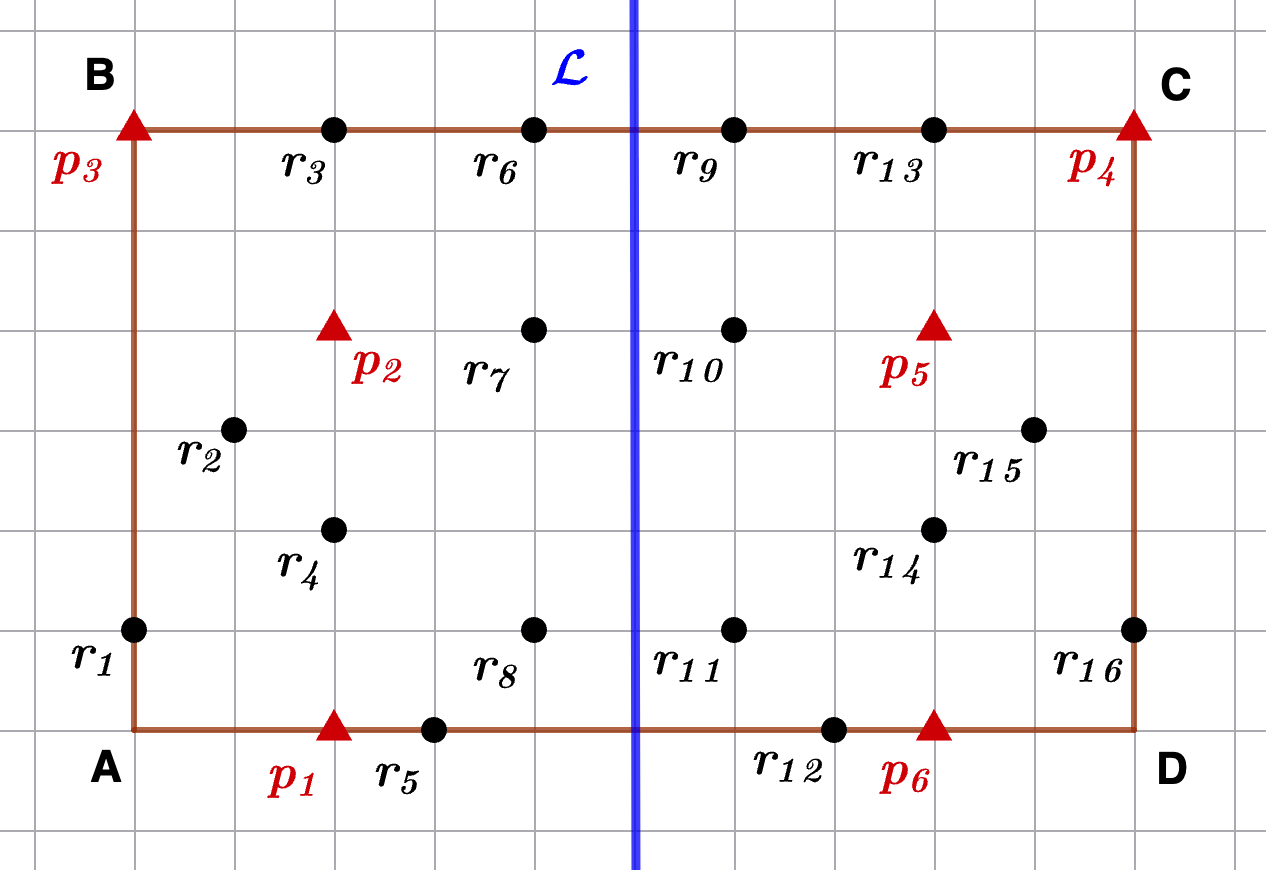}
        \caption{$\mathcal C_{222}$: empty $\mathcal L$.}
        \label{fig:c222}
    \end{subfigure}
    \hspace{0.03\textwidth}
    \begin{subfigure}[t]{0.3\textwidth}
        \centering
        \includegraphics[width=\textwidth]{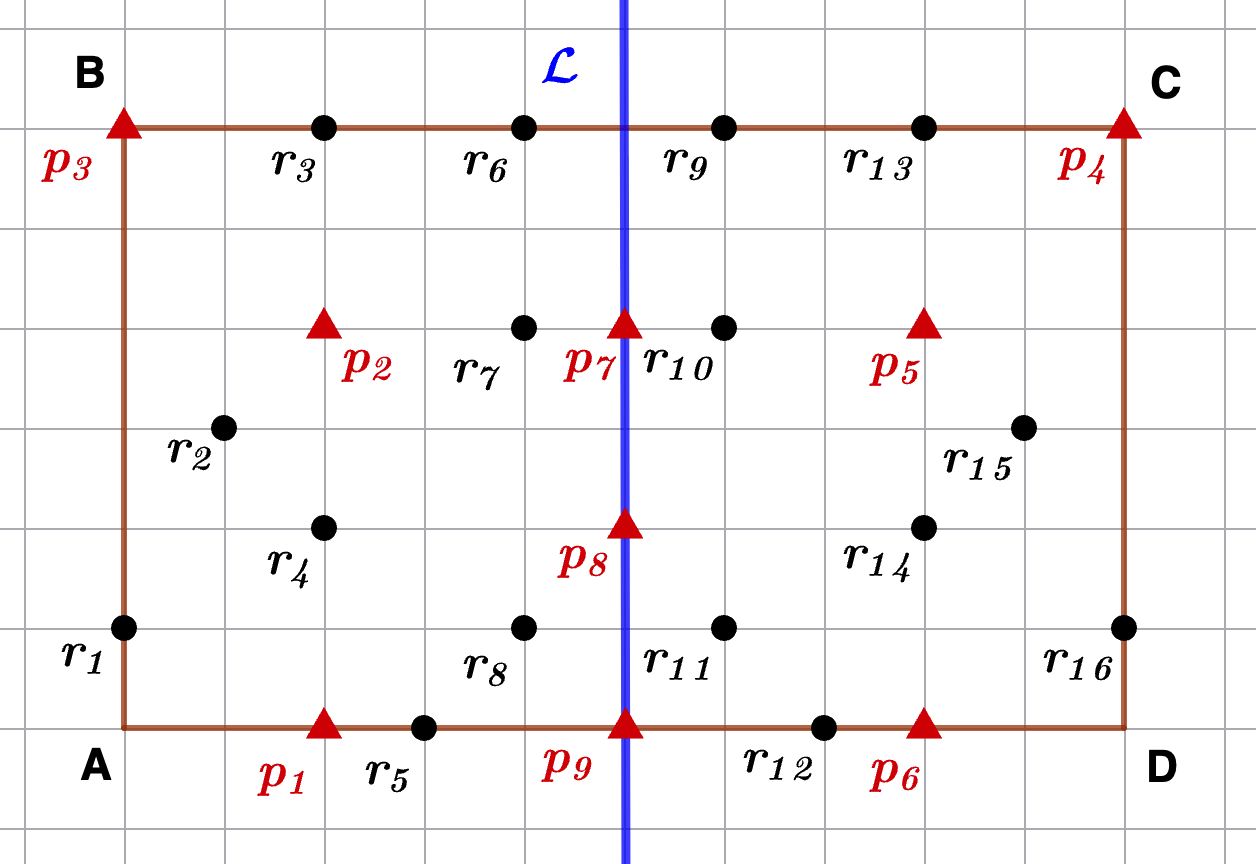}
        \caption{$\mathcal C_{223}$: parking nodes on $\mathcal L$.}
        \label{fig:c223}
    \end{subfigure}

    \caption{Representative configurations for the subcases of $\mathcal C_{22}$.}
    \label{fig:sym1}
\end{figure}

    \begin{itemize}
        \item $\mathcal C_{221}$: At least one robot occupies a position on $\mathcal L$. (see Figure~ \ref{fig:sym1}(a))
         \item $\mathcal C_{222}$: No robot position and no parking node lie on the line of symmetry
$\mathcal L$. (see Figure~ \ref{fig:sym1}(b))
         \item $\mathcal C_{223}$: No robot occupies a position on $\mathcal L$, whereas at least one
parking node lies on $\mathcal L$. (see Figure \ref{fig:sym1}(c))
    \end{itemize}
\end{itemize}

       \begin{figure}[htbp]
    \centering
    \includegraphics[width=0.35\textwidth]{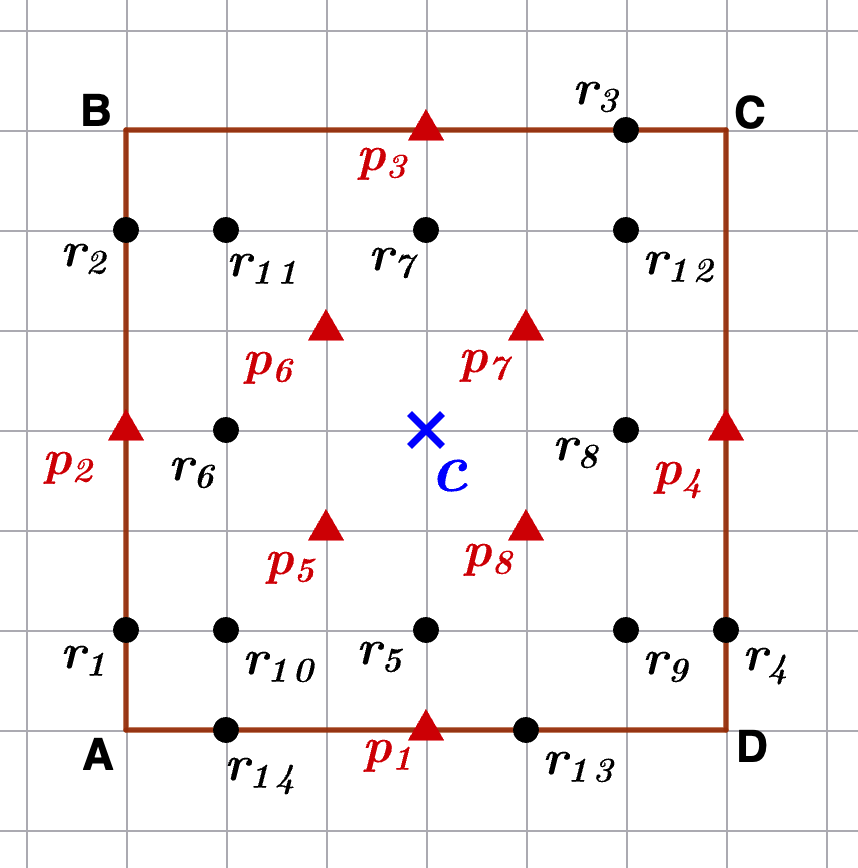}
    \caption{ An example of a $\mathcal C_{31}$ configuration, where the parking-node set is rotationally symmetric, and the robot configuration is asymmetric.}
    \label{fig:c31}
\end{figure}

\paragraph{$\mathcal C_3$: Rotationally symmetric parking-node configuration.}
The parking-node configuration admits rotational symmetry with center $c$. This class is further partitioned according to whether the overall configuration $\mathcal C(t_0)$ is symmetric or asymmetric.

\begin{itemize}
    \item $\mathcal C_{31}$: If $\mathcal C(t_0)$ is asymmetric, then the configuration belongs to $\mathcal C_{31}$. (see Figure~ \ref{fig:c31})

    \item $\mathcal C_{32}$: If $\mathcal C(t_0)$ is rotationally symmetric with respect to $c$, then the configuration belongs to $\mathcal C_{32}$. The class $\mathcal C_{32}$ is further partitioned as follows:

    \begin{itemize}
        \item $\mathcal C_{321}$: Exactly one robot occupies the center of symmetry $c$. (see Figure~ \ref{fig:sym2}(a))

        \item $\mathcal C_{322}$: Neither a robot nor a parking node occupies the center of symmetry $c$. (see Figure~ \ref{fig:sym2}(b))

        \item $\mathcal C_{323}$: A parking node occupies the center of symmetry $c$, but no robot occupies $c$. (see Figure~ \ref{fig:sym2}(c))
    \end{itemize}
\end{itemize}

   \begin{figure}[htbp]
    \centering

    \begin{subfigure}[b]{0.3\textwidth}
        \centering
        \includegraphics[width=\textwidth]{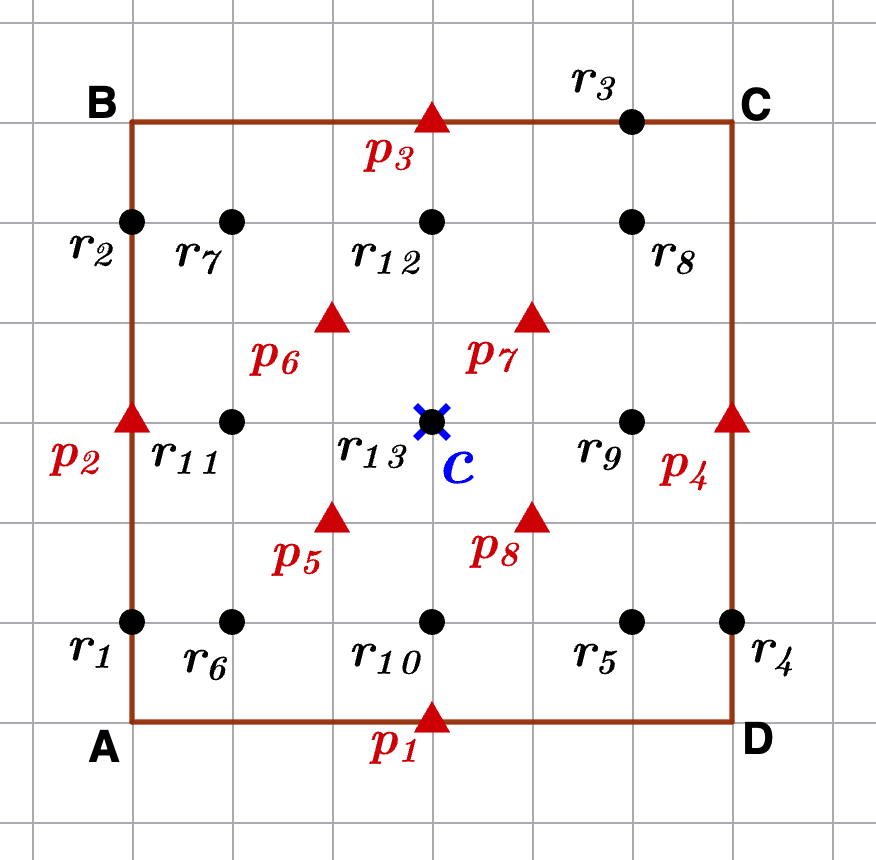}
        \caption{$\mathcal C_{321}$: robot at $c$.}
        \label{fig:c321}
    \end{subfigure}
    \hspace{0.03\textwidth}
    \begin{subfigure}[b]{0.3\textwidth}
        \centering
        \includegraphics[width=\textwidth]{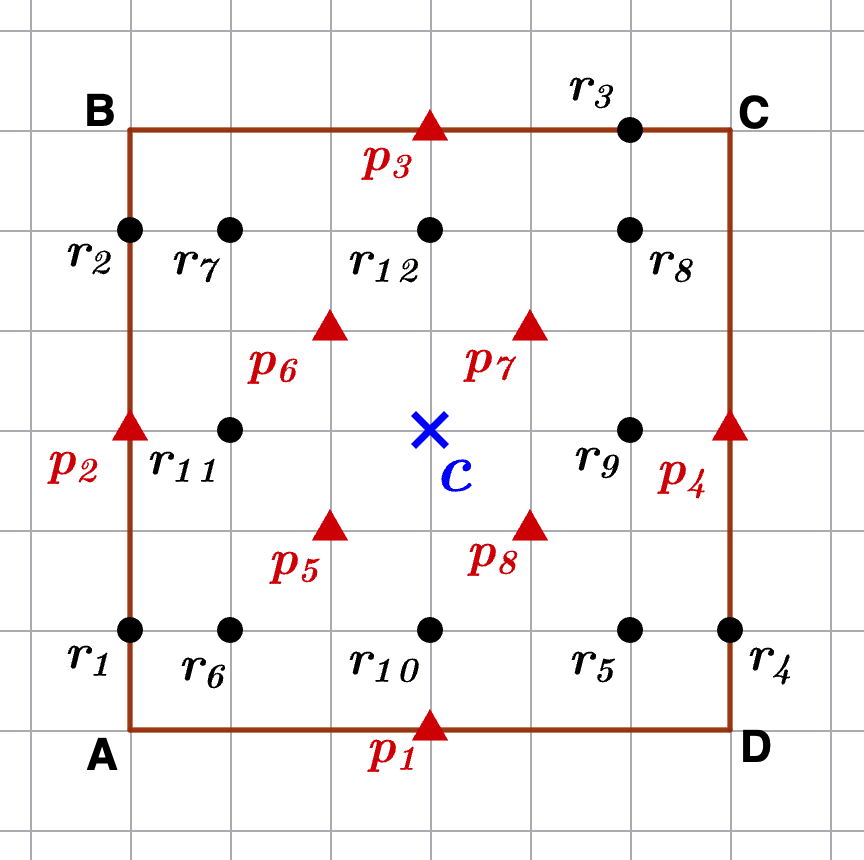}
        \caption{$\mathcal C_{322}$: empty $c$.}
        \label{fig:c322}
    \end{subfigure}
    \hspace{0.03\textwidth}
    \begin{subfigure}[b]{0.3\textwidth}
        \centering
        \includegraphics[width=\textwidth]{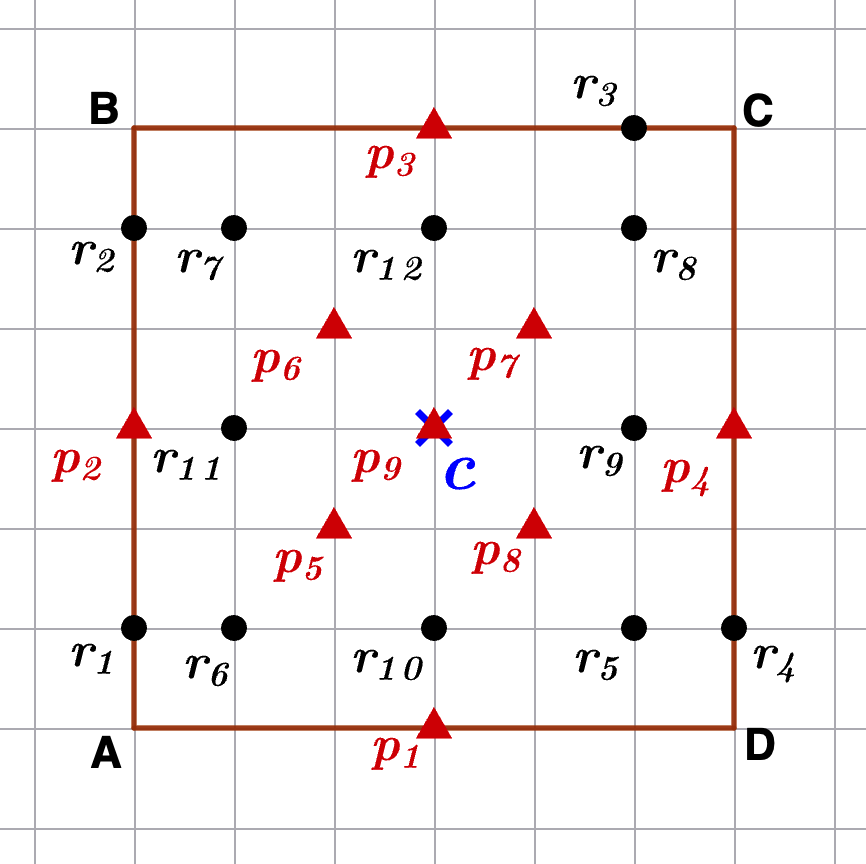}
        \caption{$\mathcal C_{323}$: parking node at $c$.}
        \label{fig:c323}
    \end{subfigure}

    \caption{Representative configurations for the subcases of $\mathcal C_{32}$.}
    \label{fig:sym2}
\end{figure}
   The capacities of the parking nodes are part of the input and are known to all robots. This information is required for the robots to compute the configuration view correctly. Without this information, robots may be unable to distinguish between different configuration classes. The notation frequently used in the description of the model, the algorithm, and the correctness analysis is summarized in Table~\ref{tab:notation}.

\begin{longtable}{p{0.22\linewidth}p{0.70\linewidth}}
\toprule
\textbf{Notation} & \textbf{Description} \\
\midrule
\endfirsthead

\toprule
\textbf{Notation} & \textbf{Description} \\
\midrule
\endhead

\midrule
\multicolumn{2}{r}{\textit{Continued on the next page}}\\
\endfoot

\bottomrule\\
\caption{Summary of notation used in the paper.}
\label{tab:notation}\\
\endlastfoot

$G=(V,E)$ & Infinite square grid graph, where $V$ is the set of grid nodes and $E$ is the set of grid edges. \\

$\mathscr P_\infty=(\mathbb Z,E')$ & Infinite path graph on the integer set $\mathbb Z$, used to define the square grid as $\mathscr P_\infty\times \mathscr P_\infty$. \\

$\mathcal R=\{r_1,r_2,\ldots,r_n\}$ & Set of $n$ mobile robots. \\

$r_i(t)$ & Node occupied by robot $r_i$ at time $t$. \\

$\mathcal R(t)$ & Multiset of robot positions at time $t$. \\

$\mathcal P=\{p_1,p_2,\ldots,p_m\}$ & Finite set of designated parking nodes. \\

$\kappa(p)$ & Prescribed capacity of parking node $p\in\mathcal P$. \\

$\lambda_t(v)$ & Number of robots located at node $v$ at time $t$. \\

$\Theta_t(v)$ & Status pair of node $v$ at time $t$, defined by $\Theta_t(v)=(\lambda_t(v),\kappa(v))$. \\

$\mathcal C(t)$ & Configuration of the system at time $t$, $\mathcal C(t)=(\mathcal R(t),\mathcal P,\kappa)$. \\

$d_{\mathcal M}(u,v)$ & Manhattan distance between two grid nodes $u,v\in V$. \\

$\mathcal{MER}_{\mathcal C}(t)$ & Minimum grid-aligned rectangle containing all occupied robot nodes and all parking nodes. \\

$\mathcal{MER}_{\mathcal P}$ & Minimum grid-aligned rectangle containing only the parking-node set $\mathcal P$. \\

$\operatorname{Aut}(\mathcal C(t))$ & Set of all automorphisms of the configuration $\mathcal C(t)$. \\

$\mathcal L$ & Unique line of reflectional symmetry, when it exists. \\

$c$ & Center of rotational symmetry, when it exists. \\

$\mu$ & Reflection map with respect to $\mathcal L$. \\

$\rho$ & Rotational map about the center $c$. \\

$q$ & Order of rotational symmetry. \\

$\operatorname{Orbit}_x$ & Rotational orbit of an entity $x$. \\

$\operatorname{View}_{\mathcal C(t)}(v)$ & Configuration view of node $v$ with respect to the key corner and scan direction. \\

% $\operatorname{ind}_{Q^*}(v)$ & Position of node $v$ in the canonical scan order starting from the key corner $Q^*$. \\

$\mathcal{STR}^Q(t)$ & String representation associated with corner $Q$. \\

$Q^*$ & Key corner of $\mathcal{MER}_{\mathcal C}(t)$. \\

$Q_{\mathcal P}$ & Leading corner of $\mathcal{MER}_{\mathcal P}$. \\

$\rho_Q$ & Parking string associated with corner $Q$ of $\mathcal{MER}_{\mathcal P}$. \\

$\pi(v)$ & Parking-capacity indicator used for parking strings, defined by $\pi(v)=\kappa(v)$. \\

$s$ & Number of surplus robots, $s=n-\sum_{i=1}^{m}\kappa(p_i)$. \\

$\mathcal R_s(t)$ & Set of surplus robots at time $t$. \\

$\mathcal R_u(t)$ & Set of unsaturated robots at time $t$. \\

$\mathcal {P^S}$ & Set of saturated parking nodes. \\

$\mathcal {P^U}$ & Set of unsaturated parking nodes. \\

$\mathscr G$ & Gathering node for surplus robots. \\

$\mathbf N_{\mathcal L}$ & Multiplicity or gathering node created on the formation line in reflectional cases. \\

$\mathbf N_c$ & Multiplicity or gathering node associated with the center of rotational symmetry. \\

$\Delta_i$ & Distance value used for selecting surplus robots. \\

% $\Delta_{\min}$ & Minimum value among the corresponding $\Delta_i$ values. \\

% $D_i$ & Distance value used for ordering robots or robot orbits. \\

% $D_{\min}$ & Minimum distance value among candidate robots or robot orbits. \\

% $d_{\max}$ & Maximum distance value used to select a farthest parking-node pair or orbit. \\

\end{longtable}

\FloatBarrier

\section{Impossibility Results}
\label{sec:impossibility}

In this section, we identify the configurations for which the surplus parking
gathering problem cannot be solved under the considered model. The
$\mathcal{SPG}$ problem consists of two requirements: saturating all parking
nodes according to their prescribed capacities and gathering all surplus robots
at a uniquely identifiable node. Hence, if either exact parking or surplus
gathering is impossible, then $\mathcal{SPG}$ is also impossible. Our impossibility results are based on known impossibility arguments for
grid-based mobile robots. In particular, we use the parking impossibility
results of Chakraborty and Mukhopadhyaya~\cite{chakraborty2025parking}, the
symmetry-based gathering impossibility of D'Angelo et
al.~\cite{d2016gathering}, and the partitive-configuration argument for
gathering over meeting nodes by Bhagat et al.~\cite{bhagat2022gathering}.
These results are adapted as necessary conditions for solving $\mathcal{SPG}$.
Accordingly, we separate the impossibility results into two parts: those arising
from the parking requirement and those arising from the surplus gathering
requirement.

\subsection{Impossibility of the Parking Requirement}

We first state some impossibility results that arise from the parking requirement of $\mathcal {SPG}$. Since $\mathcal {SPG}$ requires every parking node to be saturated exactly according to its prescribed capacity, any violation of the parking structure makes the entire problem unsolvable. The following results are based on the impossibility arguments of Chakraborty and Mukhopadhyaya~\cite{chakraborty2025parking}.

\begin{theorem}[Chakraborty and Mukhopadhyaya~\cite{chakraborty2025parking}]
\label{thm:non-designated-multiplicity}
Let $\mathcal{A}$ be a deterministic algorithm for $\mathcal{SPG}$ on an infinite grid. If, during some execution of $\mathcal{A}$, a multiplicity node is created at a node other than a parking node or the eventual gathering node, then $\mathcal{A}$ cannot guarantee a correct solution to $\mathcal{SPG}$ under \textsc{async} scheduler.
\end{theorem}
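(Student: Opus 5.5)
The plan is an indistinguishability argument in the style of Chakraborty and Mukhopadhyaya~\cite{chakraborty2025parking}. It exploits the fact that robots are anonymous and oblivious. Once several robots share a node, they have identical views and execute identical deterministic computations. In \textsc{async} the adversary can also exploit outdated snapshots.

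Suppose an execution of $\mathcal A$ reaches a time $t_1$ at which a node $v\notin\mathcal P$ holds $\lambda_{t_1}(v)=k\ge 2$, and $v$ is not the eventual gathering node $\mathscr G$. Then all $k$ robots at $v$ must eventually leave $v$, or at least some of them must, and in the final configuration $v$ holds either $0$ or all $s$ robots.

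The first step splits into two cases. In the first case $k\ge 2$ robots must all eventually depart $v$ to distinct destinations, for instance some to parking nodes and some to $\mathscr G$, or the robots must be split across several unsaturated parking nodes. Since the robots at $v$ share the same snapshot, the adversary activates them together, as \textsc{async} includes the synchronous behaviour. Every co-located robot then computes the same destination, so the group can never split. Then I argue that some requirement fails. If the group is sent together to a parking node $p$ with residual capacity smaller than $k$, that node becomes oversaturated, and capacities can never be exceeded. If the group is sent to $\mathscr G$, then $\mathscr G$ effectively becomes $v$'s successor, which means no robot of $v$ could ever be needed for parking.

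The second step handles the remaining subcase, where the whole multiplicity is meant to be part of the surplus. Here I use \textsc{async} staleness. The adversary lets exactly one robot of $v$ complete Look and then delays it. Meanwhile the others proceed, so $v$'s multiplicity changes, or other robots reposition. When the delayed robot later moves on its stale view, it is separated from its group. This creates a new configuration that is again either an undesired multiplicity or a split that the algorithm cannot distinguish from a legitimate phase, so the process repeats without convergence. Alternatively, I apply the obliviousness argument: the configuration with the multiplicity at $v$ is indistinguishable from one in which $v$ is the legitimate gathering site. A correct algorithm must then either terminate with a null move, which is wrong because $v\ne\mathscr G$, or move the group, which by the first step cannot be split.

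The main obstacle is making this dichotomy rigorous. The claim says \emph{cannot guarantee}, so it suffices to exhibit one adversarial schedule, but I must check that the symmetric co-activation of co-located robots is always available to an \textsc{async} adversary, and that the capacity counting really forces an oversaturation or an incorrect termination. My first attempt will be a counting argument. The total residual parking demand, together with the surplus $s$, fixes how many robots must eventually reach each target. Since the group at $v$ is indivisible, some target receives the wrong number of robots unless $k$ divides the demand exactly. In that exact-divisibility case I will fall back on the staleness schedule above.
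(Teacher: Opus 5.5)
Your first step is sound: an adversary activates the $k$ robots at $v$ in lockstep, with identical local frames (which it may choose), so that they always compute the same destination and can never separate. This is in fact the whole of the paper's argument. The gap is in turning this into a contradiction. Indivisibility only hurts if the group can never legitimately finish on a single node, and you never show this. The group can be carried as a block to $\mathscr G$ when its robots are surplus, or to a parking node whose residual capacity is at least $k$.

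Your own treatment of the $\mathscr G$ case ends without a contradiction, and neither fallback supplies one. The staleness schedule splits the group, which removes the very obstruction your first step relies on. Nothing forces the resulting configuration to be ambiguous or to recur, so ``repeats without convergence'' is unsupported. The indistinguishability fallback fails as well: $\mathscr G$ is simply wherever the surplus ends up, so moving the unsplit group on to $\mathscr G$ is a correct continuation, and your dichotomy (null move or split) omits it. The divisibility criterion in your counting argument is also the wrong invariant. What matters is whether one destination can absorb all $k$ robots, not whether $k$ divides a demand; for example, $k=2$ with $s=3$ is harmless.

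This case cannot be repaired, because a multiplicity at a non-designated node is harmless in itself. For instance, take a $\mathcal C_1$ instance after all parking nodes are saturated. Two surplus robots may meet at a node $z\notin\mathcal P\cup\{\mathscr G\}$ outside $\mathcal{MER}_{\mathcal P}$ and then walk to $\mathscr G$ along a path fixed by $\mathcal P$. Each robot's move depends only on its own position relative to the static parking set, so every schedule ends in a correct configuration. The paper's proof has the same hole: it asserts that the multiplicity ``is preserved throughout the execution'' at the non-designated node, overlooking that the group can move as a unit.

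What the lockstep argument does prove is a restricted statement. Suppose the $k$ robots at $v$ cannot all end on one node of a final configuration, e.g.\ $k>\max\{s,\max_{p\in\mathcal P}\kappa(p)\}$. Under the lockstep schedule they stay co-located forever. But every occupied node of a final configuration holds exactly $\kappa(p)$ or $s$ robots, so $\mathcal A$ fails. You need a hypothesis of this kind for the argument to go through.
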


The proof follows from the indistinguishability argument of Chakraborty and Mukhopadhyaya~\cite{chakraborty2025parking}. Assume, for contradiction, that there exists a deterministic algorithm $\mathcal{A}$ that correctly solves $\mathcal{SPG}$. Suppose that, during some execution of $\mathcal{A}$, a multiplicity is created at a node that is neither a parking node nor the eventual gathering node. Since impossibility under the semi-synchronous (\textsc{ssync}) model also implies impossibility under the asynchronous (\textsc{async}) model, it is sufficient to establish the result under the \textsc{ssync} model. Consider an adversarial \textsc{ssync} scheduler. All robots occupying the multiplicity node have identical views and execute the same deterministic algorithm. Therefore, whenever the scheduler activates these robots simultaneously, they compute identical destinations and perform identical movements. Consequently, the multiplicity at the non-designated node is preserved throughout the execution. However, in any correct execution of $\mathcal{SPG}$, multiplicity can occur only at parking nodes and the eventual gathering node. Hence, the persistent multiplicity at the non-designated node prevents the algorithm from reaching a valid final configuration. This contradicts the assumption that $\mathcal{A}$ correctly solves $\mathcal{SPG}$. Therefore, no such deterministic algorithm exists.

\begin{theorem}[Chakraborty and Mukhopadhyaya~\cite{chakraborty2025parking}]
\label{thm:strong-md-necessary}
Without strong multiplicity detection, the surplus parking gathering problem is unsolvable under \textsc{async} scheduler.
\end{theorem}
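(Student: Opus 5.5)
We prove the statement by an indistinguishability argument: without strong multiplicity detection, two configurations that differ only in the number of robots at some node, one correct and one not, are identical in every robot's snapshot. Suppose, for contradiction, that a deterministic algorithm $\mathcal A$ solves $\mathcal{SPG}$ when robots have only weak multiplicity detection (or none). Then every snapshot records only whether a node is empty, singly occupied, or multiply occupied. Impossibility under \textsc{ssync} (indeed even \textsc{fsync}) implies impossibility under \textsc{async}, so we may work with a convenient scheduler.

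\textbf{Step 1: build two indistinguishable configurations.}
\begin{itemize}
\item Take a parking node $p$ with $\kappa(p)\ge 2$, or, failing that, use the gathering node with $s\ge 2$.
\item Let $\mathcal C^\ast$ be a surplus parking gathering configuration in which $\mathcal A$ terminates, so every robot computes a null move (Condition~(3)).
\item Obtain $\mathcal C'$ by moving one robot from $p$ to the gathering node $\mathscr G$ (or from $\mathscr G$ onto a parking node of capacity at least $2$). We need $p$ to remain a multiplicity, which requires $\kappa(p)\ge 3$. In the alternative case we need $\mathscr G$ to remain a multiplicity.
\item To be fully general, use instead $n'=n$ robots arranged so that two nodes, each with multiplicity at least $2$, have their counts shifted by one. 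Under weak detection, $\mathcal C^\ast$ and $\mathcal C'$ then produce identical snapshots for every robot.
\end{itemize}
If capacities are all $1$, we use the gathering node together with a parking node reading ``occupied'' in both configurations. Alternatively, we pick a different $n$ so that $\mathcal C'$ has $s+1$ robots at $\mathscr G$ and one parking node short, and compare the same occupancy pattern across instances whose inputs differ only in capacities.

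\textbf{Step 2: conclude.} Since robots are oblivious and deterministic, every robot in $\mathcal C'$ computes exactly what it would in $\mathcal C^\ast$, namely a null move. Therefore $\mathcal C'$ is terminal although it violates Condition~(1). It remains to show that $\mathcal C'$ is actually reached in some execution. Two options:
\begin{itemize}
\item Take $\mathcal C'$ itself, modified to have distinct initial positions, and argue that the adversary can drive $\mathcal A$ into it.
\item More simply, argue that $\mathcal A$ must halt correctly from every reachable configuration. The robots cannot certify saturation, since a snapshot from a correct final configuration coincides with one from an incorrect configuration, so either $\mathcal A$ never terminates on $\mathcal C^\ast$ or it terminates wrongly on $\mathcal C'$.
\end{itemize}

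\textbf{Main obstacle.} The hardest step is ensuring that $\mathcal C'$ genuinely arises during an execution rather than just as an abstract configuration, because initial positions are distinct. We would first try to exploit the \textsc{async} adversary. With $\mathcal A$ correct, the execution from some initial configuration must pass through a configuration one step before $\mathcal C^\ast$, in which a last robot moves onto an already multiply occupied node, either $p$ or $\mathscr G$. At that moment a stale-snapshot robot, or a robot whose move is delayed, can leave a count off by one while the weak-detection view already looks final. Under the paper's own framing, namely that global strong multiplicity detection is needed to decide termination, this reduction to the termination-detection argument of Chakraborty and Mukhopadhyaya~\cite{chakraborty2025parking} should suffice.
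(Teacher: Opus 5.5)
\paragraph{Your approach and the gap.} Your route is the paper's route. The paper's justification says only that, without strong multiplicity detection, robots cannot read the exact occupancy of a parking node or of the gathering node, and so ``cannot verify'' that the terminal configuration has been reached. As a proof, however, your proposal has a genuine gap exactly at the step you call the main obstacle, and the paper's sketch does not close it either.

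The weak-detection indistinguishability of $\mathcal C^\ast$ and $\mathcal C'$ only shows that $\mathcal C'$ is also a fixed point of $\mathcal A$. This contradicts correctness only if some admissible execution actually reaches $\mathcal C'$, or some other incorrect configuration that is weakly identical to a terminal one. Your three suggested repairs do not establish this:
\begin{itemize}
\item Your option ``either $\mathcal A$ never terminates on $\mathcal C^\ast$ or terminates wrongly on $\mathcal C'$'' is a false dichotomy. $\mathcal A$ may halt correctly on $\mathcal C^\ast$ and simply never produce $\mathcal C'$.
\item The option ``the adversary can drive $\mathcal A$ into $\mathcal C'$'' is asserted, not argued. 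The adversary chooses only the initial configuration, the timing and the local frames; the destinations are chosen by $\mathcal A$.
\item The \textsc{async} staleness idea does not help. Moves are instantaneous single hops, so occupancies are always fixed by current positions. Just before the last arrival at $p$ or $\mathscr G$, the arriving robot still sits on another node, which weak detection sees. The only exception is a final hop between two required multiplicity nodes, and a correct algorithm can simply avoid that.
\end{itemize}

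Nothing in your argument excludes an algorithm that never creates the ambiguity. For example, an algorithm could record, in the position of a singly occupied ``counter'' robot, how many robots have already entered $p$, and let $\mathscr G$ become a multiplicity only after $p$ is full. In its executions $\mathscr G$ is never a multiplicity while $p$ holds $\kappa(p)-1$ robots, so $\mathcal C'$ is never reached. The missing idea is an argument that, for some fixed instance, every correct algorithm is forced into such an ambiguous configuration.

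\paragraph{Step~1 should be narrowed.} Do not aim to make Step~1 ``fully general'': the impossibility can only hold for suitably chosen instances. The weak-detection case is the one to handle, since impossibility there implies impossibility with no detection.
\begin{itemize}
\item Under weak detection, if all capacities are $1$, the only multiplicity of $\mathcal C^\ast$ is at $\mathscr G$. Moving a robot onto or off a capacity-$1$ parking node turns a single into a multiple or empty node, which is visible. With the same $n$ there is then no weakly identical incorrect twin at all.
\item Your shift between two multiplicity nodes needs a required count of at least $3$ at the source node. It fails, for instance, for $\kappa(p)=s=2$.
\item Comparing instances with different capacities gives no indistinguishability, because capacities are part of the input.
\end{itemize}
So fix one witness instance with two required multiplicity nodes, one of them of required count at least $3$, and then supply the reachability argument for that instance.
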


The proof follows from the necessity of strong multiplicity detection established by Chakraborty and Mukhopadhyaya~\cite{chakraborty2025parking}. In $\mathcal{SPG}$, each parking node $p_i$ is associated with a prescribed capacity $\kappa_i$, and the robots must determine whether exactly $\kappa_i$ robots occupy $p_i$. Without strong multiplicity detection, the robots cannot determine the exact multiplicity at a parking node and, consequently, cannot verify whether the required capacity has been achieved.

Furthermore, the proposed algorithm relies on multiplicity nodes to identify different execution phases and to determine the gathering of the surplus robots. Without strong multiplicity detection, the robots cannot distinguish multiplicity nodes from ordinary occupied nodes or determine the exact number of robots gathered at a node. Consequently, they cannot verify whether the desired terminal configuration has been reached. Therefore, no deterministic algorithm can correctly solve $\mathcal{SPG}$ without strong multiplicity detection.

\begin{theorem}[Chakraborty and Mukhopadhyaya~\cite{chakraborty2025parking}]
\label{thm:c223-impossible}
Let $\mathcal C(t_0)\in\mathcal C_{223}$, and let $p_l$ be a parking node located on the line of symmetry $\mathcal L$. If the capacity of $p_l$ is an odd integer, then $\mathcal {SPG}$ is unsolvable under \textsc{async} scheduler.
\end{theorem}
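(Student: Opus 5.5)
We argue by contradiction, using a symmetry-preservation argument under an adversarial \textsc{ssync} scheduler; as in the proof of Theorem~\ref{thm:non-designated-multiplicity}, impossibility under \textsc{ssync} implies impossibility under \textsc{async}. Let $\mathcal C(t_0)\in\mathcal C_{223}$ be symmetric with respect to $\mathcal L$, with no robot on $\mathcal L$ and a parking node $p_l\in\mathcal L$ with $\kappa(p_l)$ odd. Suppose a deterministic algorithm $\mathcal A$ solves $\mathcal{SPG}$ from $\mathcal C(t_0)$.

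The key invariant to establish is this: at every round of the following schedule, the configuration is symmetric with respect to $\mathcal L$, and every node on $\mathcal L$ holds an even number of robots. The adversary pairs each robot $r$ with its mirror image $\mu(r)$ and always activates such pairs together. At $t_0$ the pairing is well defined, since no robot lies on $\mathcal L$.

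Because the robots are disoriented and lack common chirality, the adversary may fix the local coordinate systems of $r$ and $\mu(r)$ to be mirror images of each other. Their snapshots are then identical, and by determinism their computed moves are mirror images. Hence the configuration after each round is again symmetric with respect to $\mathcal L$. If $r$ steps onto a node of $\mathcal L$, then $\mu(r)$ steps onto the same node, since mirror moves from mirror positions land on a common fixed point. Thus robots reach $\mathcal L$ only in pairs. A robot already on $\mathcal L$ arrived together with its partner, and the two continue to make mirror moves.

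The main obstacle is this last point. Once $r$ and $\mu(r)$ coincide on $\mathcal L$, they have identical snapshots but are required to make mirror-image moves. I would handle it by choosing their local frames to be reflections of each other across $\mathcal L$, which is permitted since $\mathcal L$ is a grid symmetry line through the node. The moves are then mirror images, so the pair either stays together on $\mathcal L$ or splits symmetrically off it. In both cases the number of robots on each node of $\mathcal L$ stays even, and the invariant is maintained. This scheduler is fair, because every pair is activated every round.

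In any terminal configuration, $p_l$ must hold exactly $\kappa(p_l)$ robots. This number is odd, contradicting the invariant. Therefore $\mathcal A$ never terminates correctly under this schedule, so $\mathcal{SPG}$ is unsolvable.

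As a remark, the parity argument also covers the gathering node. If it lay on $\mathcal L$, it would need an even number $s$ of robots, but that case is not needed here: the odd capacity of $p_l$ alone yields the contradiction.
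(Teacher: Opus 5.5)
Your proposal is correct and follows the same route as the paper: an adversarial synchronous scheduler activates each robot together with its mirror image, so robots can reach $p_l$ only in symmetric pairs, the occupancy of $p_l$ stays even, and this contradicts its odd capacity. You go further than the paper in two places: you explicitly give $r$ and $\mu(r)$ mirrored local frames, which is allowed because the robots lack chirality, and you check that partners who meet on $\mathcal L$ either stay together or split symmetrically. The paper's proof leaves both points implicit.
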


The proof follows from Lemma~3 of Chakraborty and Mukhopadhyaya~\cite{chakraborty2025parking}. Assume, for contradiction, that there exists a deterministic algorithm $\mathcal A$ that correctly solves $\mathcal{SPG}$. Since $\mathcal C(t_0)\in\mathcal C_{223}$, the configuration is symmetric with respect to the line $\mathcal L$, no robot is initially located on $\mathcal L$, and at least one parking node lies on $\mathcal L$. Let $p_l$ be a parking node on $\mathcal L$ with odd capacity $\kappa(p_l)=2k+1$. Since impossibility under the semi-synchronous (\textsc{ssync}) model also implies impossibility under the asynchronous (\textsc{async}) model, it is sufficient to consider an adversarial \textsc{ssync} scheduler. The scheduler activates every robot simultaneously with its mirror image with respect to $\mathcal L$. Because each pair of symmetric robots has identical local views and executes the same deterministic algorithm, both robots compute identical destinations and perform identical movements. Consequently, robots can reach the parking node $p_l$ only in symmetric pairs. Hence, the number of robots occupying $p_l$ is always even. Since the capacity of $p_l$ is odd, namely $2k+1$, the parking node can never be saturated exactly. Therefore, the parking requirement of $\mathcal{SPG}$ cannot be satisfied, contradicting the assumption that $\mathcal A$ correctly solves $\mathcal{SPG}$. Hence, $\mathcal{SPG}$ is unsolvable for such configurations.

\begin{theorem}[Chakraborty and Mukhopadhyaya~\cite{chakraborty2025parking}]
\label{thm:c323-impossible}
Let $\mathcal C(t_0)\in\mathcal C_{323}$, and let $p_c$ be the parking node located at the center of rotational symmetry $c$. If the capacity of $p_c$ is not divisible by the order of the rotational symmetry, then $\mathcal {SPG}$ is unsolvable under \textsc{async} scheduler.
\end{theorem}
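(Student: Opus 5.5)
We argue by contradiction, following the pattern used for Theorem~\ref{thm:c223-impossible}. Suppose a deterministic algorithm $\mathcal A$ solves $\mathcal{SPG}$ from some $\mathcal C(t_0)\in\mathcal C_{323}$. Let $p_c$ sit at the center $c$, let $q\in\{2,4\}$ be the order of the rotational symmetry of $\mathcal C(t_0)$, and assume $q\nmid\kappa(p_c)$. An impossibility under \textsc{ssync} implies one under \textsc{async}, so we fix an adversarial \textsc{ssync} scheduler. In every round it activates whole orbits of the rotation group $\langle\rho\rangle$ simultaneously, and it activates all robots sharing a node together.

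The key invariant is that the configuration stays rotationally symmetric about $c$ with order at least $q$, and no robot ever lies at $c$ except in groups whose size is a multiple of $q$. We prove this by induction on rounds. Initially no robot occupies $c$, since the class is $\mathcal C_{323}$, and every occupied node lies in an orbit of size exactly $q$: a node other than $c$ is fixed by no non-identity rotation about a grid-node center. So the configuration is partitive on $V\setminus\{c\}$.

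For the inductive step, suppose the configuration is $\rho$-invariant. Then $\rho$ is an automorphism of $\mathcal C(t)$ that preserves $\Theta_t$. Since the robots are disoriented and lack chirality, a robot at $v$ and the robot at $\rho(v)$ have views that coincide up to the rotation. Hence the deterministic $\mathcal A$ sends them to $\rho$-images of each other's destinations. The robots activated in a round therefore form a union of orbits and move $\rho$-equivariantly, so the new configuration is again $\rho$-invariant. Each orbit of size $q$ that moves onto $c$ adds exactly $q$ robots there. Robots already at $c$ are activated together and have identical views, so they move as a block: they either all stay or all leave to positions forming full orbits (or remain together at $c$). Consequently $\lambda_t(c)\equiv 0\pmod q$ holds at every time $t$.

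Termination requires $\lambda_t(c)=\kappa(p_c)$, which is not divisible by $q$. This contradicts the invariant, so $\mathcal A$ fails.

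The main obstacle is making the equivariance step rigorous without chirality. We must argue that for a rotation $\rho$, the robot at $\rho(v)$ receives, in its own local frame, exactly the snapshot that the robot at $v$ receives. This holds because local frames are chosen adversarially; we let the adversary choose the frames as rotated copies of each other. We must also handle the case where $\mathcal A$ moves robots off $c$: those robots are one multiplicity and move together, possibly to a single neighbour. That would break symmetry only if the count at $c$ were not a multiple of $q$, which the invariant excludes. Moreover, if a multiple of $q$ robots moves as a block to a non-center node, that node lies in an orbit of size $q$ and so is not $\rho$-invariant. The adversary prevents this by activating only full orbits, so the block at $c$ moves simultaneously with nothing else. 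Its identical robots then go to one node $u$ with $u=\rho(u)$ forced by equivariance, which means $u=c$. Hence the robots at $c$ can never leave, and the count there remains a multiple of $q$. This strengthens the invariant and closes the argument.
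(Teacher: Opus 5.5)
Your overall strategy matches the paper's: an adversarial \textsc{ssync} scheduler keeps the execution rotationally symmetric, robots reach $c$ only in complete orbits, and so $\lambda_t(c)$ stays a multiple of $q$ and can never equal $\kappa(p_c)$. Your inductive invariant is more careful than the paper's proof, which never discusses robots leaving $c$. However, the step you use in your last paragraph to close the argument fails. You claim that the robots at $c$ go to a single node $u$, that equivariance forces $u=\rho(u)$, hence $u=c$, and so they can never leave.

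That reasoning is circular, because equivariance is what you are trying to prove. If the robots at $c$ really shared one frame, neither $\mathcal A$ nor orbit-wise activation would stop them from moving as a block to some $u\neq c$ and breaking the symmetry. Under the frame assignment you propose, the claim is also false. An orbit that has moved onto $c$ carries the frames $F,\rho F,\ldots,\rho^{q-1}F$. These robots see the $\rho$-invariant configuration identically in local coordinates, so they are sent to $u,\rho(u),\ldots,\rho^{q-1}(u)$ and may well leave $c$. In addition, robots at $c$ that came from different orbits need not have identical views.

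The argument is rescued by tracking frames in the invariant rather than pinning robots at $c$.
\begin{itemize}
\item At $t_0$, let $\sigma$ be the permutation of the robots sending the robot at $v$ to the robot at $\rho(v)$. Since no robot lies at $c$, every cycle of $\sigma$ has length exactly $q$.
\item Let the adversary give $\sigma(r)$ the frame obtained by rotating the frame of $r$ by $\rho$, and activate all robots in every round. This is a legal and fair \textsc{ssync} schedule.
\item Suppose $(\sigma r)(t)=\rho(r(t))$ for all $r$. Then the configuration is $\rho$-invariant, and $r$ and $\sigma(r)$ obtain identical local snapshots and compute the same local destination. Hence $(\sigma r)(t')=\rho(r(t'))$ after the round.
\end{itemize}
Consequently, at every time the set of robots at $c$ is $\sigma$-invariant, i.e., a union of $\sigma$-cycles. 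So $\lambda_t(c)\equiv 0\pmod q$, whether or not robots ever leave $c$. This is what your earlier remark that they ``all stay or all leave to positions forming full orbits'' was aiming at; drop the ``can never leave'' strengthening.
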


The proof follows from the rotational-symmetry impossibility result of Chakraborty and Mukhopadhyaya~\cite{chakraborty2025parking}. Assume, for contradiction, that there exists a deterministic algorithm $\mathcal A$ that correctly solves $\mathcal{SPG}$. Since $\mathcal C(t_0)\in\mathcal C_{323}$, the parking-node configuration is rotationally symmetric, and no robot is initially located at the center of symmetry $c$. Since impossibility under the semi-synchronous (\textsc{ssync}) model also implies impossibility under the asynchronous (\textsc{async}) model, it is sufficient to consider an adversarial \textsc{ssync} scheduler that preserves the rotational symmetry of the execution.

Because the robots belonging to the same rotational orbit have identical local views and execute the same deterministic algorithm, they compute identical destinations and perform identical movements. Consequently, robots can reach the parking node at the center only in complete rotational orbits. Hence, the number of robots that can occupy the parking node at $c$ must always be a multiple of the order of rotational symmetry.

If the prescribed capacity of $p_c$ is not divisible by the order of the rotational symmetry, then the parking node can never be saturated exactly. Therefore, the parking requirement of $\mathcal{SPG}$ cannot be satisfied, contradicting the assumption that $\mathcal A$ correctly solves $\mathcal{SPG}$. Hence, $\mathcal{SPG}$ is unsolvable for such configurations.

\subsection{Impossibility of the Surplus Gathering Requirement}

We now state the impossibility results that arise from the surplus gathering requirement of $\mathcal {SPG}$. Since $\mathcal {SPG}$ requires both the exact saturation of all parking nodes and the gathering of all surplus robots at a single node, the problem becomes unsolvable whenever the surplus gathering subproblem cannot be solved deterministically. The following result is based on the partitive-configuration argument for gathering over meeting nodes.

\begin{figure}[htbp]
    \centering

    \begin{subfigure}[t]{0.45\textwidth}
        \centering
        \includegraphics[width=\textwidth]{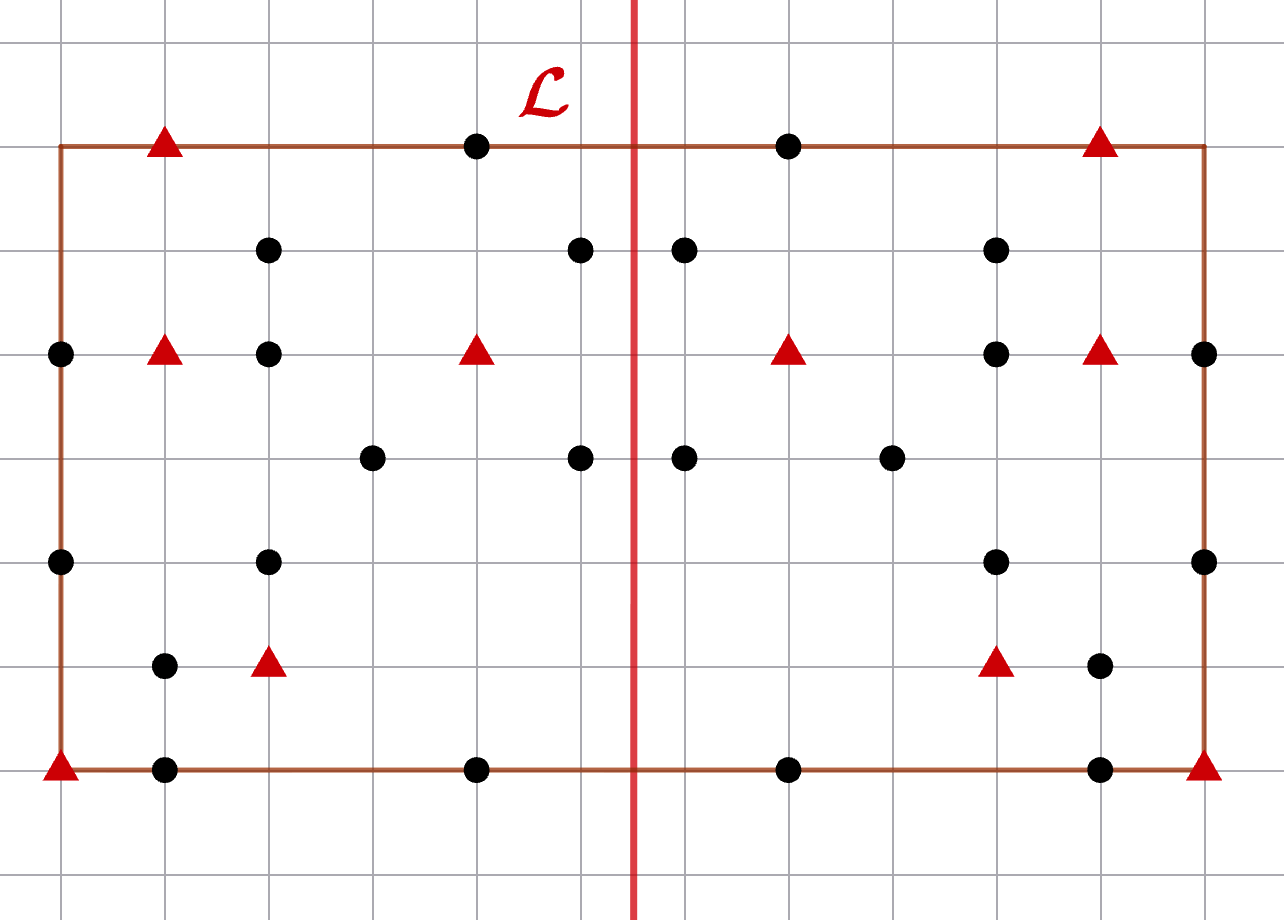}
        \caption{}
        
    \end{subfigure}
   \hspace{0.04\textwidth}%
    \begin{subfigure}[t]{0.32\textwidth}
        \centering
        \includegraphics[width=\textwidth]{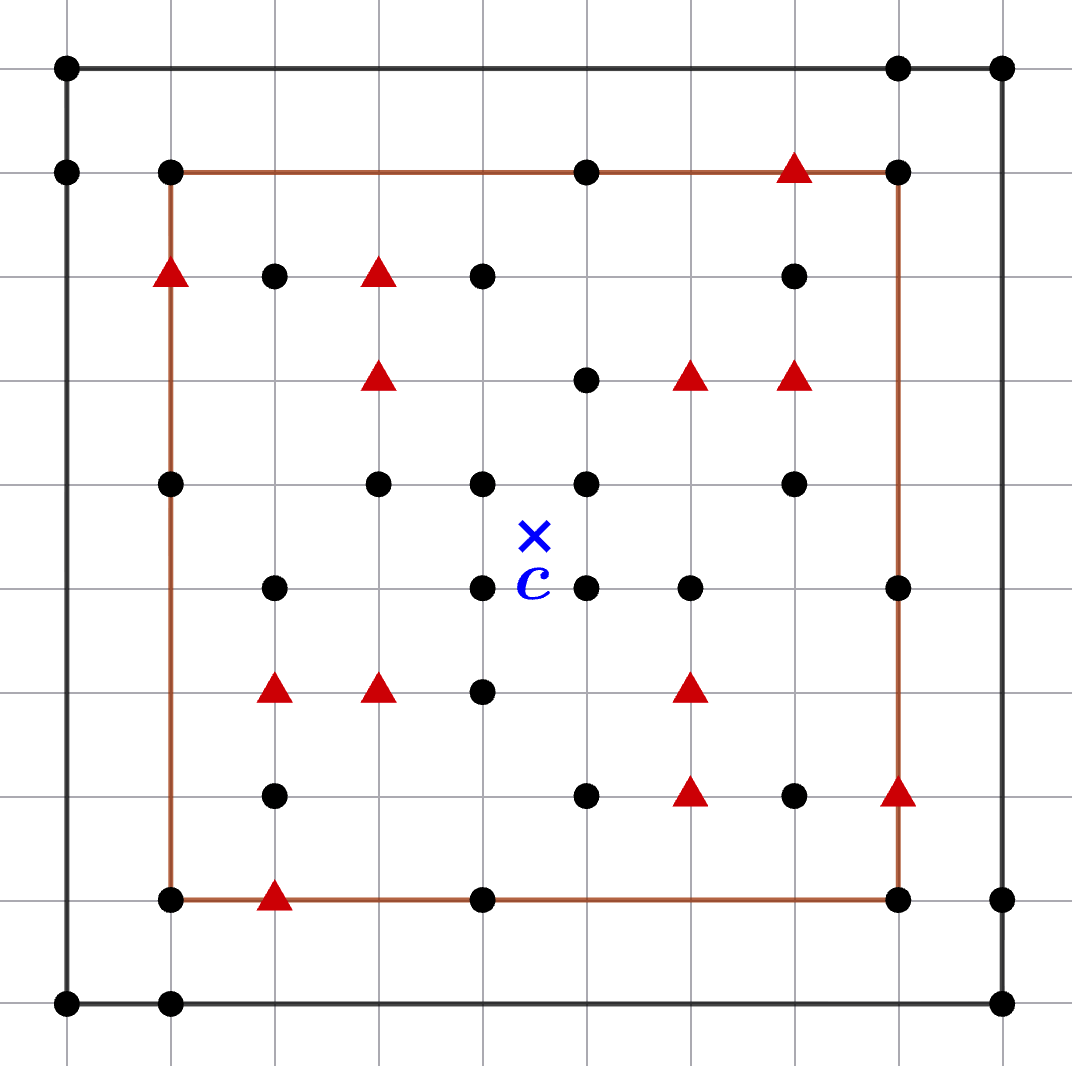}
        \caption{}
        
    \end{subfigure}

   \caption{Illustrative examples of partitive configurations: (a) with respect to the line of symmetry $\mathcal L$; (b) with respect to the rotational symmetry centered at $c$.}
    \label{partitive}
\end{figure}

\begin{theorem}[Bhagat et al.~\cite{bhagat2022gathering}]
\label{thm:partitive-gathering-impossible}
Let $\mathcal C(t_0)$ be a symmetric initial configuration, and let $V_0\subseteq V$ be a set of nodes fixed by an automorphism $\phi$ such that $V_0\cap \mathcal R(t_0)=\emptyset$. Suppose that $\phi$ is partitive on $V\setminus V_0$. If no admissible gathering node of $\mathcal {SPG}$ belongs to $V_0$, then no deterministic algorithm can guarantee the surplus gathering requirement of $\mathcal {SPG}$ under \textsc{async} scheduler.
\end{theorem}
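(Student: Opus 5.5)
The argument is a symmetry-preservation (indistinguishability) argument in the style of Bhagat et al.~\cite{bhagat2022gathering}. Impossibility under \textsc{ssync} implies impossibility under \textsc{async}, so it suffices to exhibit one adversarial semi-synchronous schedule that defeats any deterministic algorithm $\mathcal A$ claimed to solve $\mathcal{SPG}$.

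\textbf{The adversarial schedule.} Let $q>1$ be the order of $\langle\phi\rangle$. In every round, the scheduler activates a union of complete orbits of $\langle\phi\rangle$: whenever it activates a robot at node $v$, it also activates every robot at the nodes $\phi^j(v)$. It activates all robots of an orbit simultaneously, including every robot of a multiplicity node. Fairness is easy to keep, for instance by activating all robots in every round.

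\textbf{The invariant.} By induction on rounds I will prove that, for all $t$,
\[
\phi\in\operatorname{Aut}(\mathcal C(t)) \quad\text{and}\quad \lambda_t(v)=0 \ \text{ for all } v\in V_0 .
\]
The base case holds by hypothesis, since $\phi\in\operatorname{Aut}(\mathcal C(t_0))$ and $V_0\cap\mathcal R(t_0)=\emptyset$.

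For the inductive step, take robots at $v$ and $\phi(v)$. Their snapshots are images of each other under the grid automorphism $\phi$. The robots are disoriented and lack chirality, so their local frames can be chosen so that the snapshots coincide. Since the robots are deterministic, the destinations they compute are related by $\phi$. Hence if a robot at $v$ moves to a neighbour $w$, the corresponding robots move to $\phi^j(w)$, and the new configuration is again $\phi$-invariant. The capacities $\kappa$ are $\phi$-invariant by assumption, so the full status $\Theta$ stays invariant.

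\textbf{The delicate point.} I must show that no robot ever enters $V_0$. Suppose a robot at $v\notin V_0$ chooses to move to $w\in V_0$. Its $q$ images start at the distinct nodes $\phi^j(v)$, because $\phi$ is partitive on $V\setminus V_0$. They all move simultaneously to $\phi^j(w)=w$, creating an arrival of size divisible by $q$.

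This is the main obstacle, and it is exactly where the hypothesis on admissible gathering nodes is used. I would handle it in two subcases.
\begin{itemize}
\item If $w$ is not a parking node, the arrival creates a multiplicity at $w$. Since no admissible gathering node belongs to $V_0$, $w$ cannot be the eventual gathering node. Theorem~\ref{thm:non-designated-multiplicity} then gives a contradiction with the correctness of $\mathcal A$.
\item If $w$ is a parking node, I would not try to prevent entry. Instead I would weaken the invariant to \enquote{$V_0$ contains no robots that could serve as surplus gathering robots}. The key observation is that the surplus count, and the gathering node itself, must lie outside $V_0$.
\end{itemize}

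\textbf{Conclusion.} By the invariant, at every time each occupied node outside $V_0$ belongs to an orbit of $q\ge 2$ distinct occupied nodes with equal multiplicity. The gathering requirement asks for a single node $\mathscr G$ holding all $s$ surplus robots, with every other non-parking node empty. Any candidate $\mathscr G\notin V_0$ has a distinct image $\phi(\mathscr G)$ carrying the same $s\ge 2$ robots, giving $2s$ robots on non-parking nodes. This contradicts the count of $s$ surplus robots, so the terminal configuration is never reached.
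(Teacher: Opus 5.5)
Your overall route is the paper's. A symmetry-preserving adversarial execution keeps $\phi\in\operatorname{Aut}(\mathcal C(t))$, so any candidate $\mathscr G\notin V_0$ has a distinct image $\phi(\mathscr G)$ holding the same $s$ robots, while candidates in $V_0$ are excluded by hypothesis. Your closing count is a correct and more explicit version of the paper's one-line conclusion.

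\textbf{The gap is in how you maintain symmetry.} You add the clause $\lambda_t(v)=0$ for $v\in V_0$ to the invariant. The conclusion never uses it, and you cannot prove it. In the parking-node subcase you replace it by a ``weakened invariant'' that is not well defined: during the execution no robot carries a surplus or parking identity, and a robot on a parking node may leave it later. That invariant also says nothing about what you actually need, namely that $\phi$-invariance survives once robots sit on a node $w\in V_0$. Your inductive step is written for robots on distinct nodes $v\neq\phi(v)$. For robots co-located on a fixed node $w$, ``the snapshots are images of each other'' yields nothing unless those robots can be grouped into $q$-tuples with $\phi$-related frames. Without such a grouping, nothing prevents robots on $w$ from breaking the symmetry, exactly as the robot on $\mathcal L$ or at $c$ does in $\mathcal C_{221}$ and $\mathcal C_{321}$. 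So, as written, the invariant is unproved from the first moment a robot enters a parking node of $V_0$. The appeal to Theorem~\ref{thm:non-designated-multiplicity} in the other subcase is likewise unnecessary.

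\textbf{The missing idea is to track robots rather than nodes.} This is also where the hypothesis $V_0\cap\mathcal R(t_0)=\emptyset$ really enters.
\begin{itemize}
\item At $t_0$, let $\sigma(r)$ be the robot located at $\phi(r(t_0))$. This is well defined since $\phi\in\operatorname{Aut}(\mathcal C(t_0))$ and all robots start on distinct nodes outside $V_0$. Since $\phi$ is partitive on $V\setminus V_0$, $\sigma$ is a permutation of $\mathcal R$ all of whose cycles have length exactly $q$.
\item Choose one representative $r$ per cycle with an arbitrary frame $F_r$, and give $\sigma^j(r)$ the frame obtained by transporting $F_r$ through $\phi^j$.
\item Activate whole $\sigma$-cycles together, for instance all robots in every round.
\item Prove by induction that $\sigma(r)(t)=\phi(r(t))$ for every robot $r$ and every $t$. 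Because $\mathcal C(t)$ is $\phi$-invariant, $r$ and $\sigma(r)$ see the same local snapshot, hence compute the same local move, and the global images of these moves are related by $\phi$.
\end{itemize}
This invariant holds whether or not robots enter $V_0$: robots on a fixed node automatically form complete $\sigma$-cycles, so their number there is a multiple of $q$. It implies $\phi\in\operatorname{Aut}(\mathcal C(t))$ for all $t$ and makes your case analysis superfluous; your final paragraph then completes the proof. With this repair your argument is more complete than the paper's, which merely asserts that the scheduler ``can preserve the orbit structure''.
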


The argument follows from the partitive-symmetry impossibility result of Bhagat et al.~\cite{bhagat2022gathering}. Under a symmetric execution, robots belonging to the same orbit have identical views and execute the same deterministic algorithm. An adversarial scheduler can preserve the orbit structure throughout the execution. Hence, the robots cannot deterministically select a unique gathering node in $V\setminus V_0$. Since no admissible gathering node lies in $V_0$, the surplus gathering requirement cannot be guaranteed. Consequently, $\mathcal {SPG}$ is unsolvable for such configurations.

\begin{corollary}
\label{cor:line-gathering-impossible}
Let $\mathcal C(t_0)$ admit a unique line of symmetry $\mathcal L$. If $\mathcal L$ contains neither a robot nor an admissible gathering node, and the configuration is partitive with respect to $\mathcal L$, then $\mathcal {SPG}$ is unsolvable under \textsc{async} scheduler.
\end{corollary}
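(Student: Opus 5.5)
The plan is to obtain the corollary as a direct specialization of Theorem~\ref{thm:partitive-gathering-impossible}, taking $\phi=\mu_{\mathcal L}$, the reflection about the unique line of symmetry $\mathcal L$, and $V_0=V\cap\mathcal L$, the set of grid nodes lying on $\mathcal L$. When $\mathcal L$ passes through grid edges, $V_0=\emptyset$. The first step is to check that $\mu_{\mathcal L}$ belongs to $\operatorname{Aut}(\mathcal C(t_0))$. This holds because $\mathcal C(t_0)$ is symmetric with respect to $\mathcal L$, so the reflection preserves every status pair $\Theta_{t_0}(v)$. The reflection is an involution, so $\langle\mu_{\mathcal L}\rangle$ has order $q=2$.

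The second step is to verify the hypotheses of Theorem~\ref{thm:partitive-gathering-impossible} one at a time.
\begin{itemize}
\item The nodes of $V_0$ are exactly the fixed points of $\mu_{\mathcal L}$, since a point of the grid is fixed by a reflection if and only if it lies on its axis.
\item Every node of $V\setminus V_0$ is mapped to a distinct mirror node. Hence its orbit under $\langle\mu_{\mathcal L}\rangle$ has size exactly $2$, which is the partitive property on $V\setminus V_0$. This is precisely what the hypothesis ``partitive with respect to $\mathcal L$'' asserts, as in the discussion following the definition of partitive configurations.
\item The hypothesis that $\mathcal L$ contains no robot gives $V_0\cap\mathcal R(t_0)=\emptyset$.
\item The hypothesis that $\mathcal L$ contains no admissible gathering node gives that no admissible gathering node of $\mathcal{SPG}$ belongs to $V_0$.
\end{itemize}
Theorem~\ref{thm:partitive-gathering-impossible} then says that no deterministic algorithm can guarantee the surplus gathering requirement under the \textsc{async} scheduler.

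The final step is to pass from the gathering requirement to $\mathcal{SPG}$ itself. Condition~(2) of the $\mathcal{SPG}$ definition requires that all $s\ge 2$ surplus robots occupy a single node $\mathscr G$, so any algorithm solving $\mathcal{SPG}$ would in particular satisfy the surplus gathering requirement. This is a contradiction, so $\mathcal{SPG}$ is unsolvable.

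The main obstacle is interpretive rather than technical. One must make sure that ``admissible gathering node'' is read consistently with Theorem~\ref{thm:partitive-gathering-impossible}, so that nodes of $\mathcal L$ which are parking nodes (excluded since $\mathscr G\in V\setminus\mathcal P$) count as non-admissible. One must also make sure the reflection acts on all of $V$ rather than only on the entity set $X$ of Definition~\ref{def:reflection-map}. If a fully self-contained argument is wanted, I would also sketch the underlying symmetric-execution argument. Under an \textsc{ssync} adversary (which suffices, since \textsc{ssync} impossibility implies \textsc{async} impossibility), the scheduler activates each robot together with its mirror image. Mirror robots have mirror views and therefore make mirror moves, so symmetry with respect to $\mathcal L$ is preserved. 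Because no robot starts on $\mathcal L$ and $s\ge2$, surplus robots could only coincide at a fixed node of $\mu_{\mathcal L}$, that is, on $\mathcal L$, where no admissible gathering node exists.
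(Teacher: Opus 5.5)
Your proposal is correct and follows the paper's own proof. The paper likewise takes $V_0$ to be the set of grid nodes on $\mathcal L$, notes that the configuration stays partitive on $V\setminus V_0$ under a symmetric execution, and invokes Theorem~\ref{thm:partitive-gathering-impossible} to conclude that gathering could only happen on $\mathcal L$, which has no admissible gathering node. Your explicit hypothesis check and your reduction from $\mathcal{SPG}$ to its gathering requirement just spell out what the paper leaves implicit.

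One small fix to your optional self-contained sketch: under a symmetric execution, robots can coincide off $\mathcal L$ (in mirror pairs), so ``surplus robots could only coincide on $\mathcal L$'' is too strong. The correct point is that all $s$ surplus robots cannot sit at one node $\mathscr G\notin\mathcal L$, because $\mu_{\mathcal L}(\mathscr G)$ would then be a distinct non-parking node also holding $s$ robots.
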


Indeed, by taking $V_0$ as the set of nodes on $\mathcal L$, the configuration remains partitive on $V\setminus V_0$ under a symmetric execution. Therefore, gathering can be guaranteed only at a node on $\mathcal L$. Since $\mathcal L$ contains no admissible gathering node, the surplus gathering requirement of $\mathcal {SPG}$ cannot be satisfied (see Figure \ref{partitive} (a)).

\begin{corollary}
\label{cor:rotational-gathering-impossible}
Let $\mathcal C(t_0)$ admit rotational symmetry with center $c$, where $c$ is a grid node. If $c$ contains neither a robot nor an admissible gathering node, and the configuration is partitive with respect to $c$, then $\mathcal {SPG}$ is unsolvable under \textsc{async} scheduler.
\end{corollary}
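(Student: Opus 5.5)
The plan is to obtain Corollary~\ref{cor:rotational-gathering-impossible} as a direct specialization of Theorem~\ref{thm:partitive-gathering-impossible}, in the same way Corollary~\ref{cor:line-gathering-impossible} was derived. We take the automorphism to be $\phi=\rho_c$, the rotation of order $q\in\{2,4\}$ about the grid node $c$, and choose $V_0=\{c\}$.

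The first step is to check the hypotheses of Theorem~\ref{thm:partitive-gathering-impossible}.
\begin{itemize}
\item $V_0$ is fixed by $\phi$, since $c$ is the center of rotation.
\item $V_0\cap\mathcal R(t_0)=\emptyset$, because by assumption no robot occupies $c$.
\item $\phi$ is partitive on $V\setminus V_0$. A rotation of the grid by $90^\circ$ or $180^\circ$ about a grid node fixes only that node, and no other node is fixed by any non-identity power $\rho_c^j$. Hence every node $v\neq c$ has an orbit $\mathcal O_{\langle\rho_c\rangle}(v)$ of size exactly $q$. This is also exactly what the hypothesis ``the configuration is partitive with respect to $c$'' states.
\item No admissible gathering node lies in $V_0$, because $c$ is by assumption not an admissible gathering node. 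For instance, $c$ may be a parking node, and gathering nodes must lie in $V\setminus\mathcal P$.
\end{itemize}
With all hypotheses satisfied, Theorem~\ref{thm:partitive-gathering-impossible} gives that the surplus gathering requirement cannot be guaranteed, so $\mathcal{SPG}$ is unsolvable.

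For a more self-contained account, we would also sketch the underlying adversary argument under \textsc{ssync}, which suffices for \textsc{async}. The scheduler always activates entire orbits of $\rho_c$ simultaneously. Robots in the same orbit have views that coincide up to the rotation, so they compute rotated destinations. The configuration therefore stays $\rho_c$-symmetric at every round.

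In a symmetric configuration, a node $\mathscr G\neq c$ holding the $s$ surplus robots forces its images $\rho_c^j(\mathscr G)$, which are distinct, to hold $s$ robots as well. We need to rule out that these images are parking nodes absorbing the copies. This follows because the same argument applies to any single distinguished node: no unique gathering node can be selected off $c$. Since $c$ itself is excluded, the final configuration is never reached.

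The main obstacle is precisely this last point: reconciling the symmetric multiplicity at $\mathscr G$ with possibly symmetric parking nodes at its images. We would handle it by noting that $\mathscr G$ must be uniquely identifiable as the gathering node, which under $\rho_c$-invariance forces $\rho_c(\mathscr G)=\mathscr G$. Since $c$ is the only fixed node, this gives $\mathscr G=c$, which is a contradiction.
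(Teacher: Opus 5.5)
Your proposal is correct and follows the paper's own route. The paper likewise specializes Theorem~\ref{thm:partitive-gathering-impossible} to $V_0=\{c\}$, arguing that a symmetry-preserving execution keeps the robots in rotational orbits, so no gathering node other than the inadmissible $c$ can be singled out; your explicit hypothesis check (a $90^\circ$ or $180^\circ$ rotation about a grid node fixes only $c$) is slightly more careful. In your optional self-contained sketch the ``obstacle'' is not real and needs no appeal to unique identifiability: automorphisms preserve $\kappa$, so each $\rho_c^j(\mathscr G)$ is itself a non-parking node holding $s$ robots, while exact saturation leaves only $s$ robots off $\mathcal P$ in total, which forces $\rho_c(\mathscr G)=\mathscr G$ and hence $\mathscr G=c$.
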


The proof is analogous to Corollary~\ref{cor:line-gathering-impossible}. By taking $V_0=\{c\}$, every symmetry-preserving execution keeps the robots partitioned into rotational orbits. Thus, a unique gathering node outside $c$ cannot be deterministically selected. Since $c$ is not an admissible gathering node, the surplus gathering requirement cannot be achieved (see Figure \ref{partitive} (b)).

\section{Algorithm \textbf{\textsc{spg()}}} \label{sec:algorithm}
\subsection{High-Level Idea}
Depending on the type of the initial configuration, the algorithm \textbf{\textsc{spg()}} executes different sequences of phases. For configurations $\mathcal C_{21}$ and $\mathcal C_{31}$, it proceeds through four phases: \textit{Line Formation}, \textit{Multiplicity Creation}, \textit{Saturation}, and \textit{Gathering}. For the remaining configurations, the algorithm works in only two phases: \textit{Saturation} and \textit{Gathering}. During the \textit{Line Formation} phase, the robots arrange themselves on a line on distinct nodes in a manner that implicitly identifies two groups: parking robots, which will eventually occupy the parking nodes, and surplus robots, which will ultimately gather at a common node. The \textit{Multiplicity Creation} phase follows the line formation phase and creates a unique multiplicity node on the line. Due to the presence of this unique multiplicity, the robots can detect the transition between the line formation phase and the saturation phase. In the \textit{Saturation} phase, the robots are assigned to the parking nodes according to the ordering defined for the corresponding configuration. Each robot moves sequentially in a collision-free manner, ensuring that every parking node reaches its prescribed capacity while maintaining the invariants established in the earlier phases. After all parking nodes have been saturated, every robot that does not occupy a parking node is defined as a surplus robot. Finally, in the \textit{Gathering} phase, the remaining robots are treated as surplus robots and move to the designated gathering node. An exception occurs for configurations $\mathcal C_{222}$ and $\mathcal C_{322}$, where the \textit{Gathering} phase is executed before the \textit{Saturation} phase.

\paragraph{\textbf{Pending Move Analysis:}}
In this section, we analyze the effect of a \emph{pending move} on the configuration 
$\mathcal C(t)$ during the execution of the algorithm \textbf{\textsc{spg()}}. 
If $\mathcal C(t)$ is asymmetric, then the ordering of robots remains unaffected even in 
the presence of a pending move. Hence, pending moves require special attention only for 
symmetric configurations.

Pending moves play two different roles depending on the configuration class. In 
$\mathcal C_{221}$ and $\mathcal C_{321}$, the algorithm intentionally breaks the symmetry 
by moving a uniquely selected robot away from the line of symmetry or from the center of 
rotation. In this case, \textit{AllowtoMove()} is used as a safety test before initiating 
the symmetry-breaking move. In contrast, in $\mathcal C_{222}$ and $\mathcal C_{322}$, 
the algorithm does not break the symmetry. Instead, the selected robots move in symmetric 
pairs or rotational orbits, and pending moves are used only to detect incomplete symmetric 
movements.

\begin{definition}
For any fixed time $t\geq 0$, let
$
\beta_t=\{v\in V\mid v \text{ is occupied by a robot at time }t\}.
$
Define the occupancy function $\alpha_t:V\rightarrow\{0,1\}$ by
$
\alpha_t(v)=\mathbf{1}_{\beta_t}(v),
$
where $\mathbf{1}_{\beta_t}$ denotes the indicator function on the set $\beta_t$.
\end{definition}

\begin{figure}[htbp]
    \centering

    \begin{subfigure}[t]{0.50\textwidth}
        \centering
        \includegraphics[width=\textwidth]{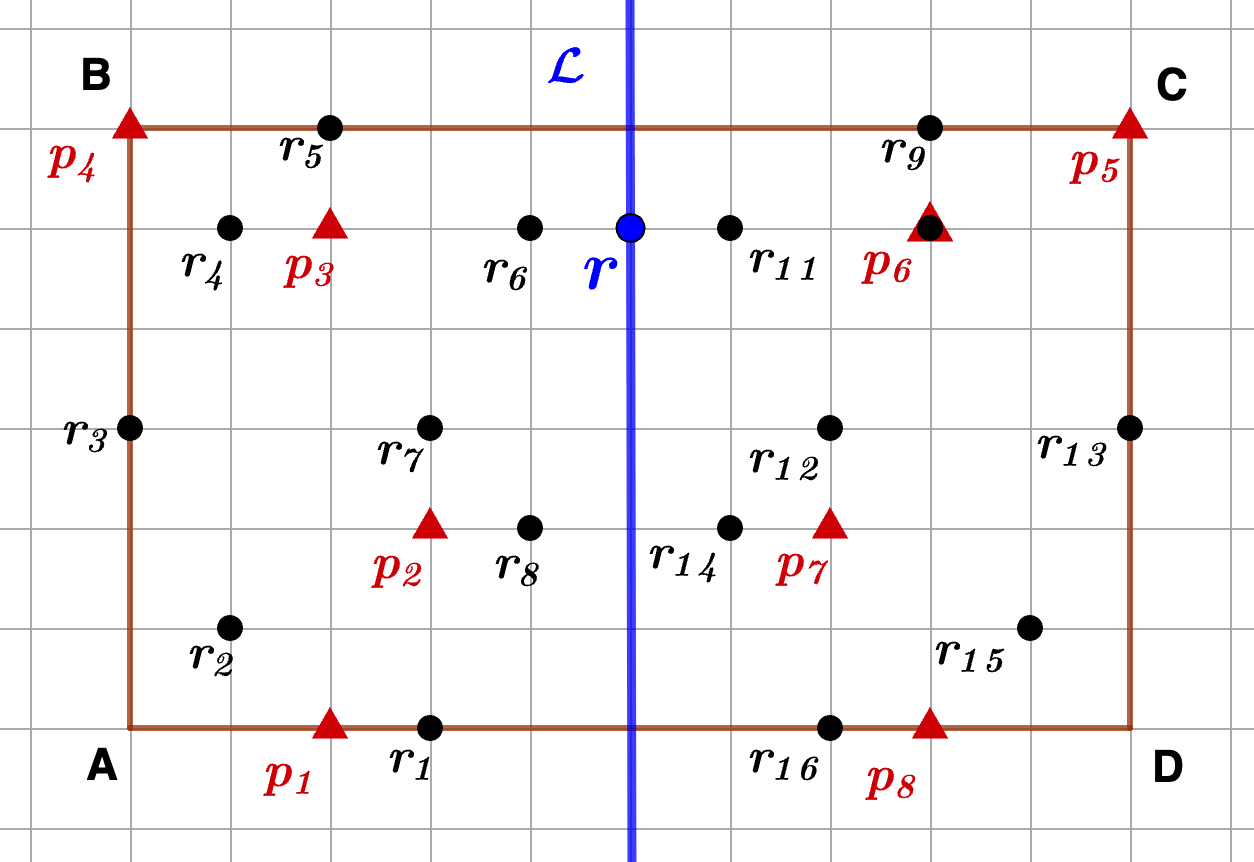}
        \caption{}
        \label{fig:near-rot}
    \end{subfigure}
   \hspace{0.05\textwidth}
    \begin{subfigure}[t]{0.36\textwidth}
        \centering
        \includegraphics[width=\textwidth]{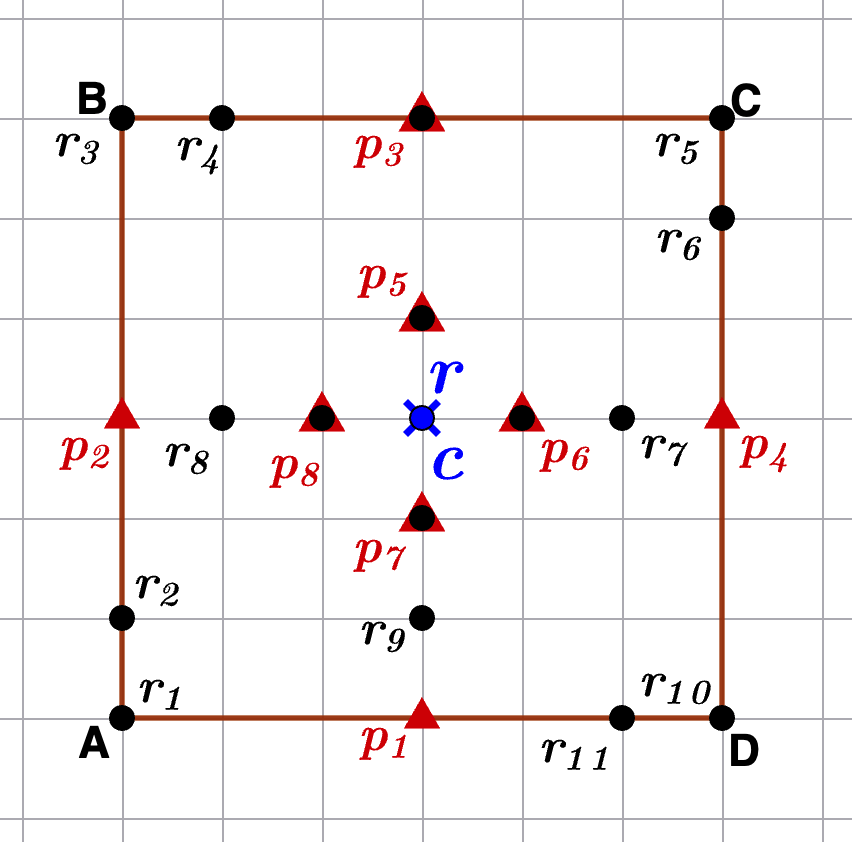}
        \caption{}
        \label{fig:near-refl}
    \end{subfigure}

\caption{
Examples of \textsc{nearly reflective} and \textsc{nearly rotational} configurations. 
(a) A \textsc{nearly reflective} configuration, where the strings generated by the robot $r$ 
in the directions away from $\mathcal L$ are $\{1010,1001\}$. 
(b) A \textsc{nearly rotational} configuration, where the strings generated by the robot $r$ 
are $\{110,110,110,101\}$. The parking nodes $p_3$, $p_5$, $p_6$, $p_7$, and $p_8$ are occupied by robots.
}
    \label{strings}
\end{figure}

Two binary strings are said to be \textsc{nearly equal} if one can be transformed into the 
other by replacing exactly one occurrence of the substring $01$ by $10$, or conversely. 
Such a local change represents the effect of a pending move of a robot from an occupied 
node to an adjacent free node along the corresponding half-line.

\subparagraph{\textbf{Pending moves in $\mathcal C_{221}$ and $\mathcal C_{321}$ (symmetry-breaking case):}}
We first consider the configurations in which the algorithm intentionally breaks the 
symmetry by moving a uniquely selected robot.

Let $\mathcal C(t)\in\mathcal C_{221}$ be symmetric with respect to a unique line of 
symmetry $\mathcal L$. Let $r$ be the unique robot on $\mathcal L$ having the maximum 
configuration view according to the ordering $\mathscr O_3$. Since $r$ lies on 
$\mathcal L$, the two relevant strings are generated from $r$ in the two directions away 
from $\mathcal L$. These strings are denoted by
$
\mathcal{STR}^{\text{left}}(r)
\quad\text{and}\quad
\mathcal{STR}^{\text{right}}(r).
$
The configuration is said to be \textsc{nearly reflective} if these two strings are 
\textsc{nearly equal} (see Figure~\ref{strings}(a)). In this case, a pending move has already affected one side of 
$\mathcal L$, and therefore the robot $r$ is not allowed to initiate a new 
symmetry-breaking move.

Next, let $\mathcal C(t)\in\mathcal C_{321}$, where the center of rotation $c$ is occupied 
by a robot $r$. In this case, the strings are generated with respect to the robot $r$ at 
the center $c$. From $r$, consider the four half-lines in the left, right, upward, and 
downward directions. The corresponding strings are denoted by
$
\mathcal{STR}^{\text{left}}(r),
\mathcal{STR}^{\text{right}}(r),
\mathcal{STR}^{\text{upward}}(r),
\mathcal{STR}^{\text{downward}}(r).
$
These strings are compared according to the rotational symmetry around $c$. If the 
rotational order is $2$, then the opposite directions are compared in above. If the rotational 
order is $4$, then all four directions are compared cyclically. The configuration is said 
to be \textsc{nearly rotational} if the strings are identical except for a local 
$01\leftrightarrow 10$ change caused by a pending move in one or more directions. (see Figure~\ref{strings}(b))

Before allowing the unique robot on $\mathcal L\cup\{c\}$ to move to an adjacent free 
node, each robot executes \textit{AllowtoMove()}. If the current configuration is detected 
as \textsc{nearly reflective} or \textsc{nearly rotational}, then the procedure returns 
false, and no new symmetry-breaking move is started. Otherwise, if a suitable adjacent 
free node exists, the unique robot is allowed to move away from $\mathcal L\cup\{c\}$, 
and the symmetry is broken.

\subparagraph{\textbf{Pending moves in $\mathcal C_{222}$ and $\mathcal C_{322}$ (symmetry-preserving case):}}

We now consider the configurations in which the algorithm preserves the symmetry during 
movement.

Let $\mathcal C(t)\in\mathcal C_{222}$. In this case, the line of symmetry $\mathcal L$ 
contains no robot. Hence, robots are selected in symmetric pairs with respect to 
$\mathcal L$. Let $(r,\mu(r))$ be such a selected pair, where $\mu$ denotes the reflection 
map with respect to $\mathcal L$. Let $\ell(r,\mu(r))$ be the line passing through 
$r$ and $\mu(r)$, and let
$
x=\ell(r,\mu(r))\cap \mathcal L
$
be its intersection node with the line of symmetry.

The string representation is now defined with respect to the node $x$. From
$x$, we consider the two opposite half-lines along $\ell(r,\mu(r))$, one toward
$r$ and the other toward $\mu(r)$. Along each half-line, we record the robot
occupancy value $\lambda_t(v)$ of every visited node $v$, starting from the node
adjacent to $x$ and continuing up to the farthest occupied node in that
direction. Let the two resulting strings be denoted by $\mathcal{STR}^{+}(x)$ and $\mathcal{STR}^{-}(x)$. If $\mathcal{STR}^{+}(x)$ and $\mathcal{STR}^{-}(x)$ are \textsc{nearly equal}, then the 
configuration represents a pending reflective movement. This means that one robot of the 
selected symmetric pair has already moved toward its destination, while the other robot 
has not yet completed the corresponding symmetric move. In this case, the robot that has 
already progressed remains stationary, and the pending robot continues its movement so 
that the reflectional symmetry is restored. No new symmetric pair is selected until this 
pending move is completed.

Next, let $\mathcal C(t)\in\mathcal C_{322}$. Here, the configuration is rotationally 
symmetric with center $c$, and the center is not occupied by a robot. Robots are selected 
in rotational orbits with respect to $c$. Let $\rho$ be the rotational map about $c$. Mote that the rotational order is $q\in\{2,4\}$. Suppose that the selected orbit is
$
\mathbf{Orbit}_r=\{r,\rho(r),\rho^2(r),\ldots,\rho^{q-1}(r)\}.
$
The string representation is defined with respect to the center $c$. From $c$, the 
occupancy values are recorded along the directions determined by the robots in the 
selected orbit and their rotated copies.

For each $i\in\{0,1,\ldots,q-1\}$, define
\[
b_i=
\begin{cases}
1, & \text{if $\rho^i(r)$ has already progressed toward its assigned destination,}\\
0, & \text{otherwise.}
\end{cases}
\]
The binary string
$
B(\mathbf{Orbit}_r)=b_0b_1\cdots b_{q-1}
$
is called the movement-status string of the selected rotational orbit.

If $q=2$, then the possible movement-status strings are
$
00, 01, 10, 11.
$
The strings $00$ and $11$ correspond to symmetric states before and after the completion 
of the orbit movement, respectively. The strings $01$ and $10$ correspond to pending 
rotational movements, since exactly one robot of the selected orbit has progressed while 
the other has not.

If $q=4$, then the movement-status string has length four. The strings $0000$ and $1111$ 
correspond to symmetric states before and after the completion of the orbit movement. 
Every other string, for example $1000, 0100, 1010, 1100, \text{ and, } 1110$ represents a pending rotational movement, since some but not all robots of the selected orbit have progressed toward their assigned destinations.

Thus, in $\mathcal C_{222}$ and $\mathcal C_{322}$, pending moves are not used to break 
symmetry. They only indicate that a symmetry-preserving movement has been partially 
completed. During such a pending state, the robots that have already progressed remain 
stationary, while the remaining robots of the same symmetric pair or rotational orbit 
continue toward their corresponding destinations. No new pair or orbit is allowed to move 
until the current pending movement is completed and the corresponding symmetry is restored. The pseudocode for maintaining symmetry during pending move is presented in Algorithm~\ref{alg:handle-pending-move}.

\begin{algorithm}[ht]
\tiny
\caption{\textsc{: HandlePendingMove}$(\mathcal C(t))$}
\label{alg:handle-pending-move}
\begin{algorithmic}[1]
\Require $\mathcal C(t)\in\{\mathcal C_{222},\mathcal C_{322}\}$
\Ensure Pending symmetry-preserving move is completed

\If{$\mathcal C(t)\in\mathcal C_{222}$}
    \State Let $(r,\mu(r))$ be the selected symmetric pair
    \State Let $x=\ell(r,\mu(r))\cap \mathcal L$
    \State Construct $\mathcal{STR}^{+}(x)$ and $\mathcal{STR}^{-}(x)$ along the two opposite half-lines from $x$
    \If{$\mathcal{STR}^{+}(x)$ and $\mathcal{STR}^{-}(x)$ are nearly equal}
        \State Move only the pending robot of $(r,\mu(r))$
        \State Keep its symmetric partner stationary
    \Else
        \State Move $r$ and $\mu(r)$ symmetrically toward their destinations
    \EndIf
\EndIf

\If{$\mathcal C(t)\in\mathcal C_{322}$}
    \State Let $\mathbf{Orbit}_r=\{r,\rho(r),\ldots,\rho^{q-1}(r)\}$ be the selected orbit
    \State Construct the movement-status string $B(\mathbf{Orbit}_r)=b_0b_1\cdots b_{q-1}$
    \If{$B(\mathbf{Orbit}_r)=00$ or $0000$}
        \State Move all robots in $\mathbf{Orbit}_r$ symmetrically toward their destinations
    \ElsIf{$B(\mathbf{Orbit}_r)=11$ or $1111$}
        \State The orbit movement is complete
    \Else
        \State Move only the pending robots of $\mathbf{Orbit}_r$
        \State Keep the already-progressed robots stationary
    \EndIf
\EndIf

\State No new pair or orbit is selected until the symmetry is restored

\end{algorithmic}
\end{algorithm}

% \paragraph{Nearly rotational configuration.}
% Consider a configuration $\mathcal C(t)$ having rotational symmetry with center $c$ and rotational order $q\in\{2,4\}$. Let $\rho$ denote the rotational map about $c$. Among all rotational orbits of robots, let $\mathbf{Orbit}_r$ be the orbit whose robots have minimum Manhattan distance from $c$. Only the robots in $\mathbf{Orbit}^*$ are allowed to perform the prescribed symmetry-breaking move.

% Let
% \[
% \mathbf{Orbit}^*=\{r,\rho(r),\rho^2(r),\ldots,\rho^{q-1}(r)\}.
% \]
% For each $i\in\{0,1,\ldots,q-1\}$, define
% \[
% b_i=
% \begin{cases}
% 1, & \text{if $\rho^i(r)$ has already performed its prescribed move,}\\
% 0, & \text{otherwise.}
% \end{cases}
% \]
% The binary string
% \[
% B(\mathbf{Orbit}^*)=b_0b_1\cdots b_{q-1}
% \]
% is called the movement-status string of the selected rotational orbit.

% The configuration $\mathcal C(t)$ is said to be \textsc{nearly rotational} if the movement-status string contains at least one $1$ and at least one $0$. Equivalently, a nonempty proper subset of the robots in $\mathbf{Orbit}^*$ has already performed the prescribed move. Thus, the movement of the selected rotational orbit is partially completed.

\subsection{Ordering for Algorithmn \textbf{\textsc{spg()}}}
\label{Ordering}
In this section, we introduce different orderings of parking nodes and robot positions that are used throughout the algorithm. The different orderings are as follows:

\begin{itemize} 
    \item $\mathscr O_1:$ Consider the case where the set of parking nodes is asymmetric. By the definition of asymmetry for the set $\mathcal P$, there always exists a unique lexicographically maximum string $\mathcal{STR}^i$. Consequently, a unique leading corner, say, $\mathcal {K} _ {p} $ of $\mathcal{MER}_{\mathcal P}$ of the set $\mathcal {P} $ of the parking nodes can be determined. Now consider the string representation $\mathcal{STR}^i$ associated with $\mathcal K_p$. The parking nodes are ordered according to their order of appearance in this string, from first to last. Let us assume that $p_1 \prec_{\mathcal P} p_2 \prec_{\mathcal P} \cdots \prec_{\mathcal P} p_m$ be the sequence of parking nodes obtained in this way. This ordering is denoted by $\mathscr O_1$.

% \begin{figure}[htbp]
%     \centering
%     \begin{subfigure}[t]{0.58\textwidth}
%         \centering
%         \includegraphics[width=\textwidth]{I_21.png}
%         \caption{}
%         \label{fig:refl21}
%     \end{subfigure}
%     \hfill
%     \begin{subfigure}[t]{0.4\textwidth}
%         \centering
%         \includegraphics[width=\textwidth]{I_31.png}
%         \caption{}
%         \label{fig:rot31}
%     \end{subfigure}

%     \caption{}
%     \label{ordering O2}
% \end{figure}

  \item $\mathscr{O}_2$: Consider an asymmetric configuration $\mathcal{C}(t)$ in which the parking-node configuration is either reflective, with a unique line of symmetry $\mathcal{L}$, or rotationally symmetric with center $c$. Although the parking-node configuration itself is symmetric, the overall configuration $\mathcal{C}(t)$ is asymmetric. Consequently, the asymmetry of $\mathcal{C}(t)$ induces a total ordering on all entities (i.e., both the robots and the parking nodes). Whenever multiple robots or parking nodes share the same maximum Manhattan distance from $\mathcal{L}$ (in the reflective case) or from $c$ (in the rotational case), the tie is broken by selecting the entity that possesses the highest order under the induced total ordering. We denote this resulting ordering by $\mathscr{O}_2$. Since the entities in an asymmetric configuration are totally orderable~\cite{bose2020arbitrary}, the ordering $\mathscr{O}_2$ is structurally well-defined.

\begin{figure}[htbp]
    \centering
    \includegraphics[width=0.75\textwidth]{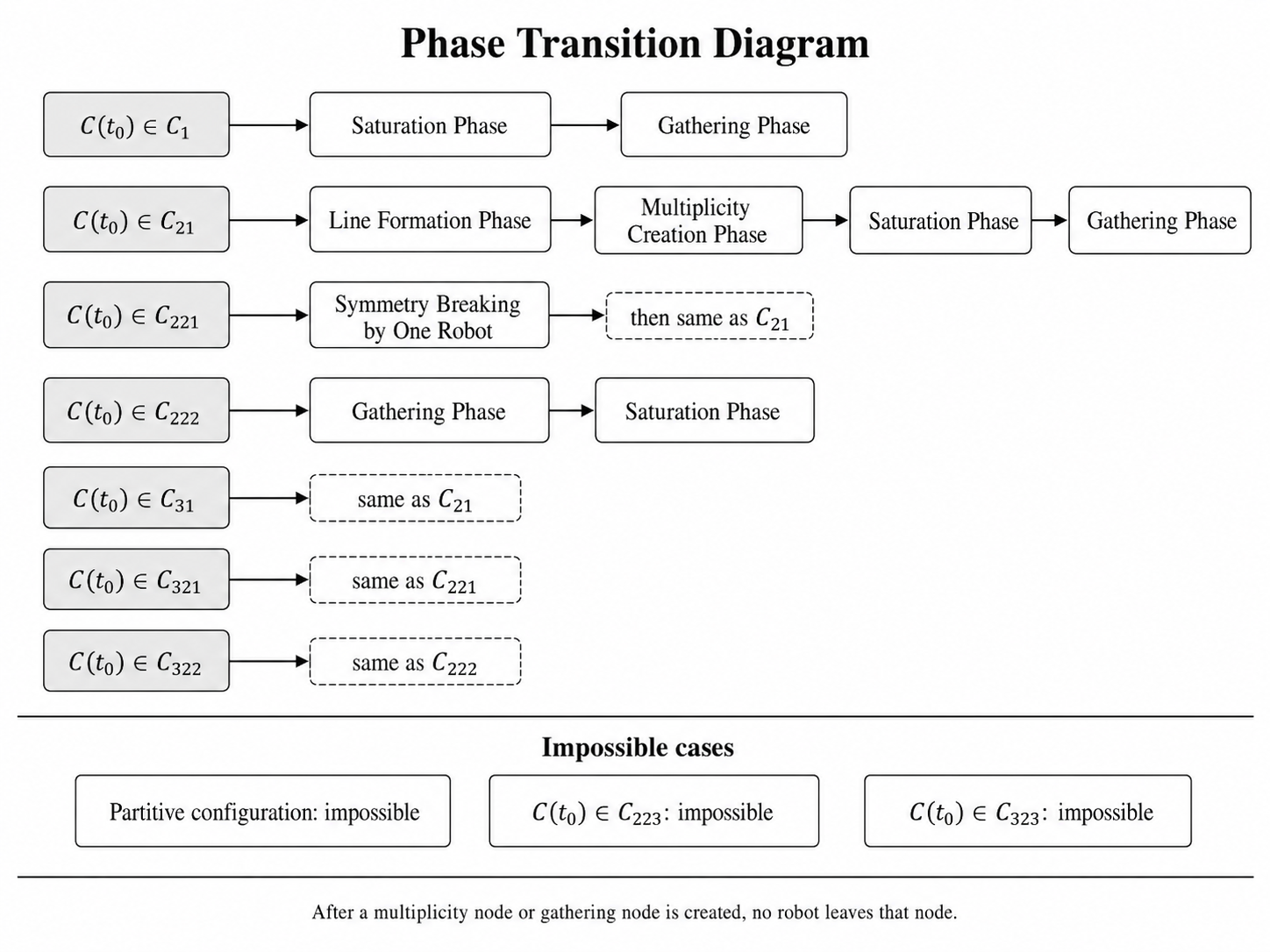}
    \caption{Phase transition diagram of the algorithm for different initial configuration classes.}
    \label{fig:phase-transition}
\end{figure}

  \item $\mathscr{O}_3$: Consider the case where a set of entities (robots or parking nodes) is symmetric with respect to a unique line of symmetry $\mathcal{L}$, and at least one entity lies on $\mathcal{L}$. In this scenario, there may exist either one or two key corners of the corresponding $\mathcal{MER}$ (Minimal Enclosing Rectangle). If there exists a unique key corner $\mathcal{K}$, then $\mathcal{L}$ must be a diagonal line of symmetry. The two strings associated with $\mathcal{K}$ are identical, and one of the two directions is chosen arbitrarily as the string direction. Let $\mathcal{STR}^i$ be the corresponding string representation. The entities lying on $\mathcal{L}$ are then ordered from first to last according to their appearance in $\mathcal{STR}^i$. This ordering is denoted by $\mathscr{O}_3$. 

%   \begin{figure}[htbp]
%     \centering

%     \begin{subfigure}[t]{0.55\textwidth}
%         \centering
%         \includegraphics[width=\textwidth]{I_221.png}
%        \caption{}
%         \label{fig:refl221}
%     \end{subfigure}
%     \hfill
%     \begin{subfigure}[t]{0.4\textwidth}
%         \centering
%         \includegraphics[width=\textwidth]{I_321.png}
%         \caption{}
%         \label{fig:rot321}
%     \end{subfigure}

%     \caption{}
%     \label{ordering O3}
% \end{figure}
    
    If there exist two key corners, the string representations associated with both key corners are evaluated, and $\mathscr{O}_3$ is defined in an identical manner. Since $\mathscr{O}_3$ depends solely on the positions of the entities and the key corners, it uniquely determines an ordering at every time instant. Consequently, when applied to robots, this ordering may change as they move during the execution; when applied to parking nodes, which are stationary, the ordering remains invariant.

% \begin{figure}[htbp]
%     \centering

%     \begin{subfigure}[t]{0.55\textwidth}
%         \centering
%         \includegraphics[width=\textwidth]{I_222.png}
%        \caption{}
%         \label{fig:refl222}
%     \end{subfigure}
%     \hfill
%     \begin{subfigure}[t]{0.4\textwidth}
%         \centering
%         \includegraphics[width=\textwidth]{I_322.png}
%         \caption{}
%         \label{fig:rot322}
%     \end{subfigure}

%     \caption{}
%     \label{ordering O4}
% \end{figure}

\item $\mathscr{O}_4$: Consider a configuration
$\mathcal{C}(t)\in\{\mathcal{C}_{222},\mathcal{C}_{322}\}$.

\begin{enumerate}

\item Suppose that $\mathcal{C}(t)\in\mathcal{C}_{222}$, and let
$\mathcal{L}$ be the line of symmetry. Define $ \mathcal S_{\mathcal L}
    =
    \mathcal L\cap\mathcal{MER}_{\mathcal P}$ to be the line segment of the symmetry line $\mathcal L$ contained in
$\mathcal{MER}_{\mathcal P}$. Let $ V_{\mathcal L}=\mathcal S_{\mathcal L}\cap V$. Thus, $V_{\mathcal L}$ denotes the set of grid nodes lying on the segment
$\mathcal S_{\mathcal L}$. Let $Q_{\mathcal P}$ be the leading corner of $\mathcal{MER}_{\mathcal P}$.
The string direction determined from $Q_{\mathcal P}$ induces a total order
on the nodes of $V_{\mathcal L}$. We denote this order by
$\prec_{\mathcal L}$. Hence, for any two nodes
$u,v\in V_{\mathcal L}$, either $u\prec_{\mathcal L} v$ or
$v\prec_{\mathcal L} u$ holds. This gives a unique ordering of the nodes of
$V_{\mathcal L}$ along the symmetry line $\mathcal L$.

Let $\mu$ denote the image with respect to $\mathcal L$. Since
$\mathcal{C}(t)$ is symmetric with respect to $\mathcal L$, the robot set
$\mathcal R$ can be partitioned into symmetric pairs $ \Pi=\{(r,\mu(r)): r\in\mathcal R\}$.
For each symmetric pair $\pi=(r,\mu(r))\in\Pi$, define $d_{\mathcal L}(\pi)
    =
    \min_{g\in V_{\mathcal L}} d_{\mathcal M}(r,g)$. Since every node $g\in V_{\mathcal L}$ lies on the symmetry line
$\mathcal L$, we have $d_{\mathcal M}(r,g)=d_{\mathcal M}(\mu(r),g)$. Therefore, $d_{\mathcal L}(\pi)$ is also equal to $ \min_{g\in V_{\mathcal L}} d_{\mathcal M}(\mu(r),g)$. Among the nodes of $V_{\mathcal L}$ attaining the minimum distance
$d_{\mathcal L}(\pi)$, let $g_{\mathcal L}(\pi)$ be the first node according
to the order $\prec_{\mathcal L}$. In other words,
$g_{\mathcal L}(\pi)$ is the first node in the leading-corner string order
among all nodes of $V_{\mathcal L}$ that are closest to the symmetric pair
$\pi$.

Now let $Q_{\mathcal C}$ be the key corner of the current configuration
$\mathcal C(t)$. The string direction determined from $Q_{\mathcal C}$
induces an order on the symmetric robot pairs in $\Pi$. We denote this
pair-order by $\prec_{\mathcal C}$.

The ordering $\mathscr O_4$ is now defined as follows. The symmetric pairs
in $\Pi$ are ordered in non-decreasing order of $d_{\mathcal L}(\pi)$. If two
symmetric pairs $\pi$ and $\pi'$ satisfy $ d_{\mathcal L}(\pi)=d_{\mathcal L}(\pi')$, then the tie is resolved by comparing their corresponding closest nodes
$g_{\mathcal L}(\pi)$ and $g_{\mathcal L}(\pi')$ according to the order
$\prec_{\mathcal L}$. That is, $\pi$ precedes $\pi'$ if $g_{\mathcal L}(\pi)\prec_{\mathcal L} g_{\mathcal L}(\pi')$. If a tie still remains, then it is resolved using the pair-order
$\prec_{\mathcal C}$ induced by the key-corner string of $\mathcal C(t)$. The resulting ordered sequence of symmetric robot pairs is defined as the
ordering $\mathscr O_4$ for configurations in $\mathcal C_{222}$. Since
$\prec_{\mathcal L}$ and $\prec_{\mathcal C}$ are total orders, the first
symmetric pair in $\mathscr O_4$ is uniquely determined.

\item Suppose that $\mathcal{C}(t)\in\mathcal{C}_{322}$, and let $c$ be
the center of rotation. Let $\rho$ denote the smallest non-trivial rotation
about $c$ that preserves the configuration. The robot set $\mathcal R$ is
partitioned into distinct closed rotational orbits $\Omega_1,\Omega_2,\ldots,\Omega_k$, where each orbit is of the form $\Omega(r)=\{r,\rho(r),\rho^2(r),\ldots,\rho^{q-1}(r)\}$, $q$ is the order of the rotational symmetry, and $\rho^q(r)=r$.

For each closed orbit $\Omega_j$, define its distance from the center of
rotation $c$ as $ D(\Omega_j)=\min_{r\in\Omega_j} d_{\mathcal M}(r,c)$. The closed rotational orbits are first ordered in non-decreasing order of their distances from $c$. That is, for two orbits $\Omega_i$ and $\Omega_j$, $ \Omega_i \preceq \Omega_j$, if $D(\Omega_i)\leq D(\Omega_j)$. If two distinct closed orbits have the same distance from $c$, then the tie is resolved using the string direction induced by the leading corner of the
parking-node configuration. More precisely, let $Q_{\mathcal P}$ be the
leading corner of $\mathcal{MER}_{\mathcal M}$. The string direction
determined from $Q_{\mathcal P}$ induces a total order on the robot
positions. For each orbit $\Omega_j$, let $\operatorname{rep}(\Omega_j)$ be
the first robot of $\Omega_j$ in this string order. Then, for two orbits
$\Omega_i$ and $\Omega_j$ satisfying $ D(\Omega_i)=D(\Omega_j)$, we order them according to their representatives:
\[
    \Omega_i \prec \Omega_j
    \quad \text{if} \quad
    \operatorname{rep}(\Omega_i)
    \prec
    \operatorname{rep}(\Omega_j)
\]
in the leading-corner string order. The resulting ordered sequence of closed rotational orbits is defined as the ordering $\mathscr O_4$ for configurations in $\mathcal C_{322}$. Since the leading-corner string order is a total order, the first closed orbit in
$\mathscr O_4$ is uniquely determined.

\end{enumerate}

\end{itemize}

\subsection{Description of the Algorithm \textbf{\textsc{spg()}}}

In this section, we describe the algorithm for solving the \textit{Surplus Parking Gathering Problem}. The algorithm is designed according to the structural class of the initial configuration. Depending on the symmetry of the parking nodes and the robot configuration, the execution is divided into either two or four different phases. The overall execution flow of the algorithm depends on the initial configuration class $\mathcal C(t_0)$; the corresponding phase transitions, together with the impossible cases, are summarized in Figure~\ref{fig:phase-transition}.

\paragraph{Line Formation Phase:} This phase is executed only when the initial configuration $\mathcal C(t_0)$ belongs to either $\mathcal C_{21}$ or $\mathcal C_{31}$. The movement of robots for each configuration is performed by the algorithm \textit{MoveToDestination()}. A detailed description of the phase corresponding to different configurations is provided below.

%%%\FloatBarrier

\begin{algorithm}[ht]
\tiny
\caption{\textsc{: LineFormation}($\mathcal{C}(t)$)}
\label{alg:line-formation}
\begin{algorithmic}[1]
\Require Current configuration $\mathcal{C}(t)$, robot set $\mathcal{R}(t)$, parking node set $\mathcal{P}$
\Ensure All robots occupy distinct nodes on the formation line $\mathscr{L}$

\If{$\mathcal{C}(t_0) \notin \{\mathcal{C}_{21}, \mathcal{C}_{31}\}$}
    \State \Return
\EndIf

\If{$\mathcal{C}(t_0) \in \mathcal{C}_{21}$}
    \State Let $\mathcal{L}$ be the unique line of symmetry of $\mathcal{P}$
    \State For each robot $r_i \in \mathcal{R}(t)$, compute $\eta(r_i) = d_{\mathcal{M}}(r_i, \mathcal{L})$
    \State $d_{\mathcal{L}} \gets \max \{ \eta(r_i) : r_i \in \mathcal{R}(t) \}$
    \State Select a unique robot $r_g$ using $\mathscr{O}_2$ among all robots satisfying $\eta(r_i) = d_{\mathcal{L}}$

\ElsIf{$\mathcal{C}(t_0) \in \mathcal{C}_{31}$}
    \State Let $c$ be the center of rotational symmetry of $\mathcal{P}$
    \State For each robot $r_i \in \mathcal{R}(t)$, compute $\eta(r_i) = d_{\mathcal{M}}(r_i, c)$
    \State $d_c \gets \max \{ \eta(r_i) : r_i \in \mathcal{R}(t) \}$
    \State Select a unique robot $r_g$ using $\mathscr{O}_2$ among all robots satisfying $\eta(r_i) = d_c$
\EndIf

\State $(w, \mathscr{L}) \gets \textsc{ComputeFormationLine}(\mathcal{C}(t), r_g, \mathcal{MER}_{\mathcal{P}})$

\If{$r_g \notin \mathscr{L}$}
    \State \textsc{PlaceRobotOnFormationLine}($r_g, w, \mathscr{L}$)
\EndIf

\While{there exists a robot $r_i \in \mathcal{R}(t)$ such that $r_i \notin \mathscr{L}$}
    \State Let $\mathcal{R}_u(t) = \{ r_i \in \mathcal{R}(t) : r_i \notin \mathscr{L} \}$
    \State Select a robot $r_u \in \mathcal{R}_u(t)$ having maximum value of $\eta(r_u)$
    \State Break ties using the ordering $\mathscr{O}_2$
    \State Select a node $w_i \in \mathscr{L}$ as the destination node for $r_u$
    \State \textsc{PlaceRobotOnFormationLine}($r_u, w_i, \mathscr{L}$)
\EndWhile

\State \Return

\end{algorithmic}
\end{algorithm}

%%\FloatBarrier

\begin{figure}[htbp]
    \centering

    \begin{subfigure}[t]{0.59\textwidth}
        \centering
        \includegraphics[width=\textwidth]{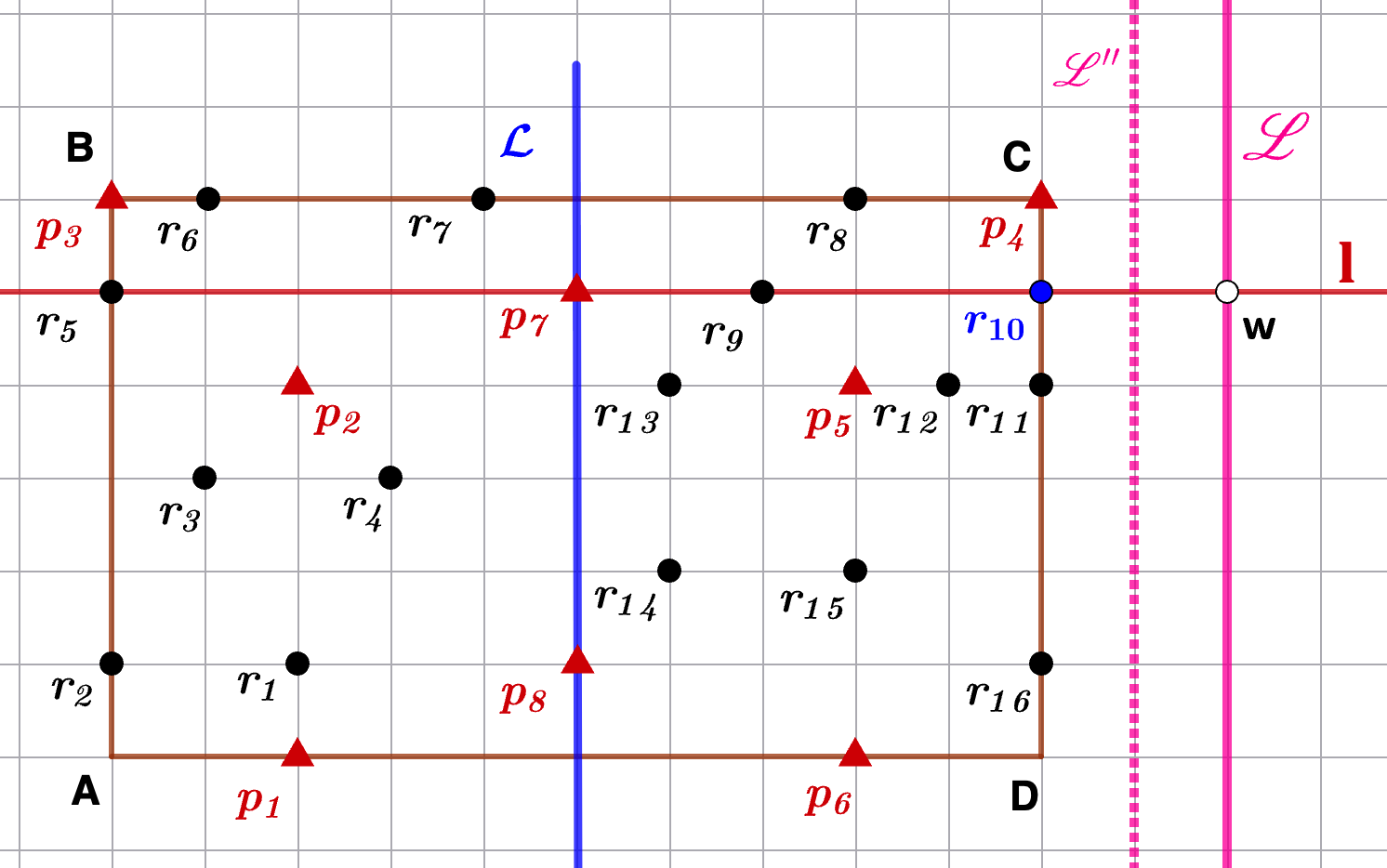}
       \caption{}
        \label{fig:Line_{I_21}}
    \end{subfigure}
    \hfill
    \begin{subfigure}[t]{0.35\textwidth}
        \centering
        \includegraphics[width=\textwidth]{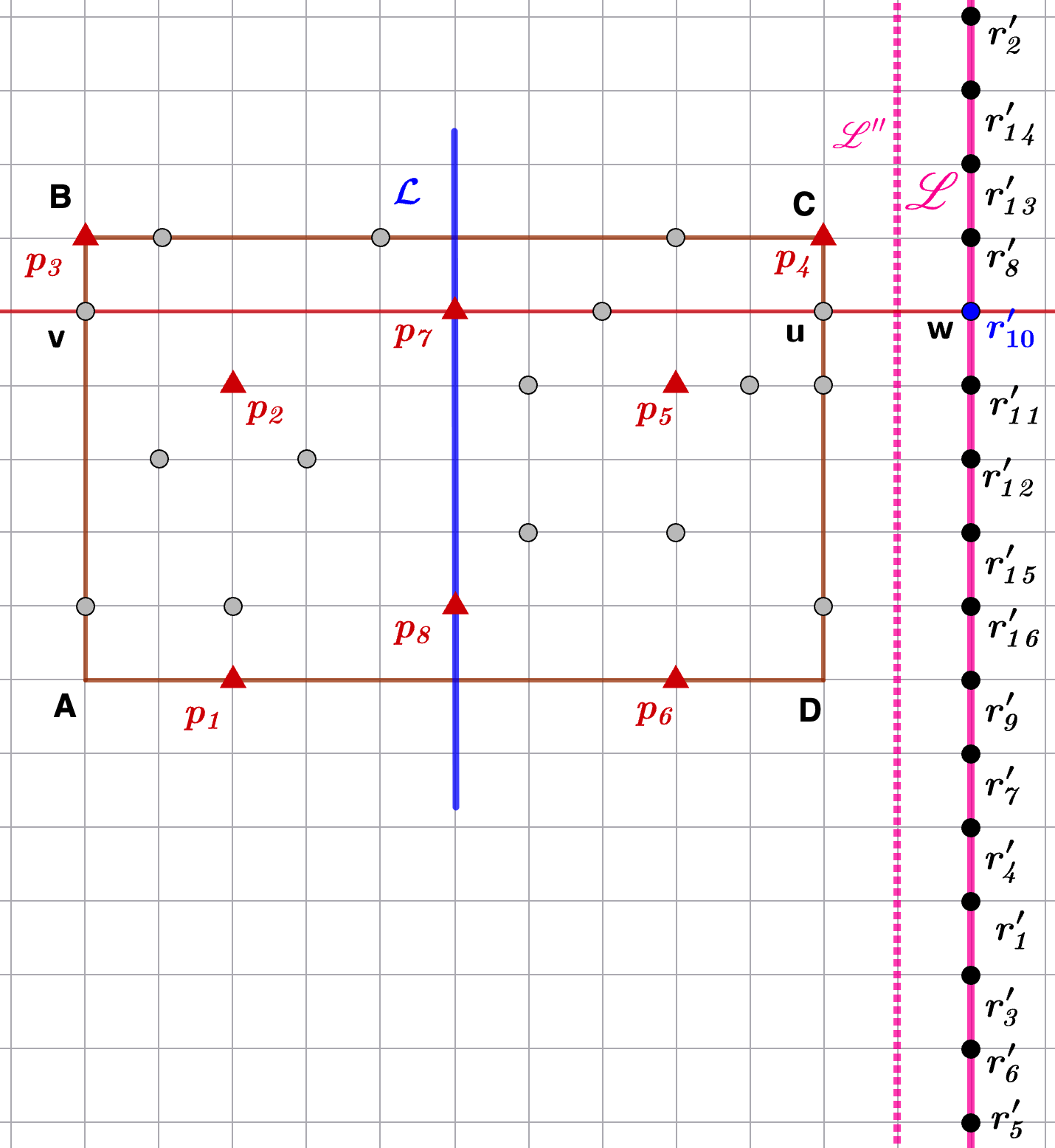}
        \caption{}
        \label{fig:TransitionOfLine_{I_21}}
    \end{subfigure}
\caption{ Illustration of the line-formation process for the configuration $\mathcal C_{21}$. (a) Selection of the formation line $\mathscr L$ and the auxiliary line $\mathscr L'$; (b) final placement of robots on $\mathscr L$ while preserving the prescribed ordering.}
    \label{TransitionOfLine}
\end{figure}

\begin{itemize}

\item \textbf{$\mathcal C_{21}$ Configuration:}
In this configuration, the parking node set $\mathcal P$ admits a unique line of symmetry $\mathcal L$, whereas the robot configuration $\mathcal R(t_0)$ is asymmetric. Define
$d_{\mathcal L}=\max\{d_{\mathcal M}(r_i,\mathcal L):r_i\in\mathcal R\}$.
A unique robot $r_g$ is selected using the ordering $\mathscr O_2$ among all robots attaining the distance $d_{\mathcal L}$. Let $AB$, $BC$, $CD$, and $DA$ be the four sides of $\mathcal{MER}_{\mathcal P}$ such that $AB$ and $CD$ are parallel to $\mathcal L$. Let $l$ be the line passing through $r_g$ and intersecting the sides $BC$ and $DA$ at nodes $u$ and $v$, respectively, where $d_{\mathcal M}(r_g,u)<d_{\mathcal M}(r_g,v)$.

If $r_g$ lies inside or on $\mathcal{MER}_{\mathcal P}$, then it moves along the line $l$ away from $v$ toward a node $w\in l$ outside $\mathcal{MER}_{\mathcal P}$ and stops at $w$ such that $d_{\mathcal M}(u,w)=2$ (see Figure~\ref{TransitionOfLine}(a)). Let $\mathscr L$ be the line passing through $w$ and parallel to $\mathcal L$. Clearly, in this case, $d_{\mathcal M}(\mathcal L,r_g)<d_{\mathcal M}(\mathcal L,w)$. 

If $r_g$ is outside $\mathcal{MER}_{\mathcal P}$ and satisfies $d_{\mathcal M}(\mathcal L,r_g)<d_{\mathcal M}(\mathcal L,w)$, then $r_g$ reaches $w$ (see Figure~\ref{TransitionOfLine}(b)). If the robot $r_g$ is outside $\mathcal{MER}_{\mathcal P}$ and $d_{\mathcal M}(\mathcal L,r_g)>d_{\mathcal M}(\mathcal L,w)$, then $r_g$ is away from $w$. In this case, $r_g$ moves towards $w$. If $\mathcal R \cap \{w\}=\emptyset$, that is, node $w$ does not contain another robot, then robot $r_g$ moves to $w$. Otherwise the robot $r_g$ moves along $l$ towards $w$ and stops at a node $w'\in l$ such that $d_{\mathcal M}(w,w')=1$. Consider an auxiliary line $\mathscr L'$ passing through $w'$ and parallel to $\mathscr L$. In this case, robot $r_g$ moves along a line $\mathscr L'$ and finds a free node, say $w''\in \mathscr L$. Once a free node $w''\in \mathscr L$ is identified, the robot $r_g$ moves to $w''$ and stops there. Similarly, for any robot $r_u\neq r_g$ satisfying $d_{\mathcal M}(\mathcal L,r_u)<d_{\mathcal M}(\mathcal L,w)$, an auxiliary line $\mathscr L''$ can be defined analogously to $\mathscr L'$ to reach distinct node on $\mathscr L$. Also, this line $\mathscr L''$ satisfies $d_{\mathcal M}(\mathcal L,\mathscr L'')<d_{\mathcal M}(\mathcal L,\mathscr L')$.

% %\FloatBarrier

\begin{algorithm}[ht]
\tiny
\caption{\textsc{: ComputeFormationLine}($\mathcal{C}(t), r_g, \mathcal{MER}_{\mathcal{P}}$)}
\label{alg:compute-formation-line}
\begin{algorithmic}[1]
\Require Initial configuration $\mathcal{C}(t_0) \in \{\mathcal{C}_{21}, \mathcal{C}_{31}\}$, guard robot $r_g$, rectangle $\mathcal{MER}_{\mathcal{P}}$
\Ensure A node $w$ and the formation line $\mathscr{L}$

\If{$\mathcal{C}(t_0) \in \mathcal{C}_{21}$}
    \State Let $\mathcal{L}$ be the unique line of symmetry of $\mathcal{P}$
    \State Let $AB, BC, CD, DA$ be the four sides of $\mathcal{MER}_{\mathcal{P}}$ such that $AB$ and $CD$ are parallel to $\mathcal{L}$
    \State Let $l$ be the line passing through $r_g$ and intersecting $BC$ and $DA$ at nodes $u$ and $v$, respectively
    \State Choose $u$ and $v$ such that $d_{\mathcal{M}}(r_g, u) < d_{\mathcal{M}}(r_g, v)$
    \State Choose a node $w \in l$ outside $\mathcal{MER}_{\mathcal{P}}$ in the direction away from $v$ such that $d_{\mathcal{M}}(u, w) = 2$
    \State Let $\mathscr{L}$ be the line passing through $w$ and parallel to $\mathcal{L}$

\ElsIf{$\mathcal{C}(t_0) \in \mathcal{C}_{31}$}
    \State Let $c$ be the center of rotational symmetry of $\mathcal{P}$
    \State Let $AB, BC, CD, DA$ be the four sides of $\mathcal{MER}_{\mathcal{P}}$
    \State Let $l$ be the line passing through $r_g$ and intersecting any two parallel sides of $\mathcal{MER}_{\mathcal{P}}$, $BC$ and $DA$, at nodes $u$ and $v$
    \State Choose $u$ and $v$ such that $d_{\mathcal{M}}(r_g, u) < d_{\mathcal{M}}(r_g, v)$
    \State Choose a node $w \in l$ outside $\mathcal{MER}_{\mathcal{P}}$ in the direction away from $c$ and $u$ such that $d_{\mathcal{M}}(u, w) = 2$
    \State Let $\mathscr{L}$ be the line passing through $w$ and perpendicular to $l$

\Else
    \State \Return
\EndIf

\State \Return $(w, \mathscr{L})$

\end{algorithmic}
\end{algorithm}

% %\FloatBarrier

\item \textbf{$\mathcal C_{31}$ Configuration:}
In this configuration, the parking node set $\mathcal P$ admits a center of rotational symmetry $c$, whereas the robot configuration $\mathcal R(t_0)$ is asymmetric. Define
$d_c=\max\{d_{\mathcal M}(r_i,c):r_i\in\mathcal R\}$. A unique robot $r_g$ is selected using the ordering $\mathscr O_2$, among all the robots attaining the distance $d_c$. Let $AB$, $BC$, $CD$, and $DA$ be the four sides of $\mathcal{MER}_{\mathcal P}$. Let $l$ be the line passing through $r_g$ and intersecting any two parallel sides of $\mathcal{MER}_{\mathcal P}$, say $BC$ and $DA$ at nodes $u$ and $v$, respectively, where $d_{\mathcal M}(r_g,u)<d_{\mathcal M}(r_g,v)$. 

If $r_g$ is located inside or on the boundary of $\mathcal{MER}_{\mathcal P}$, it moves along the line $l$, increasing its distance from both $c$ and $u$, until it reaches a node $w \in l$ that lies outside $\mathcal{MER}_{\mathcal P}$. The robot stops at $w$ such that $d_{\mathcal M}(u,w)=2$. Let $\mathscr L$ be the line passing through $w$ and perpendicular to the line $l$. Clearly, in this case, $d_{\mathcal M}(c,r_g)<d_{\mathcal M}(c,w)$. Once the line $\mathscr L$ is identified, the remaining computation for executing the \textit{Line Formation Phase} in configuration $\mathcal C_{31}$ follows the same procedure as in configuration $\mathcal C_{21}$, except that every comparison based on the distance from the line of symmetry $\mathcal L$ is replaced by the corresponding comparison based on the distance from the center of rotational symmetry $c$.

\end{itemize}

The auxiliary lines $\mathscr{L}'$ and $\mathscr{L}''$ are used to avoid the creation of multiple multiplicity nodes. If multiple multiplicity nodes are created in the configuration, then the \textit{Line Formation Phase} and the \textit{Saturation Phase} (discussed in detail on Page~\pageref{satu}) cannot be uniquely identified. 

Once the line $\mathscr{L}$ is formed (referred to as the \emph{formation line}), all robots uniquely identify it and move sequentially to distinct nodes on $\mathscr{L}$ according to the decreasing order of their Manhattan distances from $\mathcal{L}$ or $c$. Robots already located on $\mathscr{L}$ remain stationary during the \textit{Line Formation Phase}. This phase terminates when all $n$ robots occupy distinct nodes on $\mathscr{L}$. The detailed pseudocode for this phase is given in Algorithms~\ref{alg:line-formation}, \ref{alg:compute-formation-line}, and \ref{alg:place-robot-line}.

%\FloatBarrier

\begin{algorithm}[ht]
\tiny
\caption{\textsc{: PlaceRobotOnFormationLine}$(r,w,\mathscr L)$}
\label{alg:place-robot-line}
\begin{algorithmic}[1]
\Require A selected robot $r$, a node $w\in\mathscr L$, and the formation line $\mathscr L$
\Ensure Robot $r$ reaches a distinct node on $\mathscr L$

\If{$r\in\mathscr L$}
\State \Return
\EndIf

\If{$\mathcal R(t)\cap\{w\}=\emptyset$}
\While{$r\neq w$}
\State Robot $r$ moves by one hop along a shortest Manhattan path toward $w$
\State All other robots remain stationary
\EndWhile
\Else
\State Let $w'$ be a node on the approach line of $r$ such that $d_{\mathcal M}(w,w')=1$
\State Let $\mathscr L'$ be the auxiliary line passing through $w'$ and parallel to $\mathscr L$

\While{$r\neq w'$}
    \State Robot $r$ moves by one hop along a shortest Manhattan path toward $w'$
    \State All other robots remain stationary
\EndWhile

\State Select a free node $w''\in\mathscr L$ such that $\mathcal R(t)\cap\{w''\}=\emptyset$

\While{$r\neq w''$}
    \State Robot $r$ moves by one hop along a shortest Manhattan path toward $w''$
    \State All other robots remain stationary
\EndWhile

\EndIf

\State \Return

\end{algorithmic}
\end{algorithm}

%\FloatBarrier

\begin{figure}[htbp]
    \centering

    \begin{subfigure}[t]{0.35\textwidth}
        \centering
        \includegraphics[width=\textwidth]{TransitionOfLine__I_21.png}
        \caption{}
    \end{subfigure}%
    \hspace{0.05\textwidth}%
    \begin{subfigure}[t]{0.35\textwidth}
        \centering
        \includegraphics[width=\textwidth]{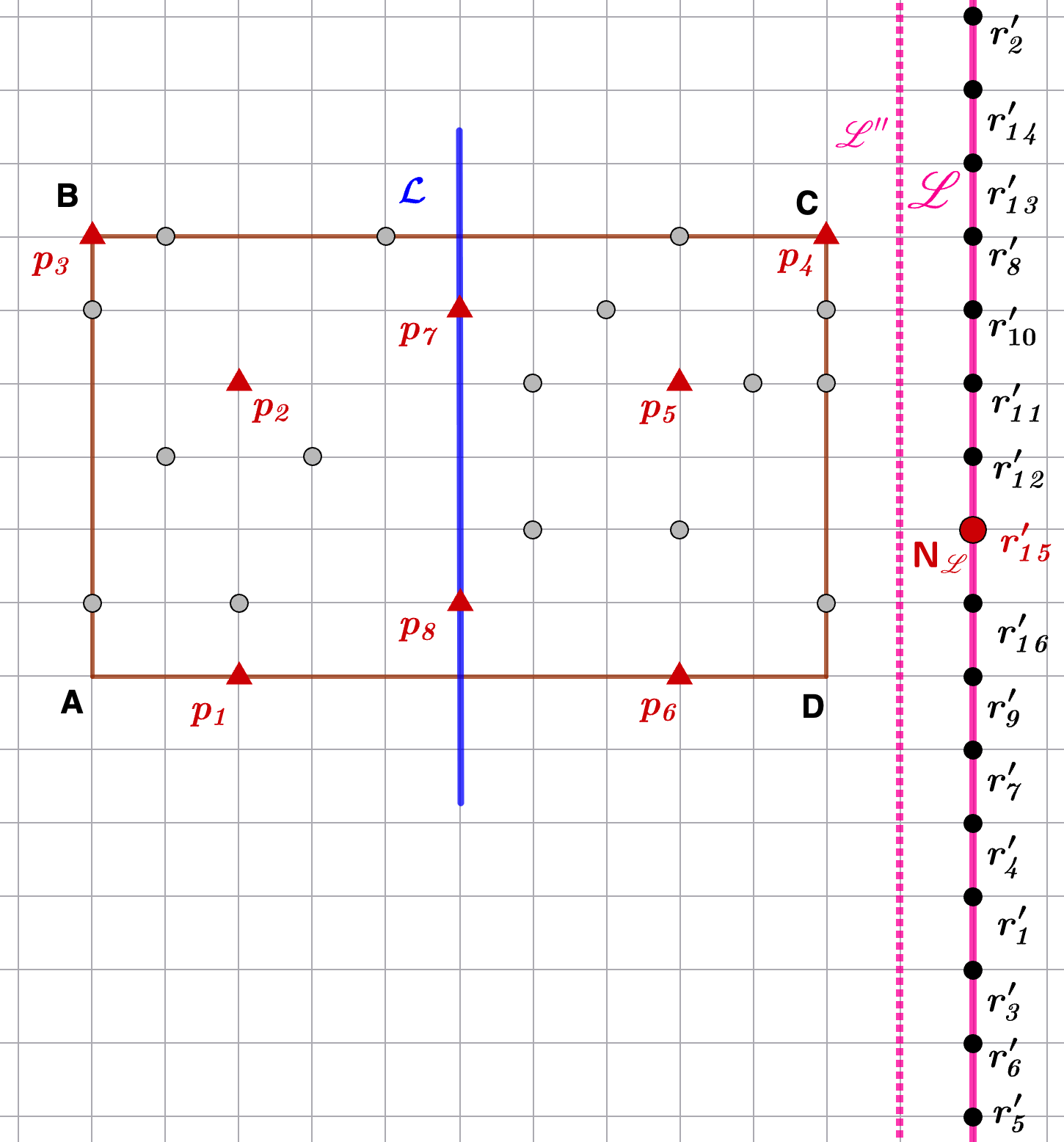}
        \caption{}
    \end{subfigure}

\caption{
Illustration of the multiplicity creation phase after line formation for the configuration $\mathcal C_{21}$: (a) all robots are positioned distinctly on the formation line $\mathscr L$ with \textit{terminal robots} $r_2$ and $r_5$; (b) the node $\mathbf N_{\mathscr L}$ is selected as the unique multiplicity node with two leading corners $p_3$ and $p_4$.
}
    \label{MultiCreatePhase}
\end{figure}

% \begin{figure}[htbp]
%     \centering

%     \begin{subfigure}[t]{0.35\textwidth}
%         \centering
%        \includegraphics[width=\textwidth]{TransitionOfLine_{I_21}.png}
%        \caption{}
%         \label{fig:Multi}
%     \end{subfigure}
%     \hfill
%     \begin{subfigure}[t]{0.35\textwidth}
%         \centering
%       \includegraphics[width=\textwidth]{MultiCreatePhase.png}
%         \caption{}
%         \label{}
%     \end{subfigure}

% \caption{
% Illustration of the multiplicity creation phase after line formation for the configuration $\mathcal C_{21}$: (a) all robots are positioned distinctly on the formation line $\mathscr L$ with \textit{terminal robots} $r_2$ and $r_5$; (b) the node $\mathbf N_{\mathscr L}$ is selected as the unique multiplicity node with two leading corners $p_3$ and $p_4$.
% }
%     \label{MultiCreatePhase}
% \end{figure}

\paragraph{Multiplicity Creation Phase:}
If $\mathcal C(t_0)\in \lbrace \mathcal C_{21},\mathcal C_{31} \rbrace$, then this phase is executed after the completion of the \textit{Line Formation Phase}. Otherwise, the algorithm proceeds directly to the next phase for all the other configurations. The purpose of this phase is to create a multiplicity node so that the robots can distinguish between the \textit{Line Formation Phase} and the \textit{Saturation Phase}. For configuration $\mathcal C_{21}$, let $[r_1,r_n]$ be the line segment on $\mathscr L$ induced by the robots, where $r_1$ and $r_n$ are the two terminal robots on $\mathscr L$. If $[r_1,r_n]$ contains an odd number of nodes, then its unique middle node is chosen as the multiplicity node. Otherwise, $[r_1,r_n]$ contains two middle nodes; the one closer to the leading corners of $\mathcal{MER}_{\mathcal P}$ is chosen as the multiplicity node (see Figure~\ref{MultiCreatePhase}(a)). Note that, as the configuration is asymmetric, there always exists a unique key corner. Let the multiplicity node be denoted by $\mathbf N_{\mathscr L}$ for the configuration $\mathcal C_{21}$ (see Figure~\ref{MultiCreatePhase}(b)). For the configuration $\mathcal C_{31}$, let $\mathbf N_{c}$ denote the intersection node of $\mathscr L$ and the line passing through $c$ that is perpendicular to $\mathscr L$. This node is selected as the multiplicity node. Observe that if any robot is located at the multiplicity node in either configuration $\mathcal C_{21}$ or $\mathcal C_{31}$, then it remains stationary. The movement of robots for each configuration is performed by the algorithm \textit{MoveToDestination()}. The detailed pseudocode for this phase is given in Algorithm~\ref{alg:multiplicity-creation}.

%\FloatBarrier

\begin{algorithm}[ht]
\tiny
\caption{\textsc{: MultiplicityCreation}$(\mathcal C(t))$}
\label{alg:multiplicity-creation}
\begin{algorithmic}[1]
\Require Current configuration $\mathcal C(t_0)$,
         robot set $\mathcal R(t)$,
         formation line $\mathscr L$,
         rectangle $\mathcal{MER}_{\mathcal P}$,
         ordering $\mathscr O_2$
\Ensure A unique multiplicity node is created on $\mathscr L$

\If{$\mathcal C(t_0)\notin\{\mathcal C_{21},\mathcal C_{31}\}$}
    \State \Return
\EndIf

\If{$\mathcal C(t_0)\in\mathcal C_{21}$}
    \State Let $r_1$ and $r_n$ be the two terminal robots on $\mathscr L$
    \State Let $[r_1,r_n]$ be the line segment on $\mathscr L$ induced by the robots
    \State Let $S$ denote the set of nodes on the segment $[r_1,r_n]$

    \If{$|S|$ is odd}
        \State Let $\mathbf N_{\mathscr L}$ be the unique middle node of $[r_1,r_n]$
    \Else
        \State Let $x$ and $y$ be the two middle nodes of $[r_1,r_n]$
        \State Let $K$ be the unique key corner of $\mathcal{MER}_{\mathcal P}$
        \If{$d_{\mathcal M}(x,K)<d_{\mathcal M}(y,K)$}
            \State $\mathbf N_{\mathscr L}\gets x$
        \Else
            \State $\mathbf N_{\mathscr L}\gets y$
        \EndIf
    \EndIf
    \State $\mathbf N\gets \mathbf N_{\mathscr L}$

\ElsIf{$\mathcal C(t_0)\in\mathcal C_{31}$}
    \State Let $c$ be the center of rotational symmetry of $\mathcal P$
    \State Let $l_c$ be the line passing through $c$ and perpendicular to $\mathscr L$
    \State Let $\mathbf N_c$ be the intersection node of $l_c$ and $\mathscr L$
    \State $\mathbf N\gets \mathbf N_c$
\EndIf

\While{$|\mathcal R(t)\cap\{\mathbf N\}|<2$}
    \State Let $\mathcal R_u(t)=\mathcal R(t)\setminus(\mathcal R(t)\cap\{\mathbf N\})$
    \State Select a robot $r\in\mathcal R_u(t)$ having minimum value of $d_{\mathcal M}(r,\mathbf N)$
    \State Break ties using the ordering $\mathscr O_2$
    \State Robot $r$ moves toward $\mathbf N$ using \textsc{MoveToDestination}$(r,\mathbf N)$
    \State Every robot already located at $\mathbf N$ remains stationary
\EndWhile

\State \Return

\end{algorithmic}
\end{algorithm}

%\FloatBarrier

\paragraph{Saturation Phase:}
\label{satu}
If $\mathcal C(t_0)\in\{\mathcal C_{21},\mathcal C_{31}\}$, then this phase is executed after the completion of the \textit{Multiplicity Phase}. If $\mathcal C(t_0)\in\{\mathcal C_{222},\mathcal C_{322}\}$, then this phase is executed after the completion of the \textit{Gathering Phase}. Otherwise, this phase is executed as the first phase of the algorithm \textbf{\textsc{spg()}}. Once a multiplicity node or a gathering node is created, no robot located at that node is allowed to leave it. Since the robots are equipped with global strong multiplicity detection, each robot can determine whether it is located at the multiplicity node or the gathering node. Hence, such robots remain stationary in all subsequent phases. Since the robots have global visibility, they can distinguish among the different configuration classes. Note that the movement of robots for each configuration is performed by the algorithm \textit{MoveToDestination()}. The detailed description of this phase is as follows.

%------------------------------------------------
% Common helper routine for saturation
%------------------------------------------------

%\FloatBarrier

\begin{algorithm}[ht]
\tiny
\caption{\textsc{: SaturateParkingNode}$(p,\mathscr O)$}
\label{alg:saturate-one-node}
\begin{algorithmic}[1]
\Require An unsaturated parking node $p$ and a tie-breaking ordering $\mathscr O$
\Ensure The parking node $p$ becomes saturated

\While{$|\mathcal R(t)\cap\{p\}|<\kappa$}
\State Let $\mathcal R_u(t)$ be the set of unsaturated robots
\State Order the robots in $\mathcal R_u(t)$ increasingly according to $d_{\mathcal M}(r,p)$
\State Break ties using the ordering $\mathscr O$
\State Let $r$ be the first robot in the resulting order
\State Robot $r$ moves toward $p$ using \textsc{MoveToDestination}$(r,p)$
\State All other robots remain stationary during this move
\EndWhile

\State \Return

\end{algorithmic}
\end{algorithm}

%\FloatBarrier

\begin{itemize}

\item \textbf{$\mathcal C_{1}$ Configuration:} 
Since the set of parking nodes $\mathcal P$ is asymmetric, all the parking nodes in $\mathcal P$ can be uniquely ordered. The robots in $\mathcal R$ first establish a common ordering $\mathscr O_1$ of the unsaturated parking nodes. They saturate these parking nodes sequentially according to the ordering $\mathscr O_1$, that is, $p_1$ is saturated first, followed by $p_2$, and so on. The procedure for saturating a single unsaturated parking node is formally described in Algorithm~\ref{alg:saturate-one-node}. Let $\mathcal R_u(t)=\{r_1(t),r_2(t),\ldots,r_m(t)\}\subset \mathcal R(t)$ denote the set of unsaturated robot positions, and let $p$ be the selected unsaturated parking node with capacity $\kappa$. The robots in $\mathcal R_u(t)$ are ordered such that $d_{\mathcal M}(r_1(t),p)\le d_{\mathcal M}(r_2(t),p)\le \cdots \le d_{\mathcal M}(r_m(t),p)$, where $d_{\mathcal M}(r_i,p)$ denotes the Manhattan distance between the robot $r_i$ and the parking node $p$. Note that a robot selected to saturate an unsaturated parking node must not occupy any saturated parking node. The set of selected robots is defined as $\mathcal R_p^*=\{r_1,r_2,\ldots,r_{\kappa}\}$. Thus, the first $\kappa$ robots in this ordering move toward $p$ in a sequential order. If multiple robots have the same distance from $p$, the ties are broken using a fixed ordering~\cite{bose2020arbitrary}, depending on the configuration of the robot positions. Since the capacities of the parking nodes are known, the \textit{strong multiplicity detection} capability enables every robot to determine when the occupancy of a parking node reaches its prescribed capacity. The unsaturated parking nodes are saturated sequentially in order to avoid collisions among robots and prevent the formation of multiple multiplicity nodes. Note that if a robot is already a saturated robot, then it will never become unsaturated.

%------------------------------------------------
% C_1 Configuration
%------------------------------------------------

 %\FloatBarrier

\begin{algorithm}[ht]
\tiny
\caption{\textsc{: SaturationPhaseC1}$(\mathcal C(t))$}
\label{alg:saturation-c1}
\begin{algorithmic}[1]
\Require Current configuration $\mathcal C(t_0)$, parking node set $\mathcal P$
\Ensure All parking nodes are saturated

\If{$\mathcal C(t_0)\notin\mathcal C_1$}
\State \Return
\EndIf

\State Construct the common ordering
$\mathscr O_1=(p_1,p_2,\ldots,p_m)$
of the unsaturated parking nodes in $\mathcal P$

\For{$i=1$ to $m$}
    \State \textsc{SaturateParkingNode}$(p_i,\mathscr O_1)$
\EndFor

\State \Return

\end{algorithmic}
\end{algorithm}

%\FloatBarrier

\item \textbf{$\mathcal{C}_{21}$ Configuration:} 
The parking-node configuration $\mathcal{P}$ admits a unique line of symmetry $\mathcal{L}$, whereas the robot configuration $\mathcal{R}(t_0)$ is asymmetric. Let $\mathcal{P^S}$ and $\mathcal{P^U}$ denote the sets of saturated and unsaturated parking nodes, respectively, where the set of parking nodes is $\mathcal{P} = \mathcal{P^S} \cup \mathcal{P^U}$. Further, let $\mathcal{P}_{\mathcal{L}} \subseteq \mathcal{P^U}$ denote the set of unsaturated parking nodes lying on $\mathcal{L}$ and let $\mathcal{P^U} \setminus \mathcal{P}_{\mathcal{L}}$ denote the set of unsaturated parking nodes lying outside $\mathcal{L}$. (see Figure~\ref{saturationphase})

% \begin{figure}[htbp]
%     \centering

%     \begin{subfigure}[t]{0.35\textwidth}
%         \centering
%        \includegraphics[width=\textwidth]{SaturationPha.png}
%        \caption{}
%     \end{subfigure}
%     \hfill
%     \begin{subfigure}[t]{0.35\textwidth}
%         \centering
%       \includegraphics[width=\textwidth]{saturation2.png}
%         \caption{}
%     \end{subfigure}

% \caption{
% Illustration of the saturation process for configuration $\mathcal C_{21}$: 
% (a) creation of the unique multiplicity node at $\mathbf N_{\mathscr L}$; 
% (b) since $\mathcal P_{\mathcal L}=\{p_7,p_8\}\subseteq \mathcal {P^U}$, 
% the parking node $p_7$ is saturated with capacity $\kappa_7=4$.
% }
%     \label{saturationphase}
% \end{figure}

\begin{figure}[htbp]
    \centering

    \begin{subfigure}[t]{0.35\textwidth}
        \centering
        \includegraphics[width=\textwidth]{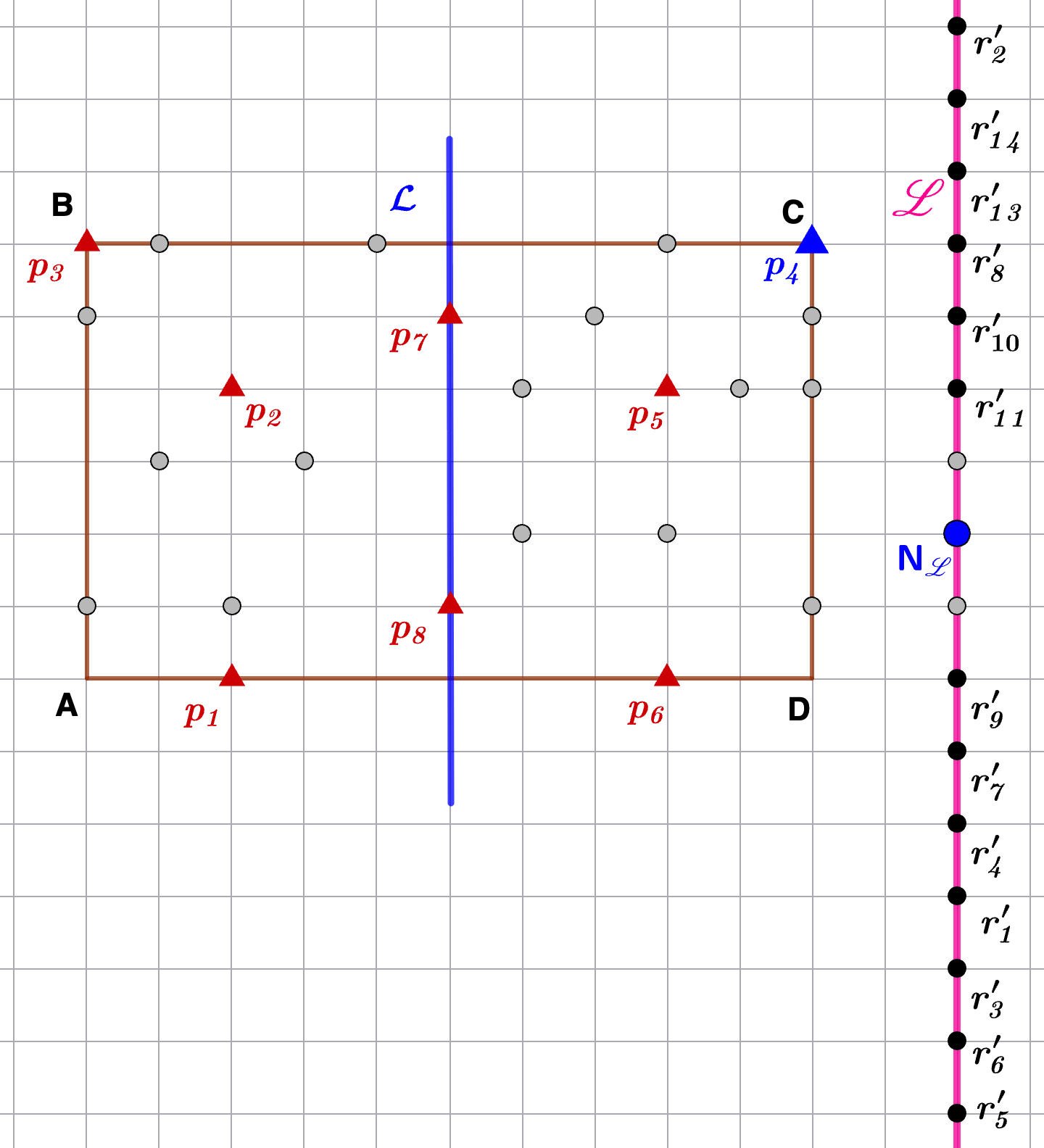}
        \caption{}
    \end{subfigure}%
    \hspace{0.05\textwidth}%
    \begin{subfigure}[t]{0.35\textwidth}
        \centering
        \includegraphics[width=\textwidth]{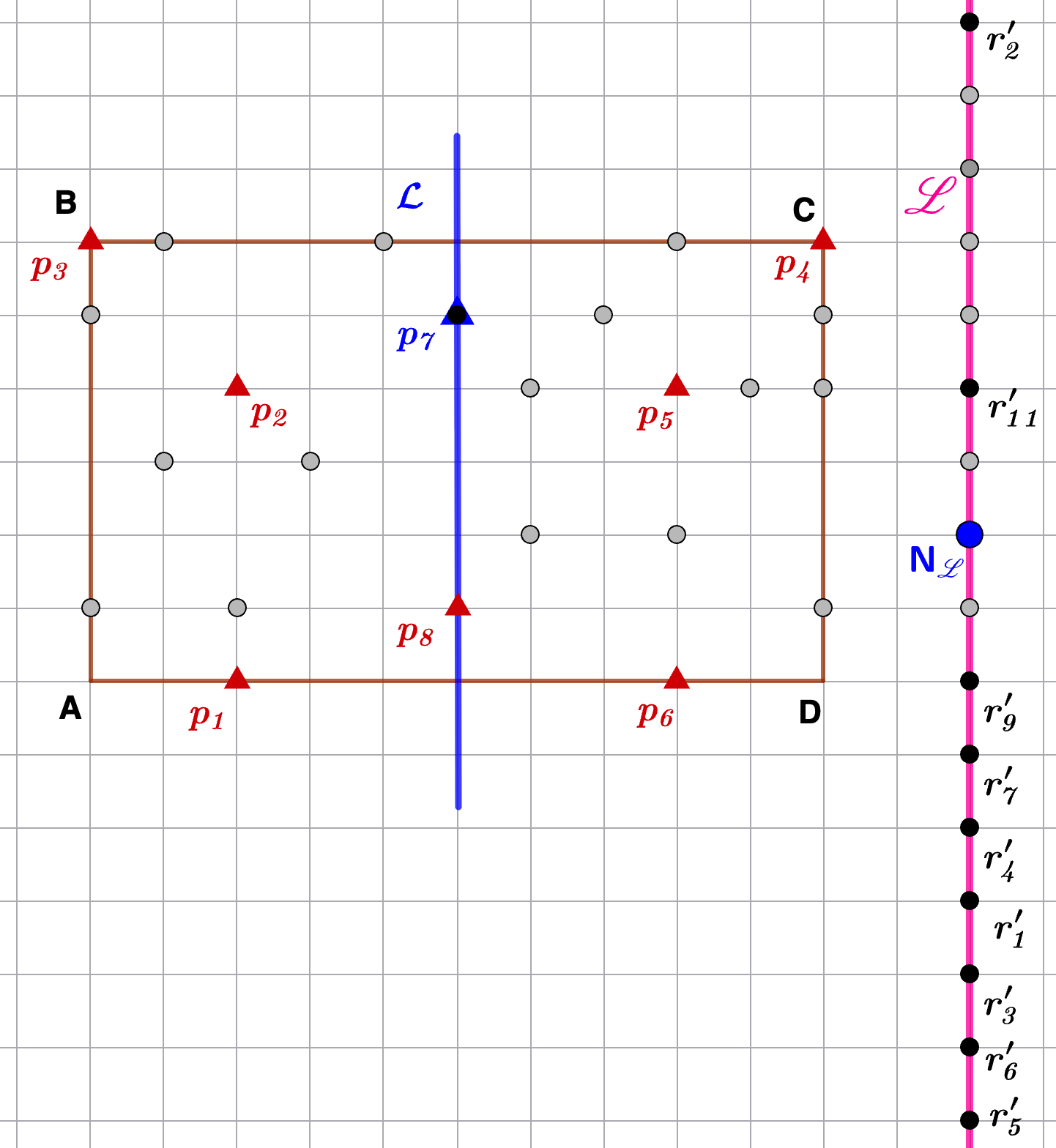}
        \caption{}
    \end{subfigure}

\caption{
Illustration of the saturation process for configuration $\mathcal C_{21}$: 
(a) creation of the unique multiplicity node at $\mathbf N_{\mathscr L}$; 
(b) since $\mathcal P_{\mathcal L}=\{p_7,p_8\}\subseteq \mathcal {P^U}$, 
the parking node $p_7$ is saturated with capacity $\kappa_7=4$.
}
\label{saturationphase}
\end{figure}

The saturation phase is executed in two stages. In the first stage, the parking nodes belonging to $\mathcal{P}_{\mathcal{L}}$ are saturated. Once all these parking nodes have reached their prescribed capacities, the algorithm proceeds to saturate the parking nodes in $\mathcal{P^U} \setminus \mathcal{P}_{\mathcal{L}}$.

For the first stage, let $p_{\mathcal{L}}$ be the unsaturated parking node on $\mathcal{L}$ selected according to the ordering $\mathscr{O}_3$, and let its capacity be $\kappa$. Let $\mathcal{R}_u(t) = \{r_1(t), r_2(t), \ldots, r_{\kappa}(t)\}$ denote the set of unsaturated robots on $\mathscr L$. The robots are ordered in nondecreasing order of their Manhattan distances from $p_{\mathcal{L}}$, i.e.,
$d_{\mathcal{M}}(r_1, p_{\mathcal{L}}) \le d_{\mathcal{M}}(r_2, p_{\mathcal{L}}) \le \cdots \le d_{\mathcal{M}}(r_{\kappa}, p_{\mathcal{L}}).$
Ties are resolved using the ordering $\mathscr{O}_2$. The first $\kappa$ robots in this ordering are selected to saturate $p_{\mathcal{L}}$. As the robots have strong multiplicity detection capability, every robot can determine when $p_{\mathcal{L}}$ has reached its prescribed capacity.

After all parking nodes on $\mathcal{L}$ have been saturated, consider the remaining unsaturated parking nodes
$\mathcal{P^U} \setminus \mathcal{P}_{\mathcal{L}} = \{p_1, p_2, \ldots, p_d\},$
ordered in nondecreasing order of their Manhattan distances from the multiplicity node $\mathbf{N}_{\mathscr{L}}$, with ties broken according to $\mathscr{O}_2$. The parking nodes are then saturated sequentially following this order. 

For each selected parking node $p_i$ with capacity $\kappa$, the unsaturated robots are ordered in non-decreasing order of their Manhattan distances from $p_i$, with ties broken according to $\mathscr{O}_2$. The first $\kappa$ robots in the resulting order are assigned to $p_i$. Again, strong multiplicity detection capability enables the robots to identify when the parking node has become saturated, after which the algorithm proceeds to the next parking node in the sequence. The detailed pseudocode for this phase is given in Algorithm~\ref{alg:saturation-c1}.

The parking nodes are saturated one at a time to ensure collision-free movements and to prevent the creation of unintended multiplicity nodes during the execution.

% %\FloatBarrier

\begin{algorithm}[ht]
\tiny
\caption{\textsc{: SaturationPhaseC21C31}$(\mathcal C(t))$}
\label{alg:saturation-c21-c31}
\begin{algorithmic}[1]
\Require Current configuration $\mathcal C(t_0)\in\{\mathcal C_{21},\mathcal C_{31}\}$
\Ensure All parking nodes are saturated

\If{$\mathcal C(t_0)\notin\{\mathcal C_{21},\mathcal C_{31}\}$}
\State \Return
\EndIf

\State Let $\mathcal {P^U}$ be the set of unsaturated parking nodes

\If{$\mathcal C(t_0)\in\mathcal C_{21}$}
\State Let $\mathcal L$ be the unique line of symmetry of $\mathcal P$
\State Let $\mathbf N=\mathbf N_{\mathscr L}$
\State Let $\mathcal P_{\mathcal L}=\lbrace p\in\mathcal {P^U}:p\in\mathcal L \rbrace$

\While{$\mathcal P_{\mathcal L}\neq\emptyset$}
    \State Select a parking node $p_{\mathcal L}\in\mathcal P_{\mathcal L}$ using the ordering $\mathscr O_3$
    \State \textsc{SaturateParkingNode}$(p_{\mathcal L},\mathscr O_2)$
    \State Update $\mathcal {P^U}$ and $\mathcal P_{\mathcal L}$
\EndWhile

\State Let $\mathcal P_{\mathrm{out}}=\mathcal {P^U}\setminus\mathcal P_{\mathcal L}$

\ElsIf{$\mathcal C(t_0)\in\mathcal C_{31}$}
\State Let $c$ be the center of rotational symmetry of $\mathcal P$
\State Let $\mathbf N=\mathbf N_c\in \mathscr L$

\If{there exists an unsaturated parking node $\mathbf p_c$ located at $c$}
    \State \textsc{SaturateParkingNode}$(\mathbf p_c,\mathscr O_2)$
    \State Update $\mathcal {P^U}$
    \State Let $\mathcal P_{\mathrm{out}}=\mathcal {P^U}\setminus\{\mathbf p_c\}$
\Else
    \State Let $\mathcal P_{\mathrm{out}}=\mathcal {P^U}$
\EndIf

\EndIf

\While{$\mathcal P_{\mathrm{out}}\neq\emptyset$}
    \State Select a farthest parking node $\mathbf p_d\in\mathcal P_{\mathrm{out}}$ from $\mathbf N$
    \State Break ties using the ordering $\mathscr O_2$
    \State \textsc{SaturateParkingNode}$(\mathbf p_d,\mathscr O_2)$
    \State Update $\mathcal {P^U}$
    \State $\mathcal P_{\mathrm{out}}\gets\mathcal {P^U}\cap\mathcal P_{\mathrm{out}}$
\EndWhile

\State \Return

\end{algorithmic}
\end{algorithm}

% %\FloatBarrier

\item \textbf{$\mathcal C_{31}$ Configuration:}
The parking-node configuration $\mathcal P$ admits rotational symmetry with center of rotation $c$, while the robot configuration $\mathcal R(t_0)$ is asymmetric. Let $\mathcal P=\mathcal {P^S}\cup\mathcal {P^U}$, where $\mathcal {P^S}$ and $\mathcal {P^U}$ denote the sets of saturated and unsaturated parking nodes, respectively. Let $\mathbf p_c\in\mathcal {P^U}$ be the unique unsaturated parking node located at the center of rotation $c$, and let $\mathcal {P^U}\setminus\{\mathbf p_c\}$ denote the set of all remaining unsaturated parking nodes. The \textit{Saturation Phase} is carried out in two stages. In the first stage, the parking node $\mathbf p_c$ is saturated. Let $\mathcal R_u(t)=\{r_1(t),r_2(t),\ldots,r_m(t)\}\subseteq\mathcal R(t)$ denote the set of unsaturated robots at time $t$. The robots are ordered according to their Manhattan distance from $\mathbf p_c$, namely,
$
d_{\mathcal M}(r_1,\mathbf p_c)\le d_{\mathcal M}(r_2,\mathbf p_c)\le\cdots\le d_{\mathcal M}(r_m,\mathbf p_c).
$
Suppose that $\mathbf p_c$ has capacity $\kappa$. Then the first $\kappa$ robots in this ordering,
$
\mathcal R_p^*=\{r_1,r_2,\ldots,r_\kappa\},
$
are selected to saturate $\mathbf p_c$ and move toward it. If multiple robots are equidistant from $\mathbf p_c$, ties are resolved using the fixed ordering $\mathscr O_2$ defined in~\cite{bose2020arbitrary}. By means of strong multiplicity detection, the robots can determine when $\mathbf p_c$ becomes saturated.

Once $\mathbf p_c$ is saturated, the algorithm then proceeds to saturate the remaining parking nodes in $\mathcal {P^U}\setminus\{\mathbf p_c\}$. Let
$
\mathcal {P^U}\setminus\{\mathbf p_c\}=\{\mathbf p_1,\mathbf p_2,\ldots,\mathbf p_d\},
$
where the parking nodes are ordered in non-decreasing order of their Manhattan distance from $\mathbf N_c$, that is,
$
d_{\mathcal M}(\mathbf p_1,\mathbf N_c)\le d_{\mathcal M}(\mathbf p_2,\mathbf N_c)\le\cdots\le d_{\mathcal M}(\mathbf p_d,\mathbf N_c).
$
The farthest parking node, $\mathbf p_d$, is selected first for saturation; ties are broken according to the ordering $\mathscr O_2$. The unsaturated robots are ordered according to their Manhattan distance from $\mathbf p_d$, and the first $\kappa$ robots on $\mathscr L$ in this ordering, where $\kappa$ is the capacity of $\mathbf p_d$, are selected to saturate it. Ties are again resolved using the ordering $\mathscr O_2$. After $\mathbf p_d$ becomes saturated, the remaining unsaturated parking nodes are saturated sequentially by repeatedly applying the same procedure until every parking node reaches its prescribed capacity. This sequential saturation strategy prevents collisions among robots and avoids the creation of multiple multiplicity nodes during execution. The detailed execution of the saturation phase for configurations $\mathcal C_{21}$ and $\mathcal C_{31}$ is given in 
Algorithm~\ref{alg:saturation-c21-c31}.

\begin{algorithm}[ht]
\tiny
\caption{\textsc{: SaturationPhaseC221C321}$(\mathcal C(t))$}
\label{alg:saturation-c221-c321}
\begin{algorithmic}[1]
\Require Initial configuration $\mathcal C(t_0)\in\{\mathcal C_{221},\mathcal C_{321}\}$
\Ensure The symmetry is broken and all parking nodes are saturated

\If{$\mathcal C(t_0)\notin\{\mathcal C_{221},\mathcal C_{321}\}$}
    \State \Return
\EndIf

\If{$\mathcal C(t_0)\in\mathcal C_{221}$}

    \State Let $\mathcal L$ be the unique line of symmetry of $\mathcal P$
    \State Let $r_{e_1}$ and $r_{e_2}$ be the two terminal robots on $\mathcal L$

    \If{both $r_{e_1}$ and $r_{e_2}$ lie strictly inside $\mathcal{MER}_{\mathcal P}$}
        \State Let $u$ and $v$ be the intersection nodes of $\mathcal L$ with the two sides of $\mathcal{MER}_{\mathcal P}$ perpendicular to $\mathcal L$
        \State Move $r_{e_1}$ along $\mathcal L$ to the node one hop beyond $u$
        \State Move $r_{e_2}$ along $\mathcal L$ to the node one hop beyond $v$
        \State Select one terminal robot, say $r_{e_1}$, using the ordering $\mathscr O_3$
    \Else
        \State Select a terminal robot lying on or outside $\mathcal{MER}_{\mathcal P}$, say $r_{e_1}$, using the ordering $\mathscr O_3$
        \State Move $r_{e_1}$ one hop along $\mathcal L$ away from the nearest side of $\mathcal{MER}_{\mathcal P}$
    \EndIf

    \State Select an adjacent free node $g$ such that $g\notin\mathcal L$

    \If{$r_{e_1}$ is in a pending move state}
        \State Execute \textsc{AllowtoMove}$(r_{e_1})$
    \Else
        \State Move $r_{e_1}$ to $g$
    \EndIf

    \If{$\mathcal C(t)$ is asymmetric}
        \State Execute \textsc{SaturationPhaseC21C31}$(\mathcal C(t))$
    \EndIf

\ElsIf{$\mathcal C(t_0)\in\mathcal C_{321}$}

    \State Let $c$ be the center of rotational symmetry of $\mathcal P$
    \State Let $r_c$ be the robot located at $c$
    \State Designate $r_c$ as the symmetry-breaking robot
    \State Execute \textsc{AllowtoMove}$(r_c)$

    \If{$\mathcal C(t)$ is asymmetric}
        \State Execute \textsc{SaturationPhaseC21C31}$(\mathcal C(t))$
    \EndIf

\EndIf

\State \Return

\end{algorithmic}
\end{algorithm}

\item \textbf{$\mathcal C_{221}$ Configuration:}
Consider the set of robots lying on the line of symmetry $\mathcal L$. Let $r_{e_1}$ and $r_{e_2}$ denote the two \textit{terminal} robots, i.e., the robots located at the endpoints of the line segment formed by all robots on $\mathcal L$. One of these terminal robots is eventually selected to perform the symmetry-breaking. If both $r_{e_1}$ and $r_{e_2}$ lie strictly inside $\mathcal{MER}_{\mathcal P}$, then each robot moves along $\mathcal L$ toward the nearest side of $\mathcal{MER}_{\mathcal P}$ whose supporting line is perpendicular to $\mathcal L$. Let $u$ and $v$ denote the intersection nodes of $\mathcal L$ with these two sides. The robots continue moving until they reach the nodes on $\mathcal L$ located one hop beyond $u$ and $v$, respectively. If a terminal robot already lies on a side of $\mathcal{MER}_{\mathcal P}$, then it moves along $\mathcal L$ to the node located one hop beyond its current position. After these movements are completed, one of the terminal robots is uniquely selected using the ordering $\mathscr O_3$. Assume that without loss of generality that the selected robot is $r_{e_1}$. The selected robot $r_{e_1}$ then moves to an adjacent free node $g$ such that $g\notin\mathcal L$, thereby breaking the symmetry of the configuration.

If at least one terminal robot lies on the boundary of $\mathcal{MER}_{\mathcal P}$ or outside $\mathcal{MER}_{\mathcal P}$, then such a terminal robot is selected for symmetry breaking. Ties are broken accordingly. The selected robot moves along $\mathcal L$, away from the corresponding side of $\mathcal{MER}_{\mathcal P}$, by one hop. Next, it moves to an adjacent free node $g$ such that $g\notin\mathcal L$, thereby breaking the symmetry of the configuration. If the selected robot is in a \textit{pending move} state, the symmetry of the configuration is, however, broken during the execution of the procedure \textit{AllowtoMove()}~\cite{chakraborty2024gathering}. Once the symmetry is broken and the configuration becomes asymmetric, the algorithm proceeds to the \textit{Saturation Phase} as described for the configuration $\mathcal C_{21}$.

\item \textbf{$\mathcal C_{321}$ Configuration:}  
In this configuration, one robot, say $r_c$, is located at the center of rotation $c$. This robot $r_c$ is eventually selected to perform the symmetry-breaking operation. The procedure of the algorithm \textit{AllowtoMove()}~\cite{chakraborty2024gathering} is used to transform the initial symmetric configuration $\mathcal C(t_0)$ into an asymmetric configuration. Also, if the selected robot is in a \textit{pending move} state, the symmetry of the configuration is, however, broken during the execution of the procedure \textit{AllowtoMove()}~\cite{chakraborty2024gathering}. Once the initial configuration $\mathcal C(t_0)$ becomes asymmetric, the \textit{Saturation Phase} is executed in the same manner as for the configuration $\mathcal C_{31}$, where no parking node is located at $c$. Algorithm~\ref{alg:saturation-c221-c321} describes the saturation phase for configurations $\mathcal C_{221}$ and $\mathcal C_{321}$, where the symmetry is first broken by a uniquely selected robot and the resulting asymmetric configuration is then handled by 
\textsc{SaturationPhaseC21C31}.

% %\FloatBarrier

% %\FloatBarrier

\item \textbf{$\mathcal C_{222}$ Configuration:}
In this configuration, both $\mathcal P$ and $\mathcal R$ admit the same unique line of symmetry $\mathcal L$, and initially no parking node or robot position lies on $\mathcal L$. In this configuration, the \textit{Gathering Phase} is executed first in order to preserve the symmetry of robot movements so that the configuration $\mathcal C(t)$ remains symmetric throughout the execution. If the configuration $\mathcal C(t)$ becomes asymmetric at any instant of time, collisions may occur, which may lead to an unsolvable configuration due to the creation of multiple multiplicity nodes. After completing the \textit{Gathering Phase}, the \textit{Saturation Phase} starts. At the end of the \textit{Gathering Phase}, all the surplus robots are gathered at the node $\mathscr G$ located on $\mathcal L$. 

%------------------------------------------------
% C_222 Configuration
%------------------------------------------------

% %\FloatBarrier

\begin{algorithm}[ht]
\tiny
\caption{\textsc{: SaturationPhaseC222}$(\mathcal C(t))$}
\label{alg:saturation-c222}
\begin{algorithmic}[1]
\Require Symmetric configuration $\mathcal C(t_0)\in\mathcal C_{222}$, line of symmetry $\mathcal L$, gathering node $\mathscr G$
\Ensure All parking nodes are saturated while preserving reflection symmetry

\If{$\mathcal C(t_0)\notin\mathcal C_{222}$}
    \State \Return
\EndIf

\While{there exists an unsaturated parking node}

    \State Let $\mathcal {P^U}$ be the set of unsaturated parking nodes
    \State Partition $\mathcal {P^U}$ into symmetric pairs $(p,\mu(p))$ with respect to $\mathcal L$
    \State For each representative $p_i$, compute $d_i=d_{\mathcal M}(p_i,\mathscr G)$
    \State Let $d_{\max}=\max_{p_i\in\mathcal P_1} d_i$
    \State Select a symmetric parking-node pair $(p_i,\mu(p_i))$ attaining $d_{\max}$ using the ordering $\mathscr O_4$
    \State Let $\kappa=\kappa(p_i)=\kappa(\mu(p_i))$

    \While{$|\mathcal R(t)\cap\{p_i\}|<\kappa$ \textbf{or} $|\mathcal R(t)\cap\{\mu(p_i)\}|<\kappa$}

        \State Let $\mathcal R_u(t)$ be the set of unsaturated robots
        \State Partition $\mathcal R_u(t)$ into symmetric pairs $(r,\mu(r))$ with respect to $\mathcal L$
        \State For each representative $r_j$, compute $D_j=d_{\mathcal M}(r_j,p_i)$
        \State Let $D_{\min}=\min D_j$
        \State Select a symmetric robot pair $(r_j,\mu(r_j))$ attaining $D_{\min}$ using the ordering $\mathscr O_4$

        \If{one robot of the selected pair is in a pending move state}
             \State {Execute Algorithm~\ref{alg:handle-pending-move} for the pending robot until reflection symmetry is restored}
        \Else
            \State Execute \textsc{MoveToDestination}$(r_j,p_i)$
            \State Execute \textsc{MoveToDestination}$(\mu(r_j),\mu(p_i))$
        \EndIf

    \EndWhile

\EndWhile

\State \Return

\end{algorithmic}
\end{algorithm}

% %\FloatBarrier

Let $\mathcal {P^U}$ denote the set of unsaturated parking nodes in the infinite grid $(V,E)$. Initially, $\mathcal {P^U}=\mathcal P$; hence, $|\mathcal {P^U}|=m$. Since the configuration $\mathcal C(t)\in\mathcal C_{222}$ is symmetric with respect to the line of symmetry $\mathcal L$, and no parking node lies on $\mathcal L$, the set $\mathcal {P^U}$ can be partitioned into symmetric pairs,
$
\mathcal {P^U}=\mathcal P_1\cup\mu(\mathcal P_1),
$
where
$
\mathcal P_1=\{p_1,p_2,\ldots,p_{|\mathcal {P^U}|/2}\},
$
and $\mu(p_i)$ denotes the reflection of $p_i$ with respect to $\mathcal L$.

For each representative parking node $p_i\in\mathcal P_1$, let
$
d_i=d_{\mathcal M}(p_i,\mathscr G)
$
denote its Manhattan distance from the multiplicity node $\mathscr G$, where $\mathscr G\in\mathcal L$. Since reflection preserves the Manhattan distance,
$
d_{\mathcal M}(p_i,\mathscr G)
=
d_{\mathcal M}(\mu(p_i),\mathscr G),
$
each symmetric parking-node pair $(p_i,\mu(p_i))$ is uniquely associated with the common distance $d_i$. Let
$
d_{\max}=\max_{p_i\in\mathcal P_1} d_i.
$
Among all symmetric parking-node pairs attaining the distance $d_{\max}$, the ordering $\mathscr O_4$ uniquely selects one pair, say $(p_i,\mu(p_i))$. Let
$
\kappa=\kappa(p_i)=\kappa(\mu(p_i))
$
denote their common capacity.

Let $\mathcal R_u(t)\subseteq\mathcal R(t)$ denote the set of unsaturated robots. Since $\mathcal C(t)$ remains symmetric with respect to $\mathcal L$, the robots in $\mathcal R_u(t)$ also occur in symmetric pairs. Accordingly,
$
\mathcal R_u(t)=\mathcal R_1(t)\cup\mu(\mathcal R_1(t)),
$
where
$
\mathcal R_1(t)=\{r_1,r_2,\ldots,r_{|\mathcal R_u(t)|/2}\}.
$
For each representative robot $r_j\in\mathcal R_1(t)$, compute
$
D_j=d_{\mathcal M}(r_j,p_i).
$
By reflection symmetry,
$
d_{\mathcal M}(r_j,p_i)
=
d_{\mathcal M}(\mu(r_j),\mu(p_i)).
$
Let
$
D_{\min}=\min_{r_j\in\mathcal R_1(t)}D_j.
$
Among all symmetric robot pairs attaining this minimum distance, the ordering $\mathscr O_4$ uniquely selects one pair, say $(r_j,\mu(r_j))$.

If one robot of the selected pair is in a \emph{pending move} state, the procedure in Algorithm~\ref{alg:handle-pending-move} is executed for the pending robot while its symmetric partner and all other robots remain stationary. Once the pending movement is completed, the reflection symmetry of the configuration is restored. Otherwise, the robots $r_j$ and $\mu(r_j)$ simultaneously move toward the parking nodes $p_i$ and $\mu(p_i)$, respectively, using the procedure \textsc{MoveToDestination}. This process is repeated until both parking nodes reach their prescribed capacity $\kappa$.

After the selected parking-node pair becomes saturated, it is removed from $\mathcal {P^U}$. The algorithm continues by repeating these steps. It uses the $\mathscr{O}_4$ ordering to find the next pair of symmetric parking nodes that are not yet saturated, specifically considering the pair with the largest Manhattan distance from the multiplicity node $\mathscr{G}$. The process continues until every parking node is saturated. Throughout the execution, the reflection symmetry of the configuration is preserved, and strong multiplicity detection enables the robots to determine when a parking node has reached its prescribed capacity. Algorithm~\ref{alg:saturation-c222} describes the saturation phase for configuration $\mathcal C_{222}$ while preserving reflection symmetry.

\item \textbf{$\mathcal C_{322}$ Configuration:}
In this configuration, both the parking-node configuration $\mathcal P$ and the robot configuration $\mathcal R$ admit rotational symmetry with center of rotation $c$, and initially neither a parking node nor a robot occupies $c$. The \textit{Gathering Phase} is executed first to preserve the rotational symmetry of robot movements throughout the execution. If the configuration $\mathcal C(t)$ becomes asymmetric at any time, collisions may occur, leading to the creation of multiple multiplicity nodes and exhibit the problem unsolvable. Let $\mathcal {P^U}$ denote the set of unsaturated parking nodes in the infinite grid $(V,E)$. Initially,
$
\mathcal {P^U}=\mathcal P,
$
and hence $|\mathcal {P^U}|=m$. Since the configuration $\mathcal C(t)\in\mathcal C_{322}$ is rotationally symmetric with respect to the center of rotation $c$, the set $\mathcal {P^U}$ is partitioned into disjoint rotational orbits under the rotational map $\rho$. Let
$
\mathcal P_1=
\{p_1,p_2,\ldots,p_{|\mathcal {P^U}|/q}\}
$
be a set containing exactly one representative parking node from each rotational orbit. Then,
$
\mathcal {P^U}
=
\bigcup_{p_i\in\mathcal P_1}
\mathbf{Orbit}_{p_i},
$
where
$
\mathbf{Orbit}_{p_i}
=
\{p_i,\rho(p_i),\rho^2(p_i),\ldots,\rho^{q-1}(p_i)\}.
$

For each representative parking node $p_i\in\mathcal P_1$, let
$
d_i=d_{\mathcal M}(p_i,c)
$
denote its Manhattan distance from the center of rotation $c$. Since the rotational map $\rho$ preserves the Manhattan distance,
$
d_{\mathcal M}(p_i,c)
=
d_{\mathcal M}(\rho^j(p_i),c),$$
0\le j\le q-1.
$
Hence, every parking node in the orbit $\mathbf{Orbit}_{p_i}$ is associated with the common distance $d_i$. Let
$
d_{\max}
=
\max_{p_i\in\mathcal P_1} d_i.
$
Among all parking-node orbits attaining the maximum distance $d_{\max}$, the ordering $\mathscr O_4$ uniquely selects one orbit, denoted by $\mathbf{Orbit}_{p_i}$. Let
$
\kappa=\kappa(p_i)
$
denote the common capacity of every parking node in the selected orbit.

%------------------------------------------------
% C_322 Configuration
%------------------------------------------------

% %\FloatBarrier

\begin{algorithm}[ht]
\tiny
\caption{\textsc{: SaturationPhaseC322}$(\mathcal C(t_0))$}
\label{alg:saturation-c322}
\begin{algorithmic}[1]
\Require Initial configuration $\mathcal C(t_0)\in\mathcal C_{322}$, center of rotation $c$, rotational map $\rho$
\Ensure All parking nodes are saturated while preserving rotational symmetry

\If{$\mathcal C(t_0)\notin\mathcal C_{322}$}
    \State \Return
\EndIf

\While{there exists an unsaturated parking node}

    \State Let $\mathcal {P^U}$ be the set of unsaturated parking nodes
    \State Partition $\mathcal {P^U}$ into rotational orbits under $\rho$
    \State Let $\mathcal P_1$ be a set of representatives of the parking-node orbits
    \State For each $p_i\in\mathcal P_1$, compute
    $d_i=d_{\mathcal M}(p_i,c)$
    \State Let
    $d_{\max}=\displaystyle\max_{p_i\in\mathcal P_1}d_i$
    \State Select the parking-node orbit
    $\mathbf{Orbit}_{p_i}$
    attaining $d_{\max}$ using the ordering $\mathscr O_4$
    \State Let $\kappa\gets\kappa(p_i)$

    \While{some parking node in $\mathbf{Orbit}_{p_i}$ has occupancy less than $\kappa$}

        \State Let $\mathcal R_u(t)$ be the set of unsaturated robots
        \State Partition $\mathcal R_u(t)$ into rotational orbits under $\rho$
        \State Let $\mathcal R_1(t)$ be a set of representatives of the robot orbits
        \State For each $r_j\in\mathcal R_1(t)$, compute
        $D_j=d_{\mathcal M}(r_j,p_i)$
        \State Let
        $D_{\min}=\displaystyle\min_{r_j\in\mathcal R_1(t)}D_j$
        \State Select the robot orbit
        $\mathbf{Orbit}_{r_j}$
        attaining $D_{\min}$ using the ordering $\mathscr O_4$

        \If{some robot in $\mathbf{Orbit}_{r_j}$ is in a pending move state}
            \State Execute Algorithm~\ref{alg:handle-pending-move} for the pending robot until the rotational symmetry is restored
        \Else
            \For{$a=0$ to $q-1$}
                \State Execute \textsc{MoveToDestination}$(\rho^a(r_j),\rho^a(p_i))$
            \EndFor
        \EndIf

    \EndWhile

\EndWhile

\State \Return

\end{algorithmic}
\end{algorithm}

% %\FloatBarrier

Since $\rho^j(p_i)$ is obtained by rotating $p_i$ about the center of rotation $c$, and the rotational map $\rho$ preserves the Manhattan distance from $c$, we have
$
d_{\mathcal M}(p_i,c)
=
d_{\mathcal M}(\rho^j(p_i),c), \qquad
0\le j\le q-1.
$
Hence, every parking node in the orbit $\mathbf{Orbit}_{p_i}$ is associated with the common distance $d_i$. Let
$
d_{\max}
=
\max_{p_i\in\mathcal P_1} d_i.
$
Among all parking-node orbits attaining the maximum distance $d_{\max}$, the ordering $\mathscr O_4$ uniquely selects one orbit, denoted by $\mathbf{Orbit}_{p_i}$. Let
$
\kappa
$
be the common capacity of every parking node in the selected orbit.

Let $\mathcal R_u(t)\subseteq\mathcal R(t)$ denote the set of unsaturated robots at time $t$. Since the configuration $\mathcal C(t)\in\mathcal C_{322}$ is rotationally symmetric with respect to the center of rotation $c$, the set $\mathcal R_u(t)$ can be partitioned into disjoint rotational orbits under the rotational map $\rho$. Let
$
\mathcal R_1(t)=
\{r_1,r_2,\ldots,r_{|\mathcal R_u(t)|/q}\}
$
be a set containing exactly one representative robot from each rotational orbit. Then,
$
\mathcal R_u(t)
=
\bigcup_{r_i\in\mathcal R_1(t)}
\mathbf{Orbit}_{r_i},
$
where
$
\mathbf{Orbit}_{r_i}
=
\{r_i,\rho(r_i),\rho^2(r_i),\ldots,\rho^{q-1}(r_i)\}.
$ For each robot $r_i\in\mathcal R_u'$, let $p_i$ be the unsaturated parking node selected for $r_i$ according to the ordering $\mathscr O_4$. Define $D_i=d_{\mathcal M}(r_i,p_i)$. Similarly, for every rotational image $\rho^j(r_i)$ of $r_i$, the corresponding parking node is $\rho^j(p_i)$, and its Manhattan distance is
$
d_{\mathcal M}(\rho^j(r_i),\rho^j(p_i)),
\quad 0\le j\le q-1.
$

For each representative robot $r_i\in\mathcal R_1(t)$, let
$
D_i=d_{\mathcal M}(r_i,p_i)
$
denote the Manhattan distance between $r_i$ and the representative parking node $p_i$ of the selected parking-node orbit. Since $\rho^j(r_i)$ and $\rho^j(p_i)$ are obtained by rotating $r_i$ and $p_i$, respectively, about the center of rotation $c$, and the rotational map $\rho$ preserves the Manhattan distance, we have
$
d_{\mathcal M}(r_i,p_i)
=
d_{\mathcal M}(\rho^j(r_i),\rho^j(p_i))
\, 0\le j\le q-1.
$
Hence, every robot orbit is associated with the common distance $D_i$. Let
$
D_{\min}
=
\min_{r_i\in\mathcal R_1(t)}D_i.
$
Among all robot orbits attaining the minimum distance $D_{\min}$, the ordering $\mathscr O_4$ uniquely selects one orbit, denoted by $\mathbf{Orbit}_{r_i}$.

If one robot in the selected orbit is in a \emph{pending move} state, the procedure Algorithm~\ref{alg:handle-pending-move} is executed for the pending robot until the rotational symmetry of the configuration is restored. Otherwise, for every $0\le a\le q-1$, the robot $\rho^a(r_i)$ moves toward the parking node $\rho^a(p_i)$ using the procedure \textsc{MoveToDestination}. This process is repeated until every parking node in the selected orbit reaches its prescribed capacity $\kappa$.

After the selected parking-node orbit becomes saturated, it is removed from $\mathcal {P^U}$. The algorithm then repeats the above procedure by selecting another parking-node orbit attaining the maximum Manhattan distance from the center of rotation $c$, with ties broken according to the ordering $\mathscr O_4$. The process continues until every parking node is saturated. Throughout the execution, the rotational symmetry of the configuration is preserved, and strong multiplicity detection enables the robots to determine when a parking node reaches its prescribed capacity. The pseudo-code corresponding to this phase is given in Algorithm \ref{alg:saturation-c322}.
\end{itemize}

\paragraph{Gathering Phase:}
If $\mathcal C(t_0)\in \{\mathcal C_{222}, \mathcal C_{322}\}$, then this phase is executed as the initial phase of the algorithm \textbf{\textsc{spg()}}. For all other solvable configurations, the \textit{Gathering Phase} is executed only after the completion of the \textit{Saturation Phase}. Note that the problem $\mathcal {SPG}$ is unsolvable when the initial configuration satisfies $\mathcal C(t_0)\in \{\mathcal C_{223}, \mathcal C_{323}\}$. The \textit{Gathering Phase} is described in detail below.

\begin{itemize}

\item \textbf{$\mathcal C_{1}$ Configuration:}
In this configuration, the \textit{Gathering Phase} is executed only after the completion of \textit{Saturation Phase}. Since the set of parking nodes $\mathcal P$ is asymmetric, all parking nodes in $\mathcal P$ can be uniquely ordered. Consequently, a unique leading corner, say $\mathcal K_p$ of $\mathcal{MER}_{\mathcal P}$ can be determined. Let $\mathcal{STR}^i$ denote the string representation associated with the corner $\mathcal K_p$. The gathering node, denoted by $\mathscr G$, is chosen as the unique node outside $\mathcal{MER}_{\mathcal P}$ such that:
\begin{enumerate}
\item $\mathscr G$ is adjacent to the corner $\mathcal K_p$, and
\item there exists a ray (grid-line) originating from $\mathscr G$ and lying outside $\mathcal{MER}_{\mathcal P}$ such that the ray is parallel to the string direction $\mathcal{STR}^i$.
\end{enumerate}

% %\FloatBarrier

\begin{algorithm}[ht]
\tiny
\caption{\textsc{GatheringPhaseC1}$(\mathcal C(t_0))$}
\label{alg:gathering-c1}
\begin{algorithmic}[1]

\Require Initial configuration $\mathcal C(t_0)\in\mathcal C_1$, parking-node set $\mathcal P$, surplus robot set $\mathcal R_s(t)$
\Ensure All surplus robots gather at the unique gathering node $\mathscr G$

\If{$\mathcal C(t_0)\notin\mathcal C_1$}
    \State \Return
\EndIf

\If{the \textit{Saturation Phase} is not completed}
    \State \Return
\EndIf

\State Determine the unique leading corner $\mathcal K_p$ of $\mathcal{MER}_{\mathcal P}$
\State Let $\mathcal{STR}^i$ be the string representation associated with $\mathcal K_p$
\State Select the unique gathering node $\mathscr G$ outside $\mathcal{MER}_{\mathcal P}$ adjacent to $\mathcal K_p$
\State Let $\mathcal L_g$ be the ray originating at $\mathscr G$, lying outside $\mathcal{MER}_{\mathcal P}$, and parallel to the direction of $\mathcal{STR}^i$

\While{there exists a surplus robot $r\in\mathcal R_s(t)$ such that $r\neq\mathscr G$}

    \State For each surplus robot $r_i\in\mathcal R_s(t)$, compute
    $\Delta_i=d_{\mathcal M}(r_i,\mathscr G)$
    
    \State Let
    $\Delta_{\min}=\displaystyle\min_{r_i\in\mathcal R_s(t)}\Delta_i$
    
    \State Select a surplus robot
    $r\in\mathcal R_s(t)$ satisfying
    $d_{\mathcal M}(r,\mathscr G)=\Delta_{\min}$,
    {breaking ties using ordering on $\mathcal C(t)$}
    
    \State Execute \textsc{MoveToDestination}$(r,\mathscr G)$
    
    \State Update the surplus robot set $\mathcal R_s(t)$

\EndWhile

\State \Return

\end{algorithmic}
\end{algorithm}

% %\FloatBarrier

Consider the surplus robot set
$
\mathcal R_s(t)=\{r_1(t),r_2(t),\ldots,r_s(t)\}
\subseteq
\mathcal R(t).
$
For each surplus robot $r_i\in\mathcal R_s(t)$, let
$\Delta_i=d_{\mathcal M}(r_i,\mathscr G)
$. Define
$
\Delta_{\min}
=
\min_{r_i\in\mathcal R_s(t)}
\Delta_i.
$
Among all surplus robots attaining the minimum distance $\Delta_{\min}$, one robot $r_i$ is selected (ties are broken using ordering on $\mathcal C(t)$). The selected robot moves toward $\mathscr G$ using the procedure \textsc{MoveToDestination}. After the movement is completed, the surplus robot set is updated and the same procedure is repeated. The \textit{Gathering Phase} terminates when every surplus robot reaches the gathering node $\mathscr G$. The pseudo-code corresponding to this phase is given in Algorithm \ref{alg:gathering-c1}.

\item \textbf{$\mathcal C_{21}$ Configuration:}
For this configuration, the \textit{Gathering Phase} is executed only after the completion of the \textit{Saturation Phase} (see Figure~\ref{Gathering}(a))
. At this stage, all robots lie on the line $\mathscr L$ obtained during the \textit{Line Formation Phase}, and the multiplicity node $\mathbf{N}_{\mathscr L}$ has been established during the \textit{Multiplicity Creation Phase}. The objective of this phase is to move all surplus robots to $\mathbf{N}_{\mathscr L}$.

% %\FloatBarrier

\begin{algorithm}[ht]
\tiny
\caption{\textsc{: GatheringPhaseC21  C221}$(\mathcal C(t))$}
\label{alg:gathering-c21-c221}
\begin{algorithmic}[1]
\Require Current configuration $\mathcal C(t)\in\{\mathcal C_{21},\mathcal C_{221}\}$, formation line $\mathscr L$, multiplicity node $\mathbf N_{\mathscr L}$
\Ensure All surplus robots gather at the multiplicity node $\mathbf N_{\mathscr L}$

\If{$\mathcal C(t_0)\notin\{\mathcal C_{21},\mathcal C_{221}\}$}
\State \Return
\EndIf

\If{the \textit{Saturation Phase} is not completed}
\State \Return
\EndIf

\If{$\mathcal C(t_0)\in\mathcal C_{221}$}
\State The terminal robot $r_{e_1}$ breaks the symmetry during the \textit{Saturation Phase}
\State The configuration $\mathcal C(t)$ is transformed into a $\mathcal C_{21}$ configuration
\EndIf

\State Let $\mathcal R_s(t)=\{r_1(t),r_2(t),\ldots,r_s(t)\}$ be the set of surplus robots
\State All robots in $\mathcal R_s(t)$ lie on the formation line $\mathscr L$

\While{there exists a surplus robot $r_i\in\mathcal R_s(t)$ such that $r_i\neq \mathbf N_{\mathscr L}$}
\State For each surplus robot $r_i\in\mathcal R_s(t)$, compute $\Delta_i=d_{\mathcal M}(r_i,\mathbf N_{\mathscr L})$
\State Let $\Delta_{\min}=\min\{\Delta_i:r_i\in\mathcal R_s(t)\}$

\If{$\Delta_{\min}=0$}
    \State Every surplus robot already located at $\mathbf N_{\mathscr L}$ remains stationary
\EndIf

\State Select a surplus robot $r\in\mathcal R_s(t)$ satisfying $d_{\mathcal M}(r,\mathbf N_{\mathscr L})=\Delta_{\min}$
\State If more than one surplus robot attains $\Delta_{\min}$, break ties using the fixed ordering
\State Robot $r$ moves along $\mathscr L$ toward $\mathbf N_{\mathscr L}$ using \textsc{MoveToDestination}$(r,\mathbf N_{\mathscr L})$
\State All other surplus robots remain stationary during this move
\State Update the surplus robot set $\mathcal R_s(t)$

\EndWhile

\State \Return

\end{algorithmic}
\end{algorithm}

% %\FloatBarrier

Consider the surplus robot set $
\mathcal R_s(t)=\{r_1(t),r_2(t),\ldots,r_s(t)\}\subset \mathcal R(t)$ on $\mathscr L$.
Since every surplus robot lies on $\mathscr L$, for each robot
$r_i\in\mathcal R_s(t)$, define
$
\Delta_i=d_{\mathcal M}(r_i,\mathbf{N}_{\mathscr L}),
$
where $d_{\mathcal M}(r_i,\mathbf{N}_{\mathscr L})$ denotes the Manhattan distance between the robot $r_i$ and the gathering node $\mathbf{N}_{\mathscr L}$.

\begin{figure}[htbp]
    \centering

    \begin{subfigure}[t]{0.35\textwidth}
        \centering
       \includegraphics[width=\textwidth]{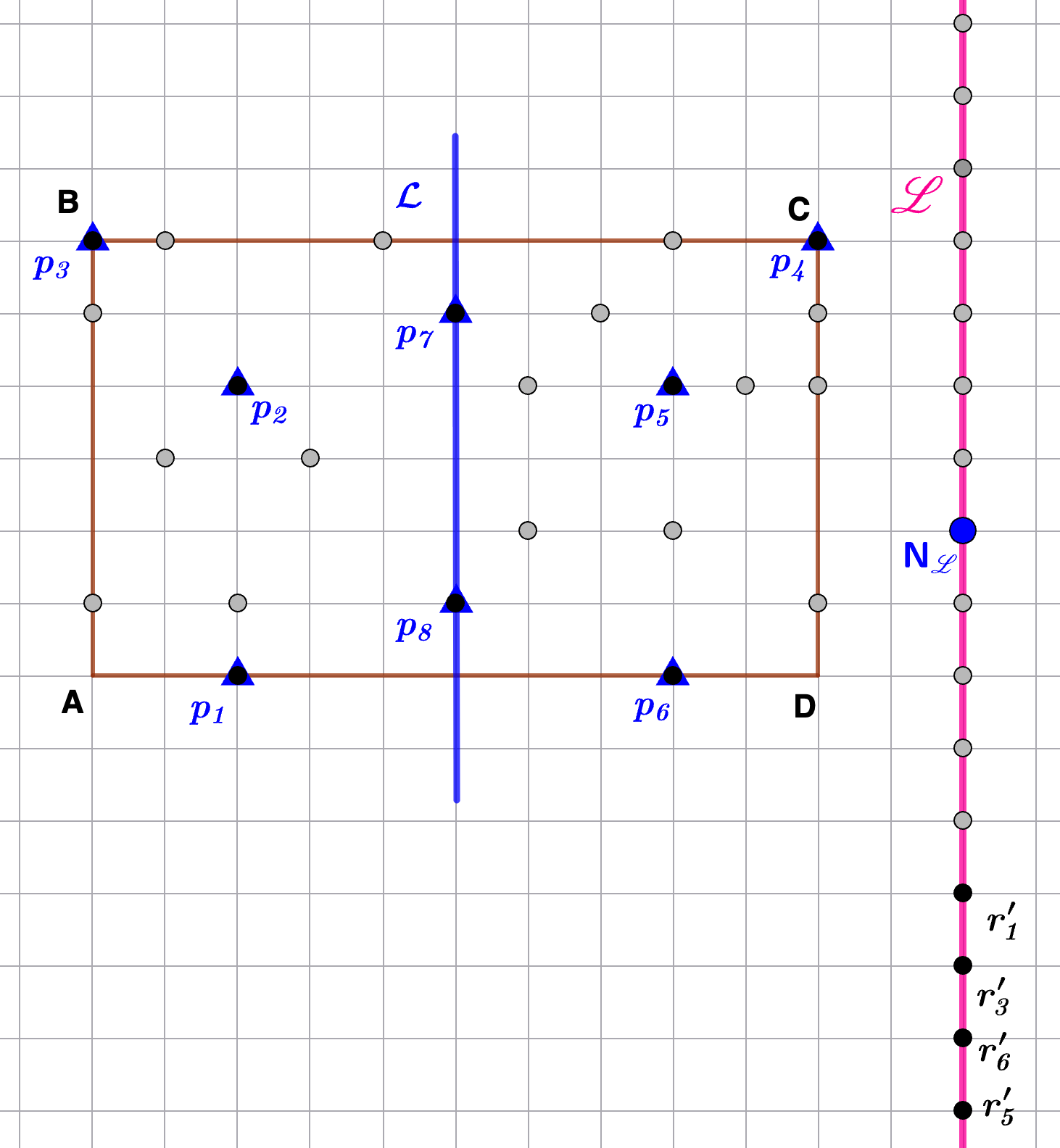}
       \caption{}
    \end{subfigure}
  \hspace{0.1\textwidth}%
    \begin{subfigure}[t]{0.35\textwidth}
        \centering
      \includegraphics[width=\textwidth]{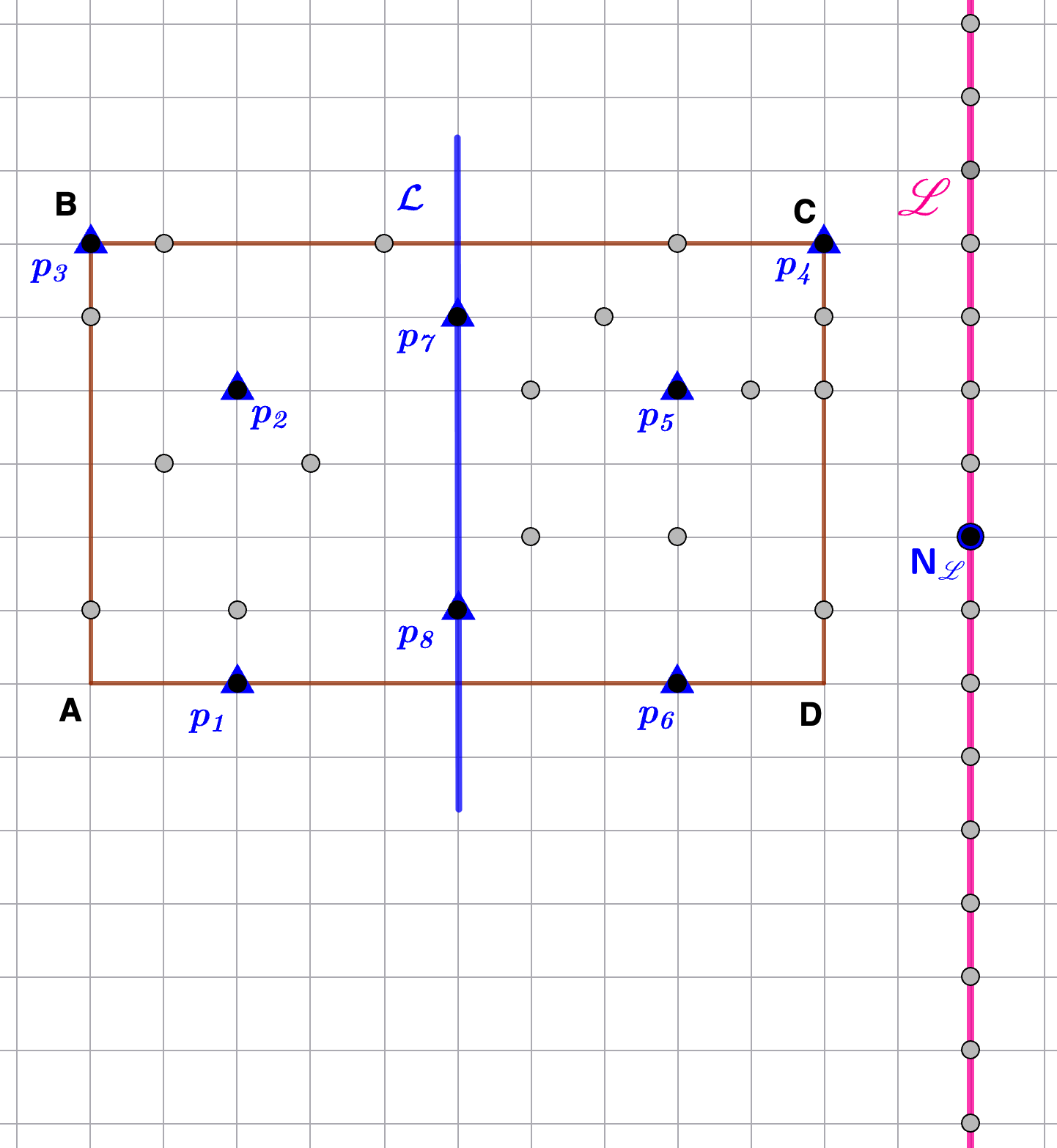}
        \caption{}
    \end{subfigure}

\caption{
Illustration of the final stage of the algorithm for the configuration $\mathcal C_{21}$: (a) all parking nodes are saturated; (b) the surplus robots gather at the node $\mathbf N_{\mathscr L}$.
}
    \label{Gathering}
\end{figure}

Assume
$
\Delta_{min}=\min\{\Delta_i: r_i\in\mathcal R_s(t)\}.
$
The surplus robot whose distance from $\mathbf{N}_{\mathscr L}$ is $\Delta_{\min}$ is selected to move first. If $\Delta_{\min}=0$, then a robot $r_i$ is already located at $\mathbf{N}_{\mathscr L}$ and hence remains stationary. If multiple robots attain the same minimum distance $\Delta_{min}$, ties are broken using a fixed ordering~\cite{bose2020arbitrary}, depending on the configuration of the robot positions. Without loss of generality, let the robots be ordered such that $
\Delta_1\le \Delta_2\le \cdots \le \Delta_s.
$
According to this ordering (ties are broken using a fixed ordering~\cite{bose2020arbitrary}, depending on the configuration of the robot positions), all surplus robots sequentially move along $\mathscr L$ towards $\mathbf{N}_{\mathscr L}$ and gather at the node $\mathbf{N}_{\mathscr L}$. This phase terminates when all surplus robots reach $\mathbf{N}_{\mathscr L}$. (see Figure~\ref{Gathering}(b)). The pseudo-code corresponding to this phase is given in Algorithm \ref{alg:gathering-c21-c221}.

\item \textbf{$\mathcal C_{221}$ Configuration:}
During the \textit{Saturation Phase}, the terminal robot $r_{e_1}$ breaks the symmetry of the configuration, transforming it to a $\mathcal C_{21}$ configuration. The subsequent \textit{Gathering Phase} is then executed exactly as described for $\mathcal C_{21}$. The pseudo-code corresponding to this phase is given in Algorithm \ref{alg:gathering-c21-c221}.

\item \textbf{$\mathcal C_{222}$ Configuration:}
In this configuration, both $\mathcal P$ and $\mathcal R$ admit the same unique line of symmetry $\mathcal L$, and neither a parking node nor a robot position lies on $\mathcal L$. To preserve the symmetry of the configuration, the \textit{Gathering Phase} is executed before the \textit{Saturation Phase}. This ensures that the symmetry of $\mathcal C(t)$ is maintained throughout the execution. Breaking the symmetry may lead to collisions among robots, resulting in the creation of multiple multiplicity nodes and consequently an unsolvable configuration.

Since every robot knows the total capacity of the parking nodes in $\mathcal P$, it can determine the number of surplus robots. As the total number of robots exceeds the sum of the capacities of all parking nodes by at least two, there exist surplus robots that can participate in the \textit{Gathering Phase}. Hence, the \textit{Gathering Phase} is well defined for the initial configuration $\mathcal C(t_0)$.

Consider the surplus robot set $\mathcal R_s\subseteq\mathcal R$, where
$|\mathcal R_s|=d$ and $d$ is even. Define
$
\mathcal S_{\mathcal L}=\mathcal L\cap\mathcal{MER}_{\mathcal R}
$
to be the line segment of the symmetry line $\mathcal L$ contained in
$\mathcal{MER}_{\mathcal R}$. Equivalently, $\mathcal S_{\mathcal L}$ is the sub-grid containing grid nodes on $\mathcal L$ that lie within
$\mathcal{MER}_{\mathcal R}$.

Now define
$
V_{\mathcal L}=\mathcal S_{\mathcal L}\cap V.
$
Thus, $V_{\mathcal L}$ denotes the set of all grid nodes lying on the line segment $\mathcal S_{\mathcal L}$. Since, the configuration $\mathcal C(t)\in\mathcal C_{222}$ is symmetric with respect to the line of symmetry $\mathcal L$, the robots in $\mathcal R_s$ appear in symmetric pairs with respect to $\mathcal L$. Therefore, $\mathcal R_s$ can be partitioned into two subsets $\mathcal R_s'$ and $\mu(\mathcal R_s')$, where
$
\mathcal R_s'=\{r_1,r_2,\ldots,r_{\frac d2}\}
$
and
$
\mu(\mathcal R_s')=\{\mu(r_1),\mu(r_2),\ldots,\mu(r_{\frac d2})\}.
$
Hence,
$
\mathcal R_s=\mathcal R_s'\cup \mu(\mathcal R_s').
$
Here, $\mu(r_i)$ denotes the image of $r_i$ under the reflection map $\mu$ (See Definition~\ref{def:reflection-map}) with respect to $\mathcal L$. Consider
$
\Delta_i=d_{\mathcal M}(r_i,V_{\mathcal L}),\ \text{where,}\ r_i\in \mathcal R_s'.
$
Similarly, the symmetric image $\mu(r_i)$ of $r_i$ is associated with the symmetric parking node $\mu(p_i)$, and its corresponding Manhattan distance is
$
d_{\mathcal M}(\mu(r_i),V_{\mathcal L}).
$
Since $\mu(r_i)$ is the reflection of $r_i$ with respect to the line of symmetry $\mathcal L$, and since reflection preserves the grid Manhattan distance, we have
$
d_{\mathcal M}(r_i,V_{\mathcal L})=d_{\mathcal M}(\mu(r_i),V_{\mathcal L}).
$

Assume
$
\Delta_{min}=\min\{\Delta_i:, r_i\in \mathcal R_s'\}.
$
For the fixed Manhattan distance value $\Delta_{\min}$, the number of robots in $\mathcal R_s'$ at distance $\Delta_{\min}$ may be either one or more than one.

If exactly one robot, say $r_i$, is at distance $\Delta_{\min}$, then it determines a unique symmetric pair of robots
$
(r_i,\mu(r_i)),
$
which in turn uniquely determines a gathering node, say
$
\mathscr G\in V_{\mathcal L}.
$ Otherwise, multiple symmetric pairs of robots correspond to the distance value $\Delta_{\min}$. In this case, the ordering $\mathscr O_4$ is used to uniquely identify one such symmetric pair, say $
(r_i,\mu(r_i)),
$
which consequently determines a unique gathering node
$
\mathscr G\in V_{\mathcal L}.
$

This pair $(r_i,\mu(r_i))$ of robots initiates the \textit{Gathering Phase}. Due to the asynchronous nature of the robot activations, one of the robots may be in a \textit{pending move} state. Our aim is to preserve the symmetry of the configuration in order to avoid collisions among robots. Using the Algorithm~\ref{alg:handle-pending-move}, a robot in the \textit{pending move} state can always be identified. Whenever a robot in the \textit{pending move} state is identified, its paired robot remains stationary until the pending robot reaches a position where the symmetry of the configuration is restored, and collision is avoided. Thus, the configuration remains symmetric. 

The robots in $\mathcal R_s'$ are ordered according to the ordering $\mathscr O'$ such that $
d_{\mathcal M}(r_1,\mathscr G)\le d_{\mathcal M}(r_2,\mathscr G)\le \cdots \le d_{\mathcal M}(r_{\frac{d}{2}},\mathscr G).
$ By symmetry, this induces the same ordering on their symmetric counterparts:
, i.e., $
d_{\mathcal M}(\mu(r_1),\mathscr G)\le d_{\mathcal M}(\mu(r_2),\mathscr G)\le \cdots \le d_{\mathcal M}(\mu(r_{\frac{d}{2}}),\mathscr G).
$

 % %\FloatBarrier

\begin{algorithm}[ht]
\tiny
\caption{\textsc{: GatheringPhaseC222}$(\mathcal C(t))$}
\label{alg:gathering-c222}
\begin{algorithmic}[1]
\Require Symmetric configuration $\mathcal C(t)\in\mathcal C_{222}$, line of symmetry $\mathcal L$, surplus robot set $\mathcal R_s(t)$
\Ensure All surplus robots gather at a unique node $\mathscr G\in\mathcal L$

\If{$\mathcal C(t_0)\notin\mathcal C_{222}$}
\State \Return
\EndIf

\State Let $\mathcal S_{\mathcal L}=\mathcal L\cap\mathcal{MER}_{\mathcal R}$
\State Let $V_{\mathcal L}=\mathcal S_{\mathcal L}\cap V$
\State Partition $\mathcal R_s(t)$ into symmetric pairs with respect to $\mathcal L$
\State Let $\mathcal R_s'=\{r_1,r_2,\ldots,r_{\frac d2}\}$ contain one representative from each symmetric pair
\State $\mathcal R_s(t)=\mathcal R_s'\cup\mu(\mathcal R_s')$

\For{each robot $r_i\in\mathcal R_s'$}
\State Compute $\Delta_i=d_{\mathcal M}(r_i,V_{\mathcal L})$
\EndFor

\State Let $\Delta_{\min}=\min\{\Delta_i:r_i\in\mathcal R_s'\}$

\If{exactly one robot $r_i\in\mathcal R_s'$ satisfies $\Delta_i=\Delta_{\min}$}
\State Select the symmetric robot pair $(r_i,\mu(r_i))$
\Else
\State Select a unique symmetric robot pair $(r_i,\mu(r_i))$ attaining $\Delta_{\min}$ using the ordering $\mathscr O_4$
\EndIf

\State Let $\mathscr G\in V_{\mathcal L}$ be the unique gathering node determined by the selected pair $(r_i,\mu(r_i))$

\While{$|\mathcal R_s(t)\cap{\mathscr G}|<2$}
\If{one robot of the selected pair $(r_i,\mu(r_i))$ is in a pending move state}
\State The paired robot remains stationary
\State Execute Algorithm~\ref{alg:handle-pending-move} for the pending robot until the reflection symmetry is restored
\Else
\State Robot $r_i$ moves toward $\mathscr G$ using \textsc{MoveToDestination}$(r_i,\mathscr G)$
\State Robot $\mu(r_i)$ moves toward $\mathscr G$ using \textsc{MoveToDestination}$(\mu(r_i),\mathscr G)$
\EndIf
\EndWhile

\State A multiplicity node is created at $\mathscr G$

\While{there exists a surplus robot $r\in\mathcal R_s(t)$ such that $r\neq\mathscr G$}
\State Partition the robots in $\mathcal R_s(t)\setminus(\mathcal R_s(t)\cap{\mathscr G})$ into symmetric pairs
\State Let $\mathcal R_s''$ contain one representative from each remaining symmetric pair

\For{each robot $r_j\in\mathcal R_s''$}
    \State Compute $D_j=d_{\mathcal M}(r_j,\mathscr G)$
\EndFor

\State Let $D_{\min}=\min\{D_j:r_j\in\mathcal R_s''\}$
\State Select a symmetric pair $(r_j,\mu(r_j))$ attaining $D_{\min}$
\State Break ties using the ordering $\mathscr O_4$

\If{one robot of the selected pair $(r_j,\mu(r_j))$ is in a pending move state}
    \State The paired robot remains stationary
    \State Execute Algorithm~\ref{alg:handle-pending-move} for the pending robot until the reflection symmetry is restored
\Else
    \State Robot $r_j$ moves toward $\mathscr G$ using \textsc{MoveToDestination}$(r_j,\mathscr G)$
    \State Robot $\mu(r_j)$ moves toward $\mathscr G$ using \textsc{MoveToDestination}$(\mu(r_j),\mathscr G)$
\EndIf

\State Using strong multiplicity detection, update the number of surplus robots gathered at $\mathscr G$

\EndWhile

\State \Return

\end{algorithmic}
\end{algorithm}

%\FloatBarrier

The symmetric pair $(r_i,\mu(r_i))$ that initiates the \textit{Gathering Phase} moves toward $\mathscr G$ and creates a multiplicity node at $\mathscr G$. Once the multiplicity node at $\mathscr G$ is created, all surplus robots in $\mathcal R_s$ detect this node and move sequentially toward $\mathscr G$ according to the ordering $\mathscr O'$ (ties are broken using ordering $\mathscr O_4)$. Using the \textit{strong multiplicity detection} capability, the robots in $\mathcal R$ can determine the exact number of surplus robots gathered at $\mathscr G$. This phase terminates when all surplus robots reach $\mathscr G$. Note that a configuration $\mathcal C_{222}$ transforms into a configuration of type $\mathcal C_{221}$ after the completion of the \textit{Gathering} phase. However, since the robots have strong multiplicity detection capability, they can identify this transition. Therefore, after the \textit{Gathering} phase, the robots proceed with the \textit{Saturation} phase using the strategy prescribed for configurations of type $\mathcal C_{222}$. The pseudo-code corresponding to this phase is given in Algorithm \ref{alg:gathering-c222}.

\item \textbf{$\mathcal C_{31}$ Configuration:}
The set of parking nodes $\mathcal P$ admits rotational symmetry with center of rotation $c$, while the robot configuration $\mathcal R(t_0)$ is asymmetric. For this configuration, the \textit{Gathering Phase} is executed only after the completion of the \textit{Saturation Phase}. In \textit{Line Formation Phase}, all robots are on the $\mathscr L$ and in \textit{Multiplicity Creation Phase}, the multiplicity node $\mathbf{N}_{c}$ on $\mathscr L$ is created, hence all surplus robots gather at $\mathbf{N}_{c}$.

\begin{algorithm}[ht]
\tiny
\caption{\textsc{: GatheringPhaseC31C321}$(\mathcal C(t))$}
\label{alg:gathering-c31-c321}
\begin{algorithmic}[1]
\Require Current configuration $\mathcal C(t)\in\{\mathcal C_{31},\mathcal C_{321}\}$, formation line $\mathscr L$, multiplicity node $\mathbf N_c$
\Ensure All surplus robots gather at the multiplicity node $\mathbf N_c$

\If{$\mathcal C(t_0)\notin\{\mathcal C_{31},\mathcal C_{321}\}$}
\State \Return
\EndIf

\If{the \textit{Saturation Phase} is not completed}
\State \Return
\EndIf

\If{$\mathcal C(t_0)\in\mathcal C_{321}$}
\State Let $r_c$ be the robot located at the center of rotation $c$
\State During the \textit{Saturation Phase}, robot $r_c$ performs the symmetry-breaking operation
\State The configuration $\mathcal C(t)$ is transformed into a $\mathcal C_{31}$ configuration
\EndIf

\State Let $\mathcal R_s(t)=\{r_1(t),r_2(t),\ldots,r_s(t)\}$ be the set of surplus robots
\State All robots in $\mathcal R_s(t)$ lie on the formation line $\mathscr L$

\While{there exists a surplus robot $r_i\in\mathcal R_s(t)$ such that $r_i\neq \mathbf N_c$}
\State For each surplus robot $r_i\in\mathcal R_s(t)$, compute $\Delta_i=d_{\mathcal M}(r_i,\mathbf N_c)$
\State Let $\Delta_{\min}=\min\{\Delta_i:r_i\in\mathcal R_s(t)\}$

\If{$\Delta_{\min}=0$}
    \State Every surplus robot already located at $\mathbf N_c$ remains stationary
\EndIf

\State Select a surplus robot $r\in\mathcal R_s(t)$ satisfying $d_{\mathcal M}(r,\mathbf N_c)=\Delta_{\min}$
\State If more than one surplus robot attains $\Delta_{\min}$, break ties using the fixed ordering
\State Robot $r$ moves along $\mathscr L$ toward $\mathbf N_c$ using \textsc{MoveToDestination}$(r,\mathbf N_c)$
\State All other surplus robots remain stationary during this move
\State Update the surplus robot set $\mathcal R_s(t)$

\EndWhile

\State \Return

\end{algorithmic}
\end{algorithm}

Consider the surplus robot set $
\mathcal R_s(t)=\{r_1(t),r_2(t),\ldots,r_s(t)\}\subset \mathcal R(t).
$
Here, all robots in $\mathcal R_s$ lie on the line $\mathscr L$ obtained after the completion of the \textit{Line Formation Phase}. For each robot $r_i\in\mathcal R_s(t)$, define
$
\Delta_i=d_{\mathcal M}(r_i,\mathbf{N}_{c}),
$
where $d_{\mathcal M}(r_i,\mathbf{N}_{c})$ denotes the Manhattan distance between the robot $r_i$ and the gathering node $\mathbf{N}_{c}$.

Assume
$\Delta_{min}=\min\{\Delta_i: r_i\in\mathcal R_s(t)\}.
$
The surplus robot whose distance from $\mathbf{N}_{c}$ is $\Delta_{\min}$ is selected to move first. If $\Delta_{\min}=0$, then a robot $r_i$ is already located at $\mathbf{N}_{c}$ and hence remains stationary. If multiple robots attain the same minimum distance $\Delta_{min}$, ties are broken using a fixed ordering~\cite{bose2020arbitrary}, depending on the configuration of the robot positions. Without loss of generality, let the robots be ordered such that $
\Delta_1\le \Delta_2\le \cdots \le \Delta_s.
$
According to this ordering (ties are broken using a fixed ordering~\cite{bose2020arbitrary}, depending on the configuration of the robot positions), all surplus robots sequentially move along $\mathscr L$ towards $\mathbf{N}_{c}$ and gather at the node $\mathbf{N}_{c}$. This phase terminates when all surplus robots reach $\mathbf{N}_{c}$. The pseudo-code corresponding to this phase is given in Algorithm \ref{alg:gathering-c31-c321}.

\item \textbf{$\mathcal C_{321}$ Configuration:}  
In this configuration, a robot $r_c$ is located at the center of rotation $c$. During the \textit{Saturation Phase}, the robot $r_c$ performs the symmetry-breaking operation, thereby transforming the configuration into $\mathcal C_{31}$ configuration. Thereafter, the \textit{Gathering Phase} is executed in the same manner as described for $\mathcal C_{31}$. The pseudo-code corresponding to this phase is given in Algorithm \ref{alg:gathering-c31-c321}.

% %\FloatBarrier

% %\FloatBarrier

\item \textbf{$\mathcal C_{322}$ Configuration:}
In this configuration, both $\mathcal P$ and $\mathcal R$ admit rotational symmetry with center of rotation $c$, and initially no parking node or robot position lies at $c$. For this configuration, the \textit{Gathering Phase} is executed first in order to preserve the symmetry of robot movements so that the configuration $\mathcal C(t)$ remains symmetric throughout the execution. If the configuration $\mathcal C(t)$ becomes asymmetric at any instant of time, collisions may occur, making the problem unsolvable due to the creation of multiple multiplicity nodes. Consider the surplus robot set $\mathcal R_s\subseteq\mathcal R$, where $|\mathcal R_s|=d$ and $d$ is divisible by $q$, $q$ is the order of rotational symmetry. Since the configuration $\mathcal C(t)\in\mathcal C_{322}$ is rotationally symmetric with respect to the center of rotational symmetry $c$, the robots in $\mathcal R_s$ can be partitioned into disjoint rotational orbits under the rotational map $\rho$ (See Definition~\ref{def:rotational-map}).

For each robot $r_i\in\mathcal R_s$, define its rotational orbit as
$
\mathbf{Orbit}_{r_i}=\{r_i,\rho(r_i),\rho^2(r_i),\ldots,\rho^{q-1}(r_i)\},
$
where $q\ge 2$ denotes the order of rotational symmetry. Let
$
\mathcal R_s'=\{r_1,r_2,\ldots,r_{\frac{d}{q}}\}
$
be a set containing exactly one representative robot from each rotational orbit. Then, 
$
\mathcal R_s=\bigcup_{r_i\in\mathcal R_s'}\mathbf{Orbit}_{r_i}.
$

Define $\Delta_i=d_{\mathcal M}(r_i,c)$. Similarly, for every rotational image $\rho^j(r_i)$ of $r_i$ with respect to $c$, and its Manhattan distance is
$
d_{\mathcal M}(\rho^j(r_i),c),
\quad 0\le j\le q-1.
$ Since $\rho^j(r_i)$ is obtained by rotating $r_i$ about the center of rotational symmetry $c$, and since rotation preserves the grid Manhattan distance, we have $d_{\mathcal M}(r_i,c)=d_{\mathcal M}(\rho^j(r_i),c)$ for every $0\le j\le q-1$.

Assume
$
\Delta_{min}=\min\{\Delta_i: r_i\in\mathcal R_s'\}.
$
For the fixed Manhattan distance value $\Delta_{min}$, the number of representative robots in $\mathcal R_s'$ at distance $\Delta_{min}$ may be either one or more than one. If exactly one representative robot, say $r_i$, is at distance $\Delta_{min}$, then it determines a unique rotational orbit $\mathbf{Orbit}_{r_i}$. Otherwise, multiple rotational orbits correspond to the distance value $\Delta_{min}$. In this case, the ordering $\mathscr O_4$ is used to uniquely identify one such rotational orbit, say $\mathbf{Orbit}_{r_i}$. This rotational orbit $\mathbf{Orbit}_{r_i}=\{r_i,\rho(r_i),\rho^2(r_i),\ldots,\rho^{q-1}(r_i)\}$ initiates the \textit{Gathering Phase}. Due to the asynchronous nature of robot activations, one or more robots in the orbit may be in a \textit{pending move} state. Our aim is to preserve the rotational symmetry of the configuration in order to avoid collisions among robots. Using the Algorithm~\ref{alg:handle-pending-move}, a robot in the \textit{pending move} state can always be identified. Whenever such a robot is identified, all other robots belonging to the same rotational orbit remain stationary until the pending robot reaches a position where the rotational symmetry of the configuration is restored and collision is avoided. Thus, the configuration remains rotationally symmetric throughout the execution.

The representative robots in $\mathcal R_s'$ are ordered according to the ordering, say $\mathscr O''$ such that
$
d_{\mathcal M}(r_1,c)\le d_{\mathcal M}(r_2,c)\le \cdots \le d_{\mathcal M}(r_{\frac{d}{q}},c).
$ By symmetry, this induces the same ordering on the corresponding rotational orbits. The rotational orbit
$
\mathbf{Orbit}_{r_i}=\{r_i,\rho(r_i),\rho^2(r_i),\ldots,\rho^{q-1}(r_i)\}
$
that initiates the \textit{Gathering Phase} moves toward the center of rotation $c$ and reaches $c$. All remaining surplus robots in $\mathcal R_s$ detect the center of rotation $c$ and move orbit by orbit toward $c$ according to the ordering $\mathscr O''$, where ties are broken using the ordering $\mathscr O_4$. Using the \textit{strong multiplicity detection} capability, the robots in $\mathcal R$ can determine the exact number of surplus robots gathered at $c$. This phase terminates when all surplus robots in $\mathcal R_s$ reach $c$. Note that a configuration $\mathcal C_{322}$ transforms into a configuration of type $\mathcal C_{321}$ after the completion of the \textit{Gathering} phase. However, since the robots have strong multiplicity detection capability, they can identify this transition. Therefore, after the \textit{Gathering} phase, the robots proceed with the \textit{Saturation} phase using the strategy prescribed for configurations of type $\mathcal C_{322}$. The pseudo-code corresponding to this phase is given in Algorithm \ref{alg:gathering-c322}.

\end{itemize}
Note that the movement of robots for each configuration during the \textit{Gathering Phase} is performed by the algorithm \textit{MoveToDestination()}.

%\FloatBarrier

\begin{algorithm}[ht]
\tiny
\caption{\textsc{: GatheringPhaseC322}$(\mathcal C(t))$}
\label{alg:gathering-c322}
\begin{algorithmic}[1]
\Require Rotationally symmetric configuration $\mathcal C(t)\in\mathcal C_{322}$, center of rotation $c$, rotational map $\rho$, surplus robot set $\mathcal R_s(t)$
\Ensure All surplus robots gather at the center of rotation $c$

\If{$\mathcal C(t_0)\notin\mathcal C_{322}$}
\State \Return
\EndIf

\State Let $q$ be the order of rotational symmetry
\State Let $\mathcal R_s(t)$ be the set of surplus robots, where $|\mathcal R_s(t)|=d$ and $q$ divides $d$
\State Partition $\mathcal R_s(t)$ into disjoint rotational orbits under the rotational map $\rho$
\State Let $\mathcal R_s'=\{r_1,r_2,\ldots,r_{\frac{d}{q}}\}$ contain exactly one representative robot from each rotational orbit

\For{each representative robot $r_i\in\mathcal R_s'$}
\State Compute $\Delta_i=d_{\mathcal M}(r_i,c)$
\EndFor

\State Let $\Delta_{\min}=\min\{\Delta_i:r_i\in\mathcal R_s'\}$

\If{exactly one representative robot $r_i\in\mathcal R_s'$ satisfies $\Delta_i=\Delta_{\min}$}
\State Select the rotational orbit $\mathbf{Orbit}_{r_i}$
\Else
\State Select a unique rotational orbit $\mathbf{Orbit}_{r_i}$ attaining $\Delta_{\min}$ using the ordering $\mathscr O_4$
\EndIf

\While{not all robots in $\mathbf{Orbit}_{r_i}$ have reached $c$}
\If{some robot in $\mathbf{Orbit}_{r_i}$ is in a pending move state}
\State All other robots in $\mathbf{Orbit}_{r_i}$ remain stationary
\State Execute Algorithm~\ref{alg:handle-pending-move} for the pending robot until rotational symmetry is restored
\Else
\For{$a=0$ to $q-1$}
\State Robot $\rho^a(r_i)$ moves toward $c$ using \textsc{MoveToDestination}$(\rho^a(r_i),c)$
\EndFor
\EndIf
\EndWhile

\While{there exists a surplus robot $r\in\mathcal R_s(t)$ such that $r\neq c$}
\State Let $\mathcal R_s^u(t)=\{r\in\mathcal R_s(t):r\neq c\}$ be the set of surplus robots not yet gathered at $c$
\State Partition $\mathcal R_s^u(t)$ into disjoint rotational orbits under $\rho$
\State Let $\widehat{\mathcal R}_s$ contain exactly one representative robot from each remaining rotational orbit

\For{each representative robot $r_j\in\widehat{\mathcal R}_s$}
    \State Compute $D_j=d_{\mathcal M}(r_j,c)$
\EndFor

\State Let $D_{\min}=\min\{D_j:r_j\in\widehat{\mathcal R}_s\}$
\State Select a rotational orbit $\mathbf{Orbit}(r_j)$ attaining $D_{\min}$
\State Break ties using the ordering $\mathscr O_4$

\If{some robot in $\mathbf{Orbit}(r_j)$ is in a pending move state}
    \State All other robots in $\mathbf{Orbit}(r_j)$ remain stationary
    \State Execute Algorithm~\ref{alg:handle-pending-move} for the pending robot until rotational symmetry is restored
\Else
    \For{$a=0$ to $q-1$}
        \State Robot $\rho^a(r_j)$ moves toward $c$ using \textsc{MoveToDestination}$(\rho^a(r_j),c)$
    \EndFor
\EndIf

\State Update the number of surplus robots gathered at $c$

\EndWhile

\State \Return

\end{algorithmic}
\end{algorithm}

%\FloatBarrier

\subsection{Routine \textit{MoveToDestination()}}

The routine \textit{MoveToDestination()} specifies how a selected robot moves toward its designated destination node during the execution of Algorithm \textbf{\textsc{spg()}}. This routine is invoked in the \textit{Line Formation Phase}, \textit{Multiplicity Creation Phase}, \textit{Saturation Phase}, and \textit{Gathering Phase}. In each phase, the admissible robot is selected according to the ordering defined for the corresponding configuration.

During the \textit{Line Formation Phase}, let $\mathcal S$ denote the corresponding reference structure, where $\mathcal S$ is either the line of symmetry $\mathcal L$ or the center of rotational symmetry $c$. For each robot $r_i(t)$, let $v_i$ denote its assigned destination node on the reference structure. Define $
D_i(t)=d_{\mathcal M}(r_i(t),\mathcal S),
$ where $d_{\mathcal M}$ denotes the Manhattan distance. The admissible robot is selected as
$
r^*(t)\in\arg\max_{r_i(t)\in\mathcal R(t)}D_i(t),
$
with ties broken according to the ordering prescribed for the corresponding configuration.

During the \textit{Multiplicity Creation Phase}, \textit{Saturation Phase}, and \textit{Gathering Phase}, let $v_i$ denote the destination node assigned to robot $r_i(t)$. Let $\mathcal R_a(t)$ denote the set of robots eligible to move in the current phase. The admissible robot is selected as
$
r^*(t)\in\arg\min_{r_i(t)\in\mathcal R_a(t)}
d_{\mathcal M}(r_i(t),v_i),
$
where ties are again broken according to the ordering defined for the corresponding configuration. Once the admissible robot $r^*(t)$ is selected, it remains the only moving robot until it reaches its assigned destination node $v^*$. More precisely, for every intermediate time $t'>t$ before $r^*$ reaches $v^*$, all other robots remain stationary. The robot $r^*$ moves along a shortest Manhattan path from its current position to $v^*$. At each move, the robot advances by one grid hop along this shortest path. 
The robot continues moving until $
d_{\mathcal M}(r^*(t),v^*)=0,
$
at which point it reaches its destination and becomes stationary. The ordering is then recomputed over the remaining eligible robots, and the next admissible robot is selected for movement. Since at most one robot moves at any time while all others remain stationary, collisions and non-designated multiplicity nodes are avoided.

If the configuration $\mathcal C(t_0)$ admits a line of symmetry $\mathcal L$, then a symmetric pair of robots, say $(r,\mu(r))$, is selected for movement according to the appropriate ordering, where $\mu(\cdot)$ denotes the reflection mapping with respect to $\mathcal L$. The robots $r$ and $\mu(r)$ move simultaneously toward their respective destination nodes along shortest Manhattan paths that are symmetric with respect to $\mathcal L$. Hence, the symmetry of $\mathcal C(t)$ with respect to $\mathcal L$ is preserved throughout the movement. Moreover, these two paths do not intersect before the robots reach their destination nodes; otherwise, by symmetry, an undesired multiplicity or collision would be created before the designated destination is reached.

Similarly, if the configuration $\mathcal C(t_0)$ admits a rotational symmetry of order $k$, then all robots belonging to the same rotational orbit
$\mathbf{Orbit}(r)={r,\rho(r),\rho^2(r),\ldots,\rho^{k-1}(r)}$
are selected for movement according to the appropriate ordering, where $\rho$ denotes the rotation operator about the center of rotational symmetry. These robots move simultaneously toward their respective destination nodes along shortest Manhattan paths that are rotationally symmetric images of one another. Therefore, the rotational symmetry of $\mathcal C(t)$ is maintained at every time instant during the movement. Furthermore, the paths of robots in the same orbit do not intersect before reaching their corresponding destination nodes. 
% Consequently, collisions and undesired multiplicity nodes are avoided in both reflectional and rotational symmetric cases. The pseudo-code corresponding to this routine is given in Algorithm \ref{alg:move-to-destination}.

\section{Correctness of the Algorithm \textbf{\textsc{spg()}}}
\label{sec:correctness}
We prove the correctness of the algorithm \textbf{\textsc{spg()}} by establishing the correctness of its four phases. We first state a common property of the movement routine, which will be used repeatedly in the phase-wise proofs.

\begin{lemma}
\label{lem:move-correct}
The procedure \textit{MoveToDestination()} guarantees that the designated robot, symmetric pair of robots, or rotational orbit of robots reaches its assigned destination after finitely many hops. Throughout its execution, the required symmetry is preserved, and no unintended collision or multiplicity node is created.
\end{lemma}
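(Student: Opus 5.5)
The plan is to split according to the three movement modes of \textit{MoveToDestination()}: a single admissible robot $r^*$ in an asymmetric configuration, a symmetric pair $(r,\mu(r))$ when $\mathcal C(t)$ admits the line $\mathcal L$, and a rotational orbit $\mathbf{Orbit}(r)$ when $\mathcal C(t)$ admits rotation about $c$. In each mode we establish three things: \emph{termination}, \emph{preservation} of the required symmetry, and \emph{absence} of unintended collisions. Termination uses the potential $d_{\mathcal M}(r^*(t),v^*)$, a nonnegative integer that drops by exactly one at every completed move, since each hop follows a shortest Manhattan path. It is therefore bounded by the initial distance. Under a fair \textsc{async} scheduler, $r^*$ is activated infinitely often. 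What must be shown is that $r^*$ stays the admissible robot until it arrives. We plan to argue that the selection rule is stable under its own moves: $r^*$ minimises (or, in line formation, maximises) its distance to the assigned target, that distance only decreases (respectively, $r^*$ stays on its approach path), and all other robots are stationary, so their keys are unchanged. The tie-breaking ordering ($\mathscr O_2$ or the key-corner ordering) is total because the configuration is asymmetric, so $r^*$ remains first. Since only $r^*$ moves, an outdated snapshot held by another robot recomputes the same choice, and no second robot starts moving.

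For collision-freeness in the single-robot case, the claim is that the shortest path to $v^*$ meets no occupied node other than, possibly, $v^*$ itself. Here $v^*$ is a parking node, the designated multiplicity node $\mathbf N_{\mathscr L}$ or $\mathbf N_c$, the gathering node $\mathscr G$, or a free node on $\mathscr L$. The argument uses the closest-first choice: any occupied node strictly inside a shortest path would hold a robot closer to $v^*$, which would itself have been selected. Where this is not literally true, for example on the formation line, the plan is to route along a grid line parallel to the offending segment, or to use the auxiliary lines $\mathscr L',\mathscr L''$ as in \textsc{PlaceRobotOnFormationLine}. Among the shortest paths, the one chosen deterministically should be the path that first traverses the empty region. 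With this routing, the only multiplicity ever formed is at a designated node. This is consistent with Theorem~\ref{thm:non-designated-multiplicity}.

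For the symmetric cases, define the paths of $\mu(r)$ and of $\rho^a(r)$ as the images of the path of $r$ under $\mu$ and $\rho^a$. Symmetry of the configuration after each fully completed pair or orbit step then follows because $\mu$ and $\rho$ are automorphisms preserving $\Theta_t$. Partial steps are handled by \textsc{HandlePendingMove}. A pending state is recognised through the nearly-equal strings or the movement-status string $B(\mathbf{Orbit}_r)$. In that state only the lagging robots move, and each such move strictly reduces the number of $0$'s in $B$ (or the one-swap discrepancy). So symmetry is restored after finitely many activations, and no new pair or orbit is selected in the meantime. Non-intersection of the mirrored paths is argued as follows. If $r$'s path met $\mu(r)$'s path at a node $x\neq v^*$, then by symmetry it would meet it at $\mu(x)$ as well. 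Since no robot or target lies on $\mathcal L$ (respectively at $c$) except the gathering target $\mathscr G$ or $c$ itself, we choose paths that stay in the open half-plane, or in the rotation sector, until the final hop. This prevents the intersection.

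The main obstacle is the asynchronous stale-snapshot issue in the symmetric cases. A robot of the pair may compute its move from a snapshot taken before its partner moved, so the adversary could make it advance twice relative to its partner. The first thing to try is to show that the pending-move test is \emph{monotone}: a robot that has progressed is always identified as such, because its string is lexicographically the larger one. It therefore never moves again until its partner catches up, which bounds the imbalance by one hop. A further check is needed: the nearly-equal relation must not be triggered spuriously by robots of other pairs on the same line $\ell(r,\mu(r))$. We plan to rule this out by the closest-first ordering $\mathscr O_4$, which guarantees that no other unsettled robot lies between $x$ and the moving pair.
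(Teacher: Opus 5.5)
Your three-case split (single robot, reflected pair, rotational orbit) and the potential $d_{\mathcal M}(r^*(t),v^*)$ match the skeleton of the paper's proof. The paper, however, only asserts two things: that the mover never lands on an unprescribed occupied node, and that mirrored or rotated paths keep the symmetry ``after every hop'' (this tacitly assumes lockstep motion). You try to justify these claims, but several of your justifications fail as written.

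The closest-first argument excludes only \emph{eligible} robots from the interior of a shortest path. Nodes held by ineligible robots can lie on every shortest path: a saturated parking node, $\mathbf N_{\mathscr L}$, $\mathbf N_c$, or $\mathscr G$. For instance, in $\mathcal C_1$, let a capacity-one parking node $p=(1,0)$ already hold its robot, let the current target be $p'=(0,0)$, and let the selected robot sit at $(2,0)$. The unique shortest path lands on $p$ and overfills it with indistinguishable robots. Your fallback of detouring along a parallel grid line is not a shortest path, so $d_{\mathcal M}(r^*,v^*)$ increases on the detour hop. This breaks your termination potential and your stability claim, since the detouring robot may cease to be the closest eligible robot and be preempted. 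It also modifies \textit{MoveToDestination()} rather than proving the lemma for it.

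Your stability argument is sound for the min-distance rules: after one hop $r^*$ is strictly closest, so ties are irrelevant. It fails for the max-distance rule of the Line Formation Phase. A robot $r_g$ lying beyond $\mathscr L$ loses distance to $\mathcal L$ (or $c$) with each hop, so it can tie with or drop below another robot. Because $\mathscr O_2$ is recomputed on the new configuration, totality of the order does not keep it first. Moreover, $\mathscr L$ itself is defined through $r_g$, so the target line can shift.

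In the symmetric cases, the nearly-equal test registers only a $01\leftrightarrow 10$ swap on $\ell(r,\mu(r))$, i.e.\ a hop perpendicular to $\mathcal L$. The half-plane-confined paths you propose generally need hops parallel to $\mathcal L$ to reach $\mathscr G$ or $p_i$ when these are off $\ell(r,\mu(r))$. Such a hop takes the robot off $\ell(r,\mu(r))$ and is not a swap. When $\mathcal L$ is diagonal, $\ell(r,\mu(r))$ is diagonal too and every grid hop leaves it, so the test never fires.

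Your monotonicity criterion is also direction dependent. A hop toward $\mathcal L$ makes the progressed robot's string lexicographically larger. A hop away from $\mathcal L$, which is needed to reach parking nodes farther out, makes it smaller.

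The missing idea is a way for an oblivious robot to recognize a non-symmetric snapshot as a symmetric configuration plus pending hops, and to reconstruct which robots are ahead. The movement-status string $B(\mathbf{Orbit}_r)$ presupposes exactly this knowledge, since ``has progressed'' cannot be read off a single snapshot. Without it, neither the one-hop bound on the imbalance nor the restoration of symmetry under \textsc{async} is established.
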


\begin{proof}
We prove the statement by considering the possible types of selected entities (robots/ parking nodes) in the procedure \textit{MoveToDestination()}.

\textbf{Case 1: A single designated robot is selected.}
Let $r$ be the designated robot and let $v$ be its assigned destination. The robot $r$ moves along a shortest Manhattan path from its current node to $v$. After each hop, the Manhattan distance from $r$ to $v$ decreases by exactly one. Since this distance is a nonnegative integer, after finitely many hops it becomes zero. Hence, $r$ reaches its destination $v$ in finite time. During this movement, no other robot is allowed to move. Moreover, $r$ is allowed to stop on an occupied node only if that node is explicitly prescribed by the current phase, such as a saturated parking node, the multiplicity node, or the gathering node. Therefore, no unintended collision or multiplicity node is created.

\textbf{Case 2: A symmetric pair of robots is selected.}
Suppose that a pair of robots $(r,\mu(r))$ is selected with respect to the line of symmetry $\mathcal L$, where $\mu$ denotes the reflection map. Let their assigned destinations be $v$ and $\mu(v)$, respectively. If $r\notin \mathcal L$, then $r$ and $\mu(r)$ lie on opposite sides of $\mathcal L$; otherwise, $r=\mu(r)$ lies on $\mathcal L$ and is treated as a single designated robot using an appropriate ordering. Assume first that $r\notin \mathcal L$. The robot $r$ moves along a shortest Manhattan path from $r$ to $v$, while $\mu(r)$ moves along the reflected copy of the same path from $\mu(r)$ to $\mu(v)$. Hence, after every hop, the Manhattan distance of each robot from its own destination decreases by one. Since these distances are nonnegative integers, both robots reach their respective destinations after finitely many hops. Moreover, the two prescribed paths are mirror images of each other with respect to $\mathcal L$. Therefore, the reflectional symmetry of the configuration is preserved after every hop. Since the two robots lie on opposite sides of $\mathcal L$ and move along symmetric paths, their paths do not intersect before reaching their assigned destinations. No robot outside the selected pair is allowed to move during this execution. A collision may occur only when a destination node is explicitly prescribed by the current phase of the algorithm, namely, a multiplicity node, a saturated parking node, or the gathering node. Therefore, any such collision is intentional and occurs only at a prescribed destination. Thus, no unintended collision or multiplicity node is created. If a robot lying on $\mathcal L$ is selected for an intentional symmetry-breaking move, then this movement is not treated as a symmetric-pair movement; it is handled separately as a single designated-robot movement using the prescribed ordering and the procedure \textsc{AllowtoMove}().

\textbf{Case 3: A rotational orbit of robots is selected.}
Suppose that a rotational orbit of robots is selected with respect to the center of rotational symmetry $c$. Let $\rho$ be the rotational map defined in Definition~\ref{def:rotational-map}, and let
$
\mathbf{Orbit}_r=\{r,\rho(r),\rho^2(r),\ldots,\rho^{q-1}(r)\}
$
be the selected orbit. Let $v$ be the assigned destination of the representative robot $r$. Then, for each $i\in\{0,1,\ldots,q-1\}$, the robot $\rho^i(r)$ is assigned the destination $\rho^i(v)$. The representative robot $r$ moves along a shortest Manhattan path from $r$ to $v$. For every $i$, the robot $\rho^i(r)$ moves along the path obtained by applying the rotational map $\rho^i$ to the path of $r$. Thus, all robots in the orbit move along rotated copies of the same prescribed path toward their respective destinations. After every hop, the Manhattan distance between each robot $\rho^i(r)$ and its assigned destination $\rho^i(v)$ decreases by one. Since these distances are nonnegative integers, all robots in the selected rotational orbit reach their assigned destinations after finitely many hops. Moreover, the destinations and the prescribed paths are closed under the rotational map $\rho$. Hence, after every hop, the image of the position of $\rho^i(r)$ under $\rho$ coincides with the position of $\rho^{i+1}(r)$, for all $i$ modulo $q$. Therefore, the rotational symmetry of the configuration is preserved throughout the execution. No robot outside the selected rotational orbit is allowed to move during this procedure. Also, by the construction of the prescribed paths, the moving robots do not create any unintended collision before reaching their destinations. A robot may terminate on an occupied node only when that node is explicitly designated by the current phase of the algorithm, namely, a multiplicity node, a saturated parking node, or the gathering node. Therefore, no unintended collision or multiplicity node is created.

Combining the three cases, the procedure \textit{MoveToDestination()} guarantees that the selected robot, symmetric pair, or rotational orbit reaches its assigned destination after finitely many hops, preserves the required symmetry, and avoids every unintended collision or multiplicity node.
\end{proof}

\begin{lemma}
\label{lem:line-formation-correct}
For every initial configuration $\mathcal C(t_0)\in\{\mathcal C_{21},\mathcal C_{31}\}$, the \textit{Line Formation Phase} terminates in finite time with all robots occupying distinct nodes of a uniquely identifiable formation line $\mathscr L$.
\end{lemma}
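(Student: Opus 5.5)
The proof will follow the two-stage structure of the phase: first, the guard robot $r_g$ determines the formation line $\mathscr L$; second, the remaining robots are transferred one by one onto $\mathscr L$. Throughout we use that $\mathcal C(t_0)\in\{\mathcal C_{21},\mathcal C_{31}\}$ is asymmetric, so the ordering $\mathscr O_2$ is a total order on all entities at every instant. Since only one robot moves at a time, every intermediate configuration is also asymmetric: it differs from an asymmetric one by the position of a single robot, and the parking set, whose symmetry is fixed, still forces a unique line $\mathcal L$ or centre $c$.

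\textbf{Stage 1 (identification of $\mathscr L$).} We show that $r_g$, selected as the $\mathscr O_2$-maximal robot at maximum distance $d_{\mathcal L}$ (resp.\ $d_c$), stays the unique selected robot while it travels toward $w$. In the case where $r_g$ starts inside or on $\mathcal{MER}_{\mathcal P}$, it moves along $l$ away from $\mathcal L$ (resp.\ $c$), so its distance strictly increases. Every other robot stays at distance at most $d_{\mathcal L}$, so $r_g$ remains the unique farthest robot, and by Lemma~\ref{lem:move-correct} it reaches $w$ in finitely many hops. In the case where $r_g$ starts outside with $d_{\mathcal M}(\mathcal L,r_g)>d_{\mathcal M}(\mathcal L,w)$, we argue that it moves along $l$ toward $w$ (or toward $w'$ and then along $\mathscr L'$ to a free node $w''$). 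Because $w,w'$ lie on a fixed line determined only by $\mathcal{MER}_{\mathcal P}$ and the side $u$, the target line $\mathscr L$ is computed identically by all robots. Once $r_g$ lies on $\mathscr L$, $\mathscr L$ is recognized as the line parallel to $\mathcal L$ (resp.\ perpendicular to $l$) at distance $2$ from $\mathcal{MER}_{\mathcal P}$ that contains a robot. This gives unique identifiability, since $\mathcal{MER}_{\mathcal P}$ is static.

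\textbf{Stage 2 (filling $\mathscr L$).} We argue by induction on $|\mathcal R_u(t)|$, the number of robots off $\mathscr L$. At each step the robot $r_u\in\mathcal R_u(t)$ of maximum $\eta$, with ties broken by $\mathscr O_2$, is unique. It is moved by \textsc{PlaceRobotOnFormationLine}, either directly to a free destination $w_i$ or, if $w_i$ is occupied, via the auxiliary line $\mathscr L''$ to a free node $w''\in\mathscr L$. Such a free node always exists because $\mathscr L$ is an infinite line and only $n$ robots are present. By Lemma~\ref{lem:move-correct}, the robot arrives in finitely many hops and creates no multiplicity. Robots already on $\mathscr L$ never move, so $|\mathcal R_u(t)|$ strictly decreases after each placement and reaches $0$ after at most $n$ placements. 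At that point all $n$ robots are on distinct nodes of $\mathscr L$.

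\textbf{Main obstacle: stability of the selection under \textsc{async}.} We must exclude two failures. First, an outdated snapshot could cause a second robot to start moving. Second, a robot's path to $\mathscr L$ could cross occupied nodes. For the first, the plan is to show that the selected robot stays selected throughout its walk: its $\eta$-value changes monotonically along its path, and the auxiliary lines are chosen so that $d_{\mathcal M}(\mathcal L,\mathscr L'')<d_{\mathcal M}(\mathcal L,\mathscr L')$. Hence the relative order of the off-line robots is unaffected, and every Look, even a stale one, returns the same selected robot. For the second, we argue that the path along the auxiliary line $\mathscr L'$ or $\mathscr L''$ lies strictly between $\mathscr L$ and the remaining off-line robots. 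This region contains no robots, so the path stays on free nodes until the final free node $w''$ on $\mathscr L$.
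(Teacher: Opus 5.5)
The skeleton of your proposal matches the paper's proof exactly: a unique $r_g$ gives a unique $\mathscr L$, a free node always exists on the infinite line $\mathscr L$, Lemma~\ref{lem:move-correct} gives finite collision-free arrival, and the number of robots off $\mathscr L$ strictly decreases. The paper uses the asymmetry of $\mathcal C(t_0)$ only to pick $r_g$ and delegates everything about the walks to Lemma~\ref{lem:move-correct}. The gap is in the extra steps you make load-bearing.

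\textbf{Intermediate asymmetry fails.} You claim every intermediate configuration stays asymmetric because it differs from an asymmetric one in a single robot. This is false, and the algorithm's own prescribed move refutes it. Take parking nodes $(\pm1,0)$ of capacity $2$ and $(\pm1,2)$ of capacity $1$, so $\mathcal L$ is the unique axis $x=0$. Place robots at $(6,1)$, $(-5,1)$, $(\pm3,5)$, $(\pm2,-3)$, $(\pm4,-1)$; this is a $\mathcal C_{21}$ instance with $s=2$. Then $r_g=(6,1)$, $u=(1,1)$ and $w=(3,1)$. Since $r_g$ is outside $\mathcal{MER}_{\mathcal P}$ and farther from $\mathcal L$ than $w$, its first hop takes it to $(5,1)$. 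The whole configuration is then mirror-symmetric about $\mathcal L$. In that snapshot:
\begin{itemize}
\item $\mathscr O_2$ does not exist;
\item the two farthest robots $(\pm5,1)$ are indistinguishable;
\item your recognition rule for $\mathscr L$ (the line parallel to $\mathcal L$ at distance $2$ from $\mathcal{MER}_{\mathcal P}$ that contains a robot) is ambiguous, since both $x=3$ and $x=-3$ contain robots.
\end{itemize}
So neither the identifiability in Stage~1 nor the uniqueness of $r_u$ in Stage~2 follows from what you wrote.

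\textbf{The \textsc{async} stability argument fails.} Monotonicity of $\eta$ keeps the mover the $\arg\max$ only when $\eta$ increases, which is what your inside-case argument uses. A robot approaching $\mathscr L$ from beyond it, or crossing $\mathcal L$ from the other side, loses maximality once its distance drops below another off-line robot's. For example, with $\mathscr L$ at distance $3$ from $\mathcal L$ and off-line robots beyond it at distances $6$ and $5$, the first robot stops being the $\arg\max$ before it reaches $\mathscr L$. The inequality $d_{\mathcal M}(\mathcal L,\mathscr L'')<d_{\mathcal M}(\mathcal L,\mathscr L')$ has no bearing on this.

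\textbf{The collision-freedom argument also fails.} The strip swept along $\mathscr L'$ or $\mathscr L''$ need not be free: unprocessed robots with smaller $\eta$ can lie at distance $d_{\mathcal M}(\mathcal L,\mathscr L)\pm1$, i.e.\ on the auxiliary line itself. The approach path to $w'$ is not analysed at all.

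Your first worry, about stale snapshots, is actually handled automatically. With instantaneous one-hop moves, if every snapshot has a unique mover, the configuration changes only through that robot's move. So every pending non-null move was computed from the current configuration.

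A self-contained proof needs two things:
\begin{enumerate}
\item a snapshot-readable invariant that certifies asymmetry and fixes the side of $\mathscr L$, such as a unique robot strictly farthest from $\mathcal L$ (resp.\ $c$);
\item a progress measure that survives the mover changing mid-walk.
\end{enumerate}
The outside case as specified, where $r_g$ walks inward, does not maintain the first of these.
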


\begin{proof}
We first show that the formation line is uniquely determined.

\smallskip
\textbf{Case 1.} $\mathcal C(t_0)\in\mathcal C_{21}$. Since the parking-node set $\mathcal P$ admits a unique line of reflectional symmetry $\mathcal L$, every robot can independently compute $\mathcal L$, the rectangle $\mathcal{MER}_{\mathcal P}$, and the Manhattan distance of every robot from $\mathcal L$. The robot $r_g$ is then selected from the set of robots farthest from $\mathcal L$, with ties broken according to the ordering $\mathscr O_2$. As the robot configuration is asymmetric, the ordering $\mathscr O_2$ yields a unique choice of $r_g$. Let $u$ denote the point where the line through $r_g$ perpendicular to $\mathcal L$ intersects the nearest side of $\mathcal{MER}_{\mathcal P}$. The node $w$ is then uniquely determined by the condition $d_{\mathcal M}(u,w)=2$ and by requiring $w$ to lie on the ray from $u$ directed away from $\mathcal{MER}_{\mathcal P}$. Hence, the formation line $\mathscr L$, defined as the line passing through $w$ and parallel to $\mathcal L$, is uniquely determined.

\smallskip
\textbf{Case 2.} $\mathcal C(t_0)\in\mathcal C_{31}$. The center of rotational symmetry $c$ is uniquely determined. Every robot can therefore compute its Manhattan distance from $c$. The robot $r_g$ is selected from the set of robots farthest from $c$, with ties broken according to the ordering $\mathscr O_2$. Since the ordering is deterministic, the selected robot $r_g$ is unique. Consequently, the grid line passing through $c$ and $r_g$ is uniquely determined. Let $u$ denote the point where this line meets the boundary of $\mathcal{MER}_{\mathcal P}$ in the direction of $r_g$. The node $w$ is then uniquely determined by the condition $d_{\mathcal M}(u,w)=2$ and by requiring $w$ to lie beyond $\mathcal{MER}_{\mathcal P}$ along the same line. Hence, the formation line $\mathscr L$, defined as the line passing through $w$ and perpendicular to the line joining $c$ and $r_g$, is uniquely determined.

Thus, the formation line $\mathscr L$ is uniquely determined in both configuration classes. If the designated destination node on $\mathscr L$ is already occupied, the auxiliary-line rule is applied to locate another free node on $\mathscr L$. Since only finitely many nodes of $\mathscr L$ are occupied by robots, whereas $\mathscr L$ contains infinitely many grid nodes, such a free node always exists. Hence, every robot can be assigned a distinct destination node on $\mathscr L$.

By Lemma~\ref{lem:move-correct}, each selected robot reaches its assigned destination after finitely many hops without creating any collision or unintended multiplicity node. Once a robot reaches its destination on $\mathscr L$, it remains stationary and is never selected again. Therefore, the number of robots outside $\mathscr L$ decreases by one after each successful movement and never increases. Since the number of robots is finite, after finitely many movements every robot occupies a distinct node on $\mathscr L$. Hence, the \textit{Line Formation Phase} terminates with all robots positioned at the distinct nodes of the uniquely determined formation line $\mathscr L$.
\end{proof}

\begin{lemma}
\label{lem:multiplicity-correct}
After the \textit{Line Formation Phase}, the \textit{Multiplicity Creation Phase} correctly creates a unique and stable multiplicity node in finite time.
\end{lemma}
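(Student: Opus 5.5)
The plan is to follow the structure of Lemma~\ref{lem:line-formation-correct}. I would prove three things in order: that the target node $\mathbf N$ is uniquely determined, that it is reached in finite time, and that the resulting multiplicity is unique and is never dismantled.

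\textbf{Step 1: uniqueness of $\mathbf N$.} By Lemma~\ref{lem:line-formation-correct}, at the start of the phase all $n$ robots occupy distinct nodes of $\mathscr L$, so every robot computes the same line $\mathscr L$.
\begin{itemize}
\item For $\mathcal C_{21}$, the terminal robots $r_1,r_n$ and the segment $[r_1,r_n]$ are well defined. If $|S|$ is odd, $\mathbf N_{\mathscr L}$ is the middle node. If $|S|$ is even, the two middle nodes $x,y$ are separated by comparing their distances to the key corner. I would argue that the key corner is unique because $\mathcal C(t)$ is asymmetric. I would also argue that $d_{\mathcal M}(x,K)\neq d_{\mathcal M}(y,K)$ because $x,y$ are adjacent on $\mathscr L$, which is parallel to a side of $\mathcal{MER}_{\mathcal P}$.
\item For $\mathcal C_{31}$, $\mathbf N_c$ is the foot of the perpendicular from $c$ to $\mathscr L$. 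This is a unique grid node because $c$ is a grid node or a cell centre and $\mathscr L$ is axis-aligned. The cell-centre case needs a small check, or a convention such as rounding toward the leading corner.
\end{itemize}

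\textbf{Step 2: progress.} Until $|\mathcal R(t)\cap\{\mathbf N\}|\ge 2$, the phase selects exactly one robot, namely the one nearest to $\mathbf N$ with ties broken by $\mathscr O_2$. The tie-break is unique since the configuration stays asymmetric. The selected robot moves along $\mathscr L$ via \textsc{MoveToDestination}.
\begin{itemize}
\item If $\mathbf N$ is already occupied, one robot arrives and a multiplicity of size two forms.
\item If $\mathbf N$ is empty, the nearest robot arrives first, stays (since robots at $\mathbf N$ are stationary), and then the next nearest arrives.
\end{itemize}
Because the selected robot is nearest to $\mathbf N$, its path along $\mathscr L$ contains no occupied node other than possibly $\mathbf N$. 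Lemma~\ref{lem:move-correct} then gives finite arrival without any unintended collision. So at most two robot relocations are needed and the phase terminates in finite time.

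\textbf{Step 3: uniqueness and stability.} The only node ever holding more than one robot is $\mathbf N$. The phase's definition and the saturation-phase rule both forbid robots at $\mathbf N$ from leaving, so the multiplicity is stable. The main obstacle is \textsc{async} consistency: a robot acting on an outdated snapshot must not pick a different $\mathbf N$ or a different mover. I would handle this in two parts.
\begin{itemize}
\item First, I would show that $\mathbf N$ is invariant while the first robot moves toward it. In $\mathcal C_{21}$, that robot is nearest to the middle node, so it lies strictly inside $[r_1,r_n]$ unless $n$ is tiny. Its motion therefore does not change the terminal robots, and $\mathbf N$ does not change. In $\mathcal C_{31}$, $\mathbf N_c$ depends only on $c$ and $\mathscr L$, so it is invariant.
\item Second, I would show that while a robot is in transit it remains the unique closest robot to $\mathbf N$. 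Hence any stale snapshot selects the same robot, and no second mover is launched.
\end{itemize}
The degenerate case in which a terminal robot is itself the nearest robot (for example $n=2$) would be checked separately by noting that $n\ge s\ge 2$ and that the endpoints move inward without crossing $\mathbf N$.
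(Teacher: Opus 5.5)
Your skeleton matches the paper's proof. You fix $\mathbf N$, move the robot nearest to it along $\mathscr L$ (ties by $\mathscr O_2$), invoke Lemma~\ref{lem:move-correct}, and keep robots at $\mathbf N$ stationary. The paper's proof never considers stale snapshots, so your Step~3 goes further than it does. As written, however, Step~3 fails in the even case of $\mathcal C_{21}$.

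You separate the middle nodes $x,y$ using the key corner of the current configuration, unique because $\mathcal C(t)$ is asymmetric. You then argue that $\mathbf N_{\mathscr L}$ is invariant because the terminal robots stay put. But the key corner is read off strings that record every robot, so an interior hop can move it. Here is an example:
\begin{itemize}
\item Let $\mathcal{MER}_{\mathcal C}(t)$ be wider across $\mathscr L$ than along it, and place the robots at positions $0,1,3,8,10,11$ of $\mathscr L$, with the nodes at $0$ and $11$ being corners of $\mathcal{MER}_{\mathcal C}(t)$.
\item The string direction then runs along $\mathscr L$, and the maximum is attained at one of these two corners, whose strings begin by reading $\mathscr L$.
\item That column reads $110100001011$ from both ends, so $\mathcal P$ breaks the tie, say in favour of the end at $0$. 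This puts $\mathbf N_{\mathscr L}$ at position $5$, and the mover is the robot at $3$.
\item After the hop $3\to 4$, the reading from $11$ is $110100010011$, which beats the reading $110010001011$ from $0$. The key corner jumps to the other end, $\mathbf N_{\mathscr L}$ becomes position $6$, and the robots at $4$ and $8$ are tied for nearest.
\end{itemize}
So neither ``$\mathbf N$ does not change'' nor ``the robot in transit remains the unique closest'' holds, and your stale-snapshot conclusion is unsupported.

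The repair is to use what the paper's proof uses: the two leading corners of $\mathcal{MER}_{\mathcal P}$. These are static. They are mirror images across $\mathcal L$, and $\mathcal L\parallel\mathscr L$, so they share their coordinate along $\mathscr L$. Hence they agree on which of $x,y$ is closer, and the inequality is strict by parity. With this choice $\mathbf N_{\mathscr L}$ depends only on the terminals, and your argument goes through.

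Your deferred degenerate case does not actually arise in $\mathcal C_{21}$. A unique line of symmetry forces $\mathcal P$ to have at least two nodes, and if exactly two, unequal capacities. Hence $\sum\kappa\ge 3$, and with $s\ge 2$ this gives $n\ge 5$. For $n\ge 5$, both the first and the second mover are strictly closer to $\mathbf N_{\mathscr L}$ than either terminal, so the terminals never move. This is cleaner than the endpoint argument, which would have to cope with $\mathbf N_{\mathscr L}$ shifting every time an endpoint moves.

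In $\mathcal C_{31}$ you rightly notice that the foot of the perpendicular from $c$ to $\mathscr L$ need not be a grid node. When $c$ is a cell centre it never is, a case the paper's proof passes over. Your fallback of rounding toward the leading corner is not well defined there, though. The rotational symmetry of $\mathcal P$ makes every rotated image of a leading corner a leading corner too. For $q=2$ these two corners are diagonally opposite, hence on opposite sides along $\mathscr L$. Any replacement rule must be computable from a single snapshot and, as above, shown invariant under the moves of this phase.
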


\begin{proof}
By Lemma~\ref{lem:line-formation-correct}, after the \textit{Line Formation Phase}, all robots occupy distinct nodes on the uniquely determined formation line $\mathscr L$.

\smallskip
\textbf{Case 1.} $\mathcal C(t_0)\in\mathcal C_{21}$. The terminal robots $r_1$ and $r_n$ on $\mathscr L$ uniquely determine the segment $[r_1,r_n]$. If this segment contains an odd number of nodes, then its middle node is unique. Otherwise, the segment contains exactly two middle nodes. Since the parking-node set $\mathcal P$ has a unique line of symmetry, among these two nodes, the one uniquely closer, say  $\mathbf N_{\mathscr L}$ to the two leading corners of $\mathcal{MER}_{\mathcal P}$ is selected. Hence, every robot identifies the same node $\mathbf N_{\mathscr L}$. If $\mathbf N_{\mathscr L}$ is already occupied by a robot, then that robot remains stationary, and another designated robot is moved to $\mathbf N_{\mathscr L}$. The designated robot is chosen as the closest robot on $\mathscr L$ to $\mathbf N_{\mathscr L}$. If two such closest robots exist, then they lie on opposite sides of $\mathbf N_{\mathscr L}$ along $\mathscr L$. Otherwise, a designated robot(s) first moves to $\mathbf N_{\mathscr L}$ and remains stationary there. Then, another designated robot(s) moves to the same node. Consequently, $\mathbf N_{\mathscr L}$ becomes the unique multiplicity node.

\smallskip
\textbf{Case 2.} $\mathcal C(t_0)\in\mathcal C_{31}$. The center of rotational symmetry $c$ and the formation line $\mathscr L$ are uniquely determined in a configuration $\mathcal C_{31}$. After the line-formation phase is completed, all robots lie on $\mathscr L$, and hence every robot identifies the same node $\mathbf N_c$ on $\mathscr L$. Once $\mathbf N_c$ is identified, the procedure for creating a unique and stable multiplicity node is the same as that used for configuration $\mathcal C_{21}$.
\smallskip

By Lemma~\ref{lem:move-correct}, every designated robot reaches its assigned destination after finitely many hops without creating any undesired collision or unintended multiplicity node. Once a robot reaches the designated node, it remains stationary. Therefore, after finitely many movements, a unique and stable multiplicity node is created.
\end{proof}

\begin{lemma}
\label{lem:phase-identification}
For configurations $\mathcal C(t_0)\in\{\mathcal C_{21},\mathcal C_{31}\}$, the multiplicity node created in the \textit{Multiplicity Creation Phase} enables every robot to distinguish the completion of the \textit{Line Formation Phase} from the beginning of the \textit{Saturation Phase}.
\end{lemma}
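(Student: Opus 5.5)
The plan is to argue by an invariant on the number and location of multiplicity nodes, exploiting global strong multiplicity detection. Since $\mathcal C(t_0)\in\{\mathcal C_{21},\mathcal C_{31}\}$, all robots initially occupy distinct nodes. By Lemma~\ref{lem:move-correct}, during the \textit{Line Formation Phase} only one robot moves at a time and it never stops on an occupied node. The auxiliary-line rule of \textsc{PlaceRobotOnFormationLine} redirects a robot whenever its target $w$ is occupied. So throughout this phase every node, parking or not, holds at most one robot. Hence a robot that takes a snapshot during line formation sees $\lambda_t(v)\le 1$ for all $v$. I would state this as the first invariant: no multiplicity node exists before the \textit{Multiplicity Creation Phase} completes.

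Next, using Lemma~\ref{lem:line-formation-correct} and Lemma~\ref{lem:multiplicity-correct}, I would show what holds at the end of the \textit{Multiplicity Creation Phase}. All robots lie on $\mathscr L$, and exactly one node, $\mathbf N_{\mathscr L}$ or $\mathbf N_c$, has $\lambda_t\ge 2$. This node is computable from the configuration: the middle of the segment $[r_1,r_n]$, or the foot of the perpendicular from $c$. During this phase the robots move only along $\mathscr L$ toward $\mathbf N$ and stop only at $\mathbf N$, so the line structure is never destroyed. Each robot can therefore classify a snapshot with the following predicate:
\begin{itemize}
\item[(i)] If all robots are on a common line parallel to $\mathcal L$ (or perpendicular to the $c$-line) outside $\mathcal{MER}_{\mathcal P}$ at the prescribed offset, and there is no multiplicity, then the phase is \textit{Multiplicity Creation}.
\item[(ii)] If there is no multiplicity and the robots are not all on such a line, then the phase is \textit{Line Formation}.
\item[(iii)] If a unique multiplicity node exists off the parking nodes, then the phase is \textit{Saturation} or later.
\end{itemize}
The count of robots at $\mathbf N$ and at the parking nodes is exact by strong multiplicity detection (Theorem~\ref{thm:strong-md-necessary}), which separates (iii) from the terminal state.

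The third step is stability. Robots at $\mathbf N$ never leave it, by the rule stated in the \textit{Saturation Phase}, so once created the multiplicity persists. Every later move in Saturation and Gathering is again a single-robot \textit{MoveToDestination} ending either at a parking node (which must then be excluded from the ``unsaturated-multiplicity'' test using $\kappa$) or at $\mathbf N$. Thus no second non-parking multiplicity ever appears, and the classification in (iii) is never confused with (i) or (ii).

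The main obstacle is asynchrony. A robot may have taken its Look before the second robot arrived at $\mathbf N$ and still act on a line-formation view. I would handle this by noting that in the stale snapshot all robots already lay on $\mathscr L$. Its computed move is therefore the multiplicity-creation move toward $\mathbf N$, which is exactly the robot sequentially designated by the minimum-distance rule. Since only that robot moves, no other robot holds a pending move that conflicts with the new phase. I would also need to check the boundary case where a robot sits on a parking node: a parking node with one robot and capacity $\ge 2$ must not be mistaken for $\mathbf N$. Distinguishing by $\kappa(v)>0$ versus $\kappa(v)=0$ handles this, since $\mathbf N$ lies outside $\mathcal{MER}_{\mathcal P}$ and hence is not a parking node.
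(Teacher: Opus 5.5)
Your proposal is correct and follows the same argument as the paper. By Lemmas~\ref{lem:line-formation-correct} and~\ref{lem:multiplicity-correct} together with global strong multiplicity detection, ``all robots on $\mathscr L$, no multiplicity'' marks the end of line formation, and the unique multiplicity at $\mathbf N_{\mathscr L}$ or $\mathbf N_c$ marks saturation. Your added invariants (no earlier multiplicity, $\mathbf N$ lying outside $\mathcal{MER}_{\mathcal P}$ so it cannot be confused with a parking node, persistence of $\mathbf N$, and the single-mover argument under \textsc{async}) make explicit what the paper leaves implicit. Two small corrections: the \textsc{async} step should rest on ``only the currently designated robot can hold a non-null pending move,'' because a stale snapshot need not have all robots on $\mathscr L$; and strong multiplicity detection is a model assumption, not a consequence of Theorem~\ref{thm:strong-md-necessary}.
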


\begin{proof}
By Lemma~\ref{lem:line-formation-correct}, at the completion of the \textit{Line Formation Phase}, all robots occupy distinct nodes on the uniquely determined formation line $\mathscr L$. The following cases are to be considered.

\smallskip
\smallskip
\textbf{Case 1.} $\mathcal C(t_0)\in\mathcal C_{21}$. By Lemma~\ref{lem:line-formation-correct}, the terminal configuration of the \textit{Line Formation Phase} consists of all robots occupying distinct nodes on the formation line $\mathscr L$. Furthermore, by Lemma~\ref{lem:multiplicity-correct}, the \textit{Multiplicity Creation Phase} creates a unique and stable multiplicity node $\mathbf N_{\mathscr L}$. Since the robots have global strong multiplicity detection, every robot can determine whether a multiplicity node exists. Therefore, a configuration in which all robots lie on $\mathscr L$ and no multiplicity node is present is uniquely recognized as the terminal configuration of the \textit{Line Formation Phase}. After the unique multiplicity node $\mathbf N_{\mathscr L}$ is created, every robot recognizes the resulting configuration as the initial configuration of the \textit{Saturation Phase}. Hence, the completion of the \textit{Line Formation Phase} is unambiguously distinguishable from the beginning of the \textit{Saturation Phase}.

\smallskip
\smallskip
\textbf{Case 2.} $\mathcal C(t_0)\in\mathcal C_{31}$. By Lemma~\ref{lem:line-formation-correct}, the terminal configuration of the \textit{Line Formation Phase} consists of all robots occupying distinct nodes on the formation line $\mathscr L$. Furthermore, by Lemma~\ref{lem:multiplicity-correct}, the \textit{Multiplicity Creation Phase} creates a unique and stable multiplicity node $\mathbf N_c$. Since the robots have global strong multiplicity detection, every robot can determine whether a multiplicity node exists. Therefore, a configuration in which all robots lie on $\mathscr L$ and no multiplicity node is present is uniquely recognized as the terminal configuration of the \textit{Line Formation Phase}. After the unique multiplicity node $\mathbf N_c$ is created, every robot recognizes the resulting configuration as the initial configuration of the \textit{Saturation Phase}. Hence, the completion of the \textit{Line Formation Phase} is unambiguously distinguishable from the beginning of the \textit{Saturation Phase}.

\smallskip

Hence, for both configuration classes, the unique multiplicity node enables every robot to distinguish the end of the \textit{Line Formation Phase} from the beginning of the \textit{Saturation Phase}.
\end{proof}

\begin{lemma}
\label{lem:saturation-correct}
For every configuration in which the \textit{Saturation Phase} is executed, the procedure saturates every parking node to its prescribed capacity in finite time, without disturbing previously saturated parking nodes or creating unintended multiplicity nodes.
\end{lemma}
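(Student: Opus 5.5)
The plan is a case analysis over the configuration classes that execute the \textit{Saturation Phase}, namely $\mathcal C_1$, $\mathcal C_{21}$, $\mathcal C_{31}$, $\mathcal C_{221}$, $\mathcal C_{321}$, $\mathcal C_{222}$ and $\mathcal C_{322}$. All cases share one skeleton, with three ingredients.

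\emph{Unique selection.} At every instant the next parking node (or symmetric pair, or orbit) and the next robot (or pair, or orbit) to move are uniquely determined by all robots. This rests on the orderings $\mathscr O_1$, $\mathscr O_2$, $\mathscr O_3$, $\mathscr O_4$.

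\emph{Progress.} Each invocation of \textsc{SaturateParkingNode} increases the occupancy of the selected parking node (or of every node of the selected pair or orbit) by one unit per completed move. By Lemma~\ref{lem:move-correct}, each such move ends after finitely many hops.

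\emph{Monotonicity.} The potential $\Phi(t)=\sum_{p\in\mathcal P}(\kappa(p)-\lambda_t(p))$ never increases and strictly decreases after each completed move. Since $\Phi(t_0)\le\sum_i\kappa_i$ is finite, the phase terminates.

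To show that saturated nodes are not disturbed, I would use that robots on a saturated parking node are excluded from the set $\mathcal R_u(t)$ of unsaturated robots and are never selected again. Selected robots stop only at the currently chosen unsaturated parking node $p$. Because the while-loop guard is $|\mathcal R(t)\cap\{p\}|<\kappa$ and exactly one robot (per orbit element) moves at a time, occupancy cannot exceed $\kappa(p)$. Supply is never exhausted because $n>\sum_i\kappa_i$, so $\mathcal R_u(t)$ is nonempty whenever some node is unsaturated.

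For the asymmetric classes $\mathcal C_1$, $\mathcal C_{21}$ and $\mathcal C_{31}$, the configuration stays asymmetric, so ties are always broken. The multiplicity node $\mathbf N_{\mathscr L}$ or $\mathbf N_c$ persists, by Lemma~\ref{lem:multiplicity-correct} and the rule that robots at it never leave. Together with Lemma~\ref{lem:phase-identification}, this means the robots always recognize the phase. For $\mathcal C_{221}$ and $\mathcal C_{321}$, I would first argue that the single symmetry-breaking move guarded by \textit{AllowtoMove()} yields an asymmetric configuration in finitely many steps without collision. The moved robot goes to a free node off $\mathcal L\cup\{c\}$, and \textsc{nearly reflective} or \textsc{nearly rotational} states block a second breaking move. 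The case then reduces to $\mathcal C_{21}$ or $\mathcal C_{31}$.

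For $\mathcal C_{222}$ and $\mathcal C_{322}$, I would show inductively that, outside pending states, the configuration is symmetric after each completed pair or orbit move. This uses Case~2 and Case~3 of Lemma~\ref{lem:move-correct} together with Algorithm~\ref{alg:handle-pending-move}. Because selected robots $\rho^a(r)$ go to $\rho^a(p)$, every node of a selected pair or orbit reaches $\kappa$ simultaneously. The gathering node $\mathscr G$ (on $\mathcal L$, or at $c$) stays fixed and is never used as a source of robots.

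\textbf{Main obstacle:} handling asynchrony, that is, guaranteeing that a robot acting on an outdated snapshot neither selects a different target nor overshoots a capacity. My first approach is to argue that between the Look of the currently moving robot and the completion of its move, no other robot is eligible. The selection rule is a function of the configuration, and the only change during the move is along a shortest path. That path strictly decreases its distance to $p$, so the same robot remains the unique minimum and keeps being selected. For the symmetric classes, the analogous invariant is supplied by the \textsc{nearly equal} string detection. That detection pins down which robot is pending, so no new pair or orbit is selected until symmetry is restored.
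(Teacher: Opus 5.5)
Your proposal follows the paper's proof essentially step for step. Like you, the paper argues class by class that the next parking node (pair, orbit) and the robot(s) sent to it are uniquely determined by $\mathscr O_1$--$\mathscr O_4$, reducing $\mathcal C_{221},\mathcal C_{321}$ to $\mathcal C_{21},\mathcal C_{31}$ after symmetry breaking, and it closes with the deficit potential $\Phi(t)=\sum_{p}\max\{0,\kappa(p)-\lambda_t(p)\}$ together with Lemma~\ref{lem:move-correct}; your remarks on supply, persistence of $\mathbf N_{\mathscr L},\mathbf N_c$ and asynchrony are extra detail the paper leaves implicit.

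One caveat applies equally to your argument and the paper's. Robots may start on parking nodes, and only robots on saturated nodes are excluded from $\mathcal R_u(t)$, so a selected robot can leave an unsaturated node for the current target. After such a completed move $\Phi$ is unchanged, so it is not true that $\Phi$ strictly decreases with every completed move. Termination is better argued per target: while $p$ is the target its occupancy never drops and previously saturated nodes are untouched, so at most $\kappa(p)$ completed moves are spent on $p$.
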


\begin{proof}
We first establish that, in every case, the next parking node, symmetric pair of parking nodes, or rotational orbit of parking nodes to be saturated is uniquely determined.

\smallskip
\textbf{Case 1.} $\mathcal C(t_0)\in\mathcal C_1$.
Since $\mathcal P$ is asymmetric, all parking nodes are uniquely ordered according to $\mathscr O_1$. Hence, the next parking node to be saturated is uniquely determined.

\smallskip
\textbf{Case 2.} $\mathcal C(t_0)\in\mathcal C_{21}$. The parking nodes lying on the line of symmetry $\mathcal L$ are saturated first according to $\mathscr O_2$. The remaining parking nodes are ordered by their Manhattan distance from the multiplicity node $\mathbf N_{\mathscr L}$, with ties broken according to $\mathscr O_2$. Hence, the next parking node or symmetric pair of parking nodes to be saturated is uniquely determined.

\smallskip
\textbf{Case 3.} $\mathcal C(t_0)\in\mathcal C_{31}$. If a parking node is located at the center of rotational symmetry $c$, it is saturated first. The remaining parking nodes are ordered by their Manhattan distance from the multiplicity node $\mathbf N_c$, with ties broken according to $\mathscr O_2$. Hence, the next parking node or rotational orbit of parking nodes to be saturated is uniquely determined.

\smallskip
\textbf{Case 4.} $\mathcal C(t_0)\in\{\mathcal C_{221},\mathcal C_{321}\}$.
The prescribed symmetry-breaking operation transforms the configuration into an asymmetric configuration of type $\mathcal C_{21}$ or $\mathcal C_{31}$, respectively. Therefore, the corresponding ordering established in Cases 2 and 3 applies uniquely.

\smallskip
\textbf{Case 5.} $\mathcal C(t_0)\in\mathcal C_{222}$. The \textit{Gathering Phase} first creates the gathering node $\mathscr G$ on the line of symmetry $\mathcal L$. The parking nodes are then considered in reflected pairs, and the ordering $\mathscr O_4$ uniquely selects the next pair to be saturated.

\smallskip
\textbf{Case 6.} $\mathcal C(t_0)\in\mathcal C_{322}$. The \textit{Gathering Phase} first creates the gathering node $\mathscr G$ on $\mathcal c$. The parking nodes are grouped into rotational orbits about the center $c$, and the ordering $\mathscr O_4$ uniquely selects the next orbit to be saturated.

Thus, in every configuration class, the next parking node, symmetric pair of parking nodes, or rotational orbit of parking nodes is uniquely determined.

For each selected parking node $p$ with prescribed capacity $\kappa(p)$, the unsaturated robots are ordered by their Manhattan distance from $p$, with ties broken according to the prescribed ordering. Hence, the robots assigned to saturate $p$ are uniquely determined. In symmetric configurations, the same rule is applied pairwise or orbit-wise, thereby preserving the required reflectional or rotational symmetry. Strong multiplicity detection allows every robot to determine when the occupancy of a parking node reaches its prescribed capacity. Once a parking node becomes saturated, it is never selected again, and the robots occupying it remain stationary. Therefore, previously saturated parking nodes remain unchanged.

Define
$
\Phi(t)=\sum_{p\in\mathcal P}\max\{0,\kappa(p)-\operatorname{occ}_t(p)\},
$
where $\operatorname{occ}_t(p)$ denotes the number of robots occupying the parking node $p$ at time $t$. Since the number of parking nodes is finite, $\Phi(t)$ is a finite nonnegative integer. Whenever a selected robot reaches an unsaturated parking node, the occupancy of that node increases by one, and hence $\Phi(t)$ decreases by one. By Lemma~\ref{lem:move-correct}, every selected robot reaches its assigned destination after finitely many hops without creating any undesired collision or unintended multiplicity node. Therefore, $\Phi(t)$ eventually reaches zero. Consequently, every parking node attains its prescribed capacity after finitely many movements, previously saturated parking nodes remain unchanged, and no unintended multiplicity node is created.
\end{proof}

\begin{lemma}
\label{lem:gathering-correct}
For every solvable configuration, the \textit{Gathering Phase} gathers all surplus robots at a uniquely identifiable gathering node in finite time, without disturbing saturated parking nodes.
\end{lemma}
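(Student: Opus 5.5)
The proof follows the template of Lemma~\ref{lem:saturation-correct}. First we show that, in every solvable class, the gathering node is uniquely identifiable by all robots. Then we show that the surplus robots reach it one at a time (or pair by pair, or orbit by orbit) via Lemma~\ref{lem:move-correct}. Finally, a decreasing potential gives termination.

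\textbf{Uniqueness of the gathering node.} We split by class, as in the earlier lemmas.
\begin{itemize}
\item \textbf{$\mathcal C_1$.} The parking set is asymmetric, so the leading corner $\mathcal K_p$ and its string direction are unique. Hence $\mathscr G$ is the unique node outside $\mathcal{MER}_{\mathcal P}$ adjacent to $\mathcal K_p$ whose outward ray is parallel to that direction. It depends only on $\mathcal P$, so it is invariant under robot moves.
\item \textbf{$\mathcal C_{21}$ and $\mathcal C_{31}$.} The gathering node is the multiplicity node $\mathbf N_{\mathscr L}$ or $\mathbf N_c$. By Lemmas~\ref{lem:multiplicity-correct} and~\ref{lem:phase-identification}, it is unique, stable, and recognizable.
\item \textbf{$\mathcal C_{221}$ and $\mathcal C_{321}$.} By Case~4 of Lemma~\ref{lem:saturation-correct}, these reduce to $\mathcal C_{21}$ and $\mathcal C_{31}$ after symmetry breaking.
\item \textbf{$\mathcal C_{222}$ and $\mathcal C_{322}$.} The first pair (resp.\ orbit) is selected by $\Delta_{\min}$ with ties broken by $\mathscr O_4$, which has a unique first element. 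This fixes $\mathscr G\in V_{\mathcal L}$ (resp.\ $\mathscr G=c$). Once two robots (resp.\ $q$ robots) occupy it, it is the only non-parking multiplicity, so global strong multiplicity detection keeps it identifiable afterwards.
\end{itemize}
We must also check that $\mathscr G\notin\mathcal P$. For $\mathcal C_1$ this holds since $\mathscr G$ lies outside $\mathcal{MER}_{\mathcal P}$. For $\mathcal C_{21}/\mathcal C_{31}$, $\mathscr L$ lies outside $\mathcal{MER}_{\mathcal P}$ by construction. For $\mathcal C_{222}/\mathcal C_{322}$, no parking node lies on $\mathcal L$ or at $c$.

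\textbf{Correctness of the moves.} At each step the mover (a robot, pair, or orbit) is the unique minimizer of distance to $\mathscr G$ among non-gathered surplus robots, with ties broken by the relevant total order. By Lemma~\ref{lem:move-correct}, it reaches $\mathscr G$ in finitely many hops, preserving the needed symmetry and creating no unintended multiplicity.

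Saturated parking nodes are not disturbed, for two reasons. First, surplus robots are by definition not on saturated parking nodes. Second, robots on parking nodes or at $\mathscr G$ never move. In the symmetric classes, pending moves are handled by Algorithm~\ref{alg:handle-pending-move}: the nearly-equal string test detects a half-completed pair or orbit move, and only the lagging robots move until symmetry is restored. Hence no second pair or orbit starts, and no asymmetric intermediate state leads to collision.

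\textbf{Termination.} Let $\Psi(t)=s-\lambda_t(\mathscr G)$ be the number of surplus robots not yet at $\mathscr G$. Each completed move lowers $\Psi$ by $1$, $2$, or $q$, and robots at $\mathscr G$ never leave, so $\Psi$ is nonincreasing. Since $\Psi$ is a nonnegative integer, it reaches $0$ after finitely many moves. At that point all $s$ surplus robots are at $\mathscr G$, and strong multiplicity detection lets every robot recognize termination.

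\textbf{Main obstacle.} The delicate step is the symmetric classes under \textsc{async}. We must argue that a robot holding an outdated snapshot cannot select a different pair or orbit, and so cannot create a stray multiplicity, since Theorem~\ref{thm:non-designated-multiplicity} would make such a multiplicity fatal.

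I would handle this with an invariant. At every time, the configuration is either symmetric or nearly reflective/rotational with respect to exactly one in-progress pair or orbit. Moreover, the selection rule computed from either state yields the same pair or orbit. The reason is that the moving robots strictly decrease their distance to $\mathscr G$, so they remain the unique $\Delta_{\min}$ minimizers.
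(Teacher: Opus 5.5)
Your proposal is correct and follows essentially the same route as the paper's proof. Both arguments establish, class by class, that the gathering node is uniquely determined; both invoke Lemma~\ref{lem:move-correct} together with the fact that robots on saturated parking nodes are never selected; and both prove termination through the count $\Psi(t)$ of surplus robots not yet at $\mathscr G$, which drops by $1$, $2$, or the orbit size. Your extra check that $\mathscr G\notin\mathcal P$ and your invariant for outdated snapshots under \textsc{async} go beyond the paper, which only asserts that pending moves are resolved by letting the partner or orbit robots wait until symmetry is restored.
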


\begin{proof}
We first establish that the gathering node is uniquely determined.

\smallskip
\textbf{Case 1.} $\mathcal C(t_0)\in\mathcal C_1$.
Since $\mathcal P$ is asymmetric, the leading corner $\mathcal K_p$ of $\mathcal{MER}_{\mathcal P}$ and the corresponding string direction are uniquely determined. Hence, the node $\mathscr G$ adjacent to $\mathcal K_p$ and lying outside $\mathcal{MER}_{\mathcal P}$ is uniquely determined.

\smallskip
\textbf{Case 2.} $\mathcal C(t_0)\in\mathcal C_{21}$. The gathering node is the unique multiplicity node $\mathbf N_{\mathscr L}$ created during the \textit{Multiplicity Creation Phase}.

\smallskip
\textbf{Case 3.} $\mathcal C(t_0)\in\mathcal C_{31}$. The gathering node is the unique multiplicity node $\mathbf N_c$ created during the \textit{Multiplicity Creation Phase}.

\smallskip
\textbf{Case 4.} $\mathcal C(t_0)\in\{\mathcal C_{221},\mathcal C_{321}\}$. The prescribed symmetry-breaking operation transforms the configuration into one of type $\mathcal C_{21}$ or $\mathcal C_{31}$, respectively. Hence, by Cases~2 and~3, the gathering node is uniquely determined.

\smallskip
\textbf{Case 5.} $\mathcal C(t_0)\in\mathcal C_{222}$. The gathering node $\mathscr G$ is selected on the line of symmetry $\mathcal L$ using the closest symmetric pair of surplus robots, with ties broken according to $\mathscr O_4$. Hence, $\mathscr G$ is uniquely determined.

\smallskip
\textbf{Case 6.} $\mathcal C(t_0)\in\mathcal C_{322}$. The gathering node is the uniquely determined as the center of rotational symmetry $c$. Thus, in every configuration class, the gathering node is uniquely determined.

Only surplus robots participate in the \textit{Gathering Phase}. Robots already occupying saturated parking nodes are never selected and therefore remain stationary. Hence, previously saturated parking nodes remain unchanged. The surplus robots are ordered by their Manhattan distance from the gathering node, with ties broken according to the ordering $\mathscr O_4$. Consequently, the next surplus robot, symmetric pair of surplus robots, or rotational orbit of surplus robots is uniquely determined. In symmetric configurations, the robots move in reflected pairs or rotational orbits, while pending movements are handled by allowing the corresponding paired or orbit robots to wait until symmetry is restored. Therefore, the required symmetry is preserved throughout the gathering process.

Let $\mathscr G$ denote the gathering node and define
$
\Psi(t)=\left|\left\{r_i\in\mathcal R_s(t):r_i\neq\mathscr G\right\}\right|.
$
Since the number of surplus robots is finite, $\Psi(t)$ is a finite nonnegative integer. Whenever a selected surplus robot reaches $\mathscr G$, it remains stationary, and $\Psi(t)$ decreases by one. In reflection-symmetric configurations, $\Psi(t)$ decreases by two, whereas in rotationally symmetric configurations it decreases by the size of the selected rotational orbit. By Lemma~\ref{lem:move-correct}, every selected surplus robot reaches $\mathscr G$ after finitely many hops without creating any undesired collision or unintended multiplicity node. Therefore, $\Psi(t)$ eventually reaches zero. Consequently, after finitely many movements, all surplus robots are gathered at the uniquely determined gathering node $\mathscr G$, while previously saturated parking nodes remain unchanged.
\end{proof}

\begin{theorem}
\label{thm:spg-correct}
For every solvable initial configuration, the algorithm \textbf{\textsc{spg()}} correctly solves the surplus parking gathering problem.
\end{theorem}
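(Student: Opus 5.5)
The plan is to assemble Theorem~\ref{thm:spg-correct} from the phase-wise lemmas already established, by a case analysis over the solvable classes of Figure~\ref{fig:phase-transition}. These classes are $\mathcal C_1$, $\mathcal C_{21}$, $\mathcal C_{31}$, $\mathcal C_{221}$, $\mathcal C_{321}$, $\mathcal C_{222}$, $\mathcal C_{322}$, and the instances of $\mathcal C_{223}$ and $\mathcal C_{323}$ that escape Theorems~\ref{thm:c223-impossible}--\ref{thm:c323-impossible} and Corollaries~\ref{cor:line-gathering-impossible}--\ref{cor:rotational-gathering-impossible}. For each class I would list the sequence of phases the algorithm executes and verify three conditions. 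First, each phase terminates in finite time. Second, its terminal configuration is a legal starting configuration of the next phase. Third, every robot, from its snapshot alone, can identify which phase the current configuration belongs to. Concatenating finitely many finite phases then gives termination. The final configuration satisfies Conditions~(1) and~(2) of the $\mathcal{SPG}$ definition by Lemmas~\ref{lem:saturation-correct} and~\ref{lem:gathering-correct}.

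For $\mathcal C_{21}$ and $\mathcal C_{31}$, the chain runs through Lemma~\ref{lem:line-formation-correct}, then Lemma~\ref{lem:multiplicity-correct} with Lemma~\ref{lem:phase-identification}, then Lemma~\ref{lem:saturation-correct}, then Lemma~\ref{lem:gathering-correct}. I would add one observation: saturation only moves robots into parking nodes and never removes them from $\mathbf N_{\mathscr L}$ or $\mathbf N_c$. The multiplicity node therefore persists, and the surplus robots still lie on $\mathscr L$, which is exactly the precondition of the Gathering Phase.

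For $\mathcal C_{221}$ and $\mathcal C_{321}$, I would argue that the symmetry-breaking move, guarded by \textsc{AllowtoMove}(), produces after finitely many steps an asymmetric configuration. I would then reduce to the previous case. For $\mathcal C_{1}$, only saturation and gathering are needed, and both $\mathscr O_1$ and $\mathscr G$ are fixed by the asymmetric set $\mathcal P$. For $\mathcal C_{222}$ and $\mathcal C_{322}$, gathering comes first and then saturation. Here I must check that the pending-move handling (Algorithm~\ref{alg:handle-pending-move}) restores symmetry before any new pair or orbit moves. This keeps Lemma~\ref{lem:move-correct} applicable throughout. I would also check that the gathered multiplicity at $\mathscr G$ or $c$ is recognized by strong multiplicity detection, so robots run the $\mathcal C_{222}$/$\mathcal C_{322}$ saturation rule instead of misreading the configuration as $\mathcal C_{221}$/$\mathcal C_{321}$. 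Throughout, Theorem~\ref{thm:non-designated-multiplicity} is respected, because Lemma~\ref{lem:move-correct} forbids multiplicities outside parking nodes and the designated node.

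Next comes Condition~(3), quiescence. In the final configuration every parking node is saturated and all $s$ surplus robots sit at $\mathscr G$. Every selection rule ranges over unsaturated parking nodes or over surplus robots not at $\mathscr G$. Both sets are empty, so every activated robot computes a null move. I would also note that saturated robots and robots at the designated multiplicity node are never selected again, so the final state is stable.

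The main obstacle is the phase identification step under \textsc{async}. A robot holding a stale snapshot must not compute a move that belongs to a different phase, and intermediate configurations must not be ambiguous. One example is a partially saturated configuration in $\mathcal C_{21}$ that looks like an unfinished line formation. I would handle this with an invariant, maintained via Lemma~\ref{lem:move-correct}, that at most one designated entity (robot, pair, or orbit) moves at a time. Stale snapshots can then only come from that entity itself, and its recomputed destination is unchanged because the ordering depends only on stationary robots. I would combine this with the observation that the number and placement of multiplicity nodes strictly encodes the current phase.
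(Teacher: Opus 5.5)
Your skeleton matches the paper's proof. It is a case split over the solvable classes that chains Lemmas~\ref{lem:line-formation-correct}, \ref{lem:multiplicity-correct}, \ref{lem:saturation-correct} and~\ref{lem:gathering-correct}, reduces $\mathcal C_{221}$/$\mathcal C_{321}$ to $\mathcal C_{21}$/$\mathcal C_{31}$, and runs gathering before saturation in $\mathcal C_{222}$/$\mathcal C_{322}$. Your extra checks go beyond what the paper writes: Condition~(3), persistence of the multiplicity node, and recognizing the phase from a single snapshot (which matters because the pseudo-code branches on $\mathcal C(t_0)$).

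One step, however, cannot be carried out. You include the instances of $\mathcal C_{223}$ and $\mathcal C_{323}$ not covered by the impossibility results. For $\mathcal C_{323}$ there are none: $c$ is a parking node, hence not an admissible gathering node, and the rotation about $c$ is partitive on $V\setminus\{c\}$, so Corollary~\ref{cor:rotational-gathering-impossible} covers every instance. For $\mathcal C_{223}$ there are some. If every parking node on $\mathcal L$ has even capacity, Theorem~\ref{thm:c223-impossible} is silent, and Corollary~\ref{cor:line-gathering-impossible} does not apply because the non-parking nodes of $\mathcal L$ are admissible gathering nodes. But \textbf{\textsc{spg()}} has no procedure for $\mathcal C_{223}$: each of its phase routines returns immediately, so there is no phase sequence to verify. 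Under your reading the theorem is actually false for these inputs, since no robot ever moves and an initial configuration (robots on distinct nodes) cannot already have $s\ge 2$ robots gathered. The paper avoids this by declaring all of $\mathcal C_{223}\cup\mathcal C_{323}$ unsolvable (see the Gathering Phase description and Figure~\ref{fig:phase-transition}). So ``solvable'' means exactly $\mathcal C_1,\mathcal C_{21},\mathcal C_{31},\mathcal C_{221},\mathcal C_{321},\mathcal C_{222},\mathcal C_{322}$, and you should restrict your case analysis to these.

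Your resolution of the \textsc{async} obstacle also rests on a claim that is false as stated: the selection rules do not depend only on stationary robots. The designated robot is the one nearest to its target or, in Line Formation, the one farthest from $\mathcal L$ (or $c$), so the choice depends on the mover's own position. Moreover, some targets are recomputed from robot positions: $\mathbf N_{\mathscr L}$ from the terminal robots of $\mathscr L$, and $\mathscr L$ itself from $r_g$. What your invariant needs is a stability property: the designated entity stays designated, with the same destination, at every intermediate node of its path. For nearest-first rules toward a fixed target (Saturation and Gathering in the asymmetric classes) this is immediate, since after one hop the mover is the strict minimizer. In Line Formation it fails. When $r_g$ approaches $w$ from beyond $\mathscr L$, its distance from $\mathcal L$ decreases, so another robot, possibly on the opposite side of $\mathcal L$, can become the farthest one; this changes $r_g$ and hence $\mathscr L$. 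The paper's proof does not address this either (it only invokes Lemma~\ref{lem:move-correct}), so you do not fall short of the paper here, but your argument does not go through as written.
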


\begin{proof}
We consider each solvable configuration class separately. The following cases are to be considered.

\smallskip
\textbf{Case 1.} $\mathcal C(t_0)\in\mathcal C_1$.
The algorithm first executes the \textit{Saturation Phase}. By Lemma~\ref{lem:saturation-correct}, every parking node is saturated to its prescribed capacity. The algorithm then executes the \textit{Gathering Phase}. By Lemma~\ref{lem:gathering-correct}, all surplus robots gather at the uniquely determined gathering node.

\smallskip
\textbf{Case 2.} $\mathcal C(t_0)\in\mathcal C_{21}$. The algorithm first executes the \textit{Line Formation Phase}. By Lemma~\ref{lem:line-formation-correct}, all robots occupy distinct nodes on the uniquely determined formation line. Next, the \textit{Multiplicity Creation Phase} creates a unique and stable multiplicity node by Lemma~\ref{lem:multiplicity-correct}. The algorithm then executes the \textit{Saturation Phase}, which correctly saturates every parking node by Lemma~\ref{lem:saturation-correct}. Finally, the \textit{Gathering Phase} gathers all surplus robots at the gathering node by Lemma~\ref{lem:gathering-correct}.

\smallskip
\textbf{Case 3.} $\mathcal C(t_0)\in\mathcal C_{31}$. The proof is identical to Case~2. The \textit{Line Formation Phase}, \textit{Multiplicity Creation Phase}, \textit{Saturation Phase}, and \textit{Gathering Phase} are executed sequentially, and their correctness follows from Lemmas~\ref{lem:line-formation-correct}, \ref{lem:multiplicity-correct}, \ref{lem:saturation-correct}, and \ref{lem:gathering-correct}, respectively.

\smallskip
\textbf{Case 4.} $\mathcal C(t_0)\in\{\mathcal C_{221},\mathcal C_{321}\}$.
The prescribed symmetry-breaking operation first transforms the configuration into one of type $\mathcal C_{21}$ or $\mathcal C_{31}$, respectively. Therefore, the correctness follows directly from Cases~2 and~3.

\smallskip
\textbf{Case 5.} $\mathcal C(t_0)\in\mathcal C_{222}$. The algorithm first executes the \textit{Gathering Phase}. By Lemma~\ref{lem:gathering-correct}, all surplus robots gather at the selected gathering node while preserving the required reflectional symmetry. The algorithm then executes the \textit{Saturation Phase}, which saturates every parking node to its prescribed capacity by Lemma~\ref{lem:saturation-correct}.

\smallskip
\textbf{Case 6.} $\mathcal C(t_0)\in\mathcal C_{322}$. The algorithm first executes the \textit{Gathering Phase}. By Lemma~\ref{lem:gathering-correct}, all surplus robots gather at the selected gathering node while preserving the required rotational symmetry. The algorithm then executes the \textit{Saturation Phase}, which saturates every parking node to its prescribed capacity by Lemma~\ref{lem:saturation-correct}.

\smallskip

Since each phase decreases its associated progress measure and terminates after finitely many robot movements, the algorithm cannot deadlock before reaching the final configuration. Therefore, for every solvable configuration class, the algorithm terminates after finitely many movements and reaches a configuration in which every parking node is saturated to its prescribed capacity and all surplus robots are gathered at the uniquely determined gathering node. Moreover, by Lemma~\ref{lem:move-correct}, the required symmetry is preserved throughout the execution, and no undesired collision or unintended multiplicity node is created. Hence, \textbf{\textsc{spg()}} correctly solves the surplus parking gathering problem for every solvable initial configuration.
\end{proof}
\section{Move Complexity of the Algorithm}
\label{sec:complexity}

In this section we analyze the efficiency of the algorithm \textsc{spg}() in terms of
the total number of hop-moves performed by all robots (\emph{move complexity}).
Throughout, let $a,b$ denote the side lengths of $\mathcal{MER}_\mathcal{C}(t_0)$, and
write $D = a+b$. Recall that $n$ is the total number of robots, $m$ the number of
parking nodes, and $s = n - \sum_{i=1}^m \kappa_i$ the number of surplus robots.

\subsection{Upper Bound}
\begin{theorem}
\label{thm:upper}
The algorithm \textsc{spg}() has move complexity
$O\bigl(n(a+b)+n^2\bigr)$,
where $a$ and $b$ denote the side lengths of the initial minimum enclosing
rectangle $\mathcal{MER}_{\mathcal C}(t_0)$.
\end{theorem}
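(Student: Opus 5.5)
The plan is to bound the total number of hops phase by phase and sum the bounds. Lemma~\ref{lem:move-correct} guarantees that every selected robot travels along a shortest Manhattan path to its destination. So the cost of a single relocation is exactly the Manhattan distance between its start and end nodes. It therefore suffices to bound, for each phase, both the number of relocations and the distance of each one.

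The key geometric step is to show that every intermediate position stays within an enlarged box $B$, namely $\mathcal{MER}_{\mathcal C}(t_0)$ extended by $O(n)$ on each side. Any two nodes of $B$ are then at distance $O(a+b+n)$. I would argue this as follows.
\begin{itemize}
\item Every destination prescribed by the algorithm is either a parking node, a node defined relative to $\mathcal{MER}_{\mathcal P}$ (such as $w$ at distance $2$ from $u$, $\mathscr G$ adjacent to the leading corner, or $\mathbf N_c$), a node on $\mathcal L$ inside $\mathcal{MER}_{\mathcal R}$, or a free node on the formation line $\mathscr L$.
\item For free nodes on $\mathscr L$, and after detours through the auxiliary lines $\mathscr L',\mathscr L''$, the robot can always pick the free node nearest to its entry point. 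At most $n$ nodes of $\mathscr L$ are occupied, so this choice lies within distance $n$ of the projection of the robot onto $\mathscr L$.
\item The auxiliary lines are at distance $1$ from $\mathscr L$, so each detour adds only $O(n)$ extra hops.
\item The symmetry-breaking moves in $\mathcal C_{221}$ and $\mathcal C_{321}$, including the moves of terminal robots one hop beyond $\mathcal{MER}_{\mathcal P}$ and the moves in \textit{AllowtoMove()}, cost $O(a+b)$ in total.
\end{itemize}

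With this containment in hand, I would count relocations. In the Line Formation Phase each robot is relocated at most once, costing $O(a+b+n)$. In the Multiplicity Creation Phase only two robots move. In the Saturation Phase each parking slot is filled by exactly one robot, and saturated robots never move again, so there are at most $n-s$ relocations. In the Gathering Phase each surplus robot moves once. In the symmetric classes $\mathcal C_{222}$ and $\mathcal C_{322}$ a pending move only completes a move the adversary already started, so it does not create additional relocations. Moving a pair or an orbit just multiplies the cost of one relocation by $2$ or $q\le 4$ while covering that many robots. Hence the total number of relocations is $O(n)$, each costing $O(a+b+n)$, which gives $O(n(a+b)+n^2)$.

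The main obstacle is making the containment argument rigorous for the formation line and the free-node search. The distance from $\mathscr L$ to the farthest robot is $O(a+b)$, but the text does not pin down which free node $w''$ is selected. I would fix the nearest-free-node rule and argue that the number of occupied nodes on $\mathscr L$ is at most $n$; this is exactly where the $n^2$ term comes from. A secondary point is that in $\mathcal C_{222}$ the set $V_{\mathcal L}$ lies inside $\mathcal{MER}_{\mathcal R}\subseteq\mathcal{MER}_{\mathcal C}(t_0)$, so gathering distances there are $O(a+b)$ and do not contribute to the quadratic term.
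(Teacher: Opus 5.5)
Your proof is correct and follows essentially the same route as the paper's. Both are phase-by-phase counts in which $O(n)$ relocations each cost $O(a+b+n)$, and in both the $n^2$ term comes from robots searching along the formation line $\mathscr L$ for free nodes (the paper's $\sum_{i<n}O(i)$). Your enlarged-box containment argument, together with the explicit nearest-free-node rule, makes rigorous what the paper only asserts: that $\mathscr L$ lies at constant offset from $\mathcal{MER}_{\mathcal C}(t_0)$, and that every later move has length $O(a+b+n)$.
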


\begin{proof}
Let $D=a+b$, which is an upper bound on the Manhattan diameter of
$\mathcal{MER}_{\mathcal C}(t_0)$. We analyze the move complexity of each phase
separately.

Throughout the execution, the procedure
\textsc{MoveToDestination}() routes every selected robot (or symmetric pair or
rotational orbit) along a shortest Manhattan path.

\smallskip
\noindent
\textbf{Line Formation Phase.}
Each robot first reaches the formation line $\mathscr{L}$ from its initial
position. Since $\mathscr{L}$ is constructed inside
$\mathcal{MER}_{\mathcal C}(t_0)$ or at a constant offset from it, this
requires at most $O(D)$ moves per robot, contributing $O(nD)$ moves in total.
To place all robots on distinct consecutive nodes of $\mathscr{L}$, the occupied
segment of the line may grow to length $O(n)$. Consequently, the cumulative
displacement of the robots along $\mathscr{L}$ is bounded by
$\sum_{i=0}^{n-1} O(i)=O(n^2)$. Hence,
$M_{\mathrm{LF}}=O(nD+n^2)$.

\smallskip
\noindent
\textbf{Multiplicity Creation Phase.}
Only $O(1)$ robots move to create the multiplicity node, each over a shortest
path of length at most $O(D+n)$. Therefore,
$M_{\mathrm{MC}}=O(D+n)$.

\smallskip
\noindent
\textbf{Saturation Phase.}
Exactly $n-s$ robots move from the formation line to their assigned parking
nodes. Since every parking node lies inside
$\mathcal{MER}_{\mathcal P}\subseteq
\mathcal{MER}_{\mathcal C}(t_0)$, each robot traverses at most $O(D+n)$
edges. Therefore,
$M_{\mathrm{Sat}}=O((n-s)(D+n))=O(nD+n^2)$.

\smallskip
\noindent
\textbf{Gathering Phase.}
The remaining $s$ surplus robots move from the formation line to the gathering
node $\mathscr{G}$. Each movement has length at most $O(D+n)$, yielding
$M_{\mathrm{Gath}}=O(s(D+n))=O(nD+n^2)$.

\smallskip
\noindent
Summing the contributions of the four phases,
$M=M_{\mathrm{LF}}+M_{\mathrm{MC}}+M_{\mathrm{Sat}}+M_{\mathrm{Gath}}
=O(nD+n^2)+O(D+n)+O(nD+n^2)+O(nD+n^2)
=O(nD+n^2).$ Since $D=a+b$, the overall move complexity is
$O\bigl(n(a+b)+n^2\bigr)$.
\end{proof}

\begin{figure}[htbp]
    \centering
    \includegraphics[width=0.5\textwidth]{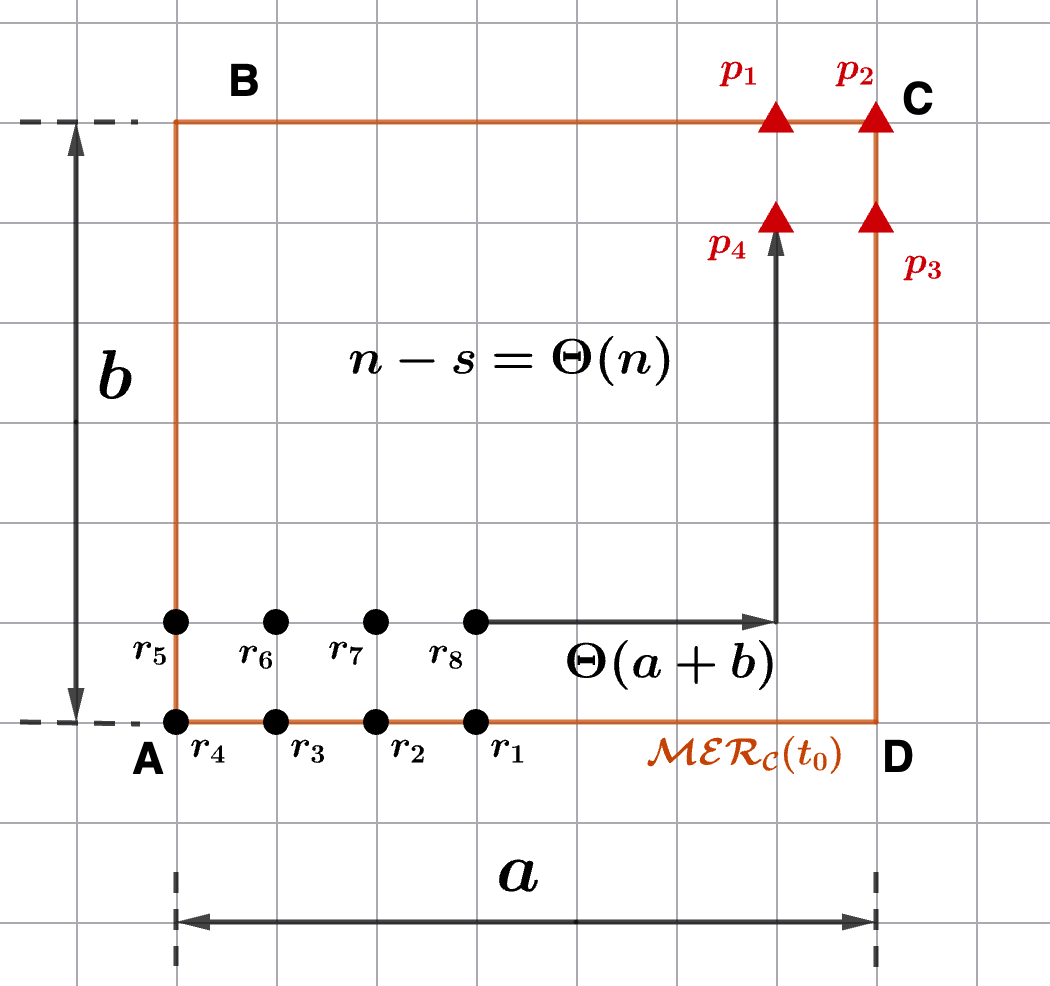}

\caption{An illustration of the worst-case instance used in the proof of
Theorem~\ref{thm:lower}. The robots are initially clustered near one corner of
the minimum enclosing rectangle, while all parking nodes are located near the
opposite corner. The capacities of the parking nodes $p_1,p_2,p_3$, and $p_4$
are $2,1,1$, and $1$, respectively. Since $\Theta(n)$ robots must each travel
a Manhattan distance of $\Theta(a+b)$ to reach their assigned parking nodes,
every correct algorithm requires $\Omega(n(a+b))$ robot moves.}

    % \caption{Leading-corner based ordering of robots and parking nodes with respect to $\mathcal{MER}_{\mathcal P}$. The capacities of $p_1,p_2,\ldots,p_6$ are $2,2,4,1,2,3$ and $p_6$ and $p_3$ contain $2$ and $3$ robots and a multiplicity node contains $4$ robots referred robot $r_9$. The lexicographic string associated with the key corner $A$ is  $\mathcal{STR}^{AD}(t)=$ ((1,0), (1,0), (0,0), (0,0), (0,0), (0,0), (1,0), (1,0), (0,0), (1,0), (0,0), (1,0), (1,0), (0,0), (0,0), (0,0), (1,0), (0,0), (0,0), (1,0), (0,0), (0,0), (0,0), (0,0), (0,0), (0,0), (1,0), (0,0), (0,0), (1,0), (0,0), (0,0), (0,2), (0,2), (0,0), (0,0), (4,0), (0,0), (2,3), (0,0), (0,0), (0,0), (0,0), (0,0), (1,0), (0,0), (0,0), (0,0), (0,0), (1,0), (0,0), (0,0), (0,0), (0,0), (1,0), (0,0), (1,0), (0,0), (0,0), (3,4), (0,0), (0,0), (0,1), (0,0), (1,0), (0,0), (0,2), (0,0), (1,0), (0,0), (0,0), (0,0), (0,0), (0,0), (0,0), (0,0), (1,0), (0,0)). The leading corner $D'$ of $\mathcal P$ is $\rho_{D'}=22000000000000300000000020001004$. }
    \label{fig:upperbound}
\end{figure}

% \begin{figure}[t]
% \centering
% \begin{tikzpicture}[scale=0.6]

% % MER
% \draw[very thick] (0,0) rectangle (12,7);

% \node[below left] at (0,0) {$A$};
% \node[above left] at (0,7) {$B$};
% \node[above right] at (12,7) {$C$};
% \node[below right] at (12,0) {$D$};

% \node at (6,7.7)
% {$\mathcal{MER}_{\mathcal C}(t_0)$};

% % Robot cluster
% \foreach \x in {0.7,1.3,1.9}{
%     \foreach \y in {0.7,1.3,1.9,2.5}{
%         \fill[blue] (\x,\y) circle (2.5pt);
%     }
% }

% \node[align=center] at (2,-0.8)
% {$n-s=\Theta(n)$\\robots};

% % Parking nodes
% \foreach \x/\y/\c in {
% 10.3/6.2/2,
% 11.0/6.2/3,
% 11.7/6.2/2,
% 10.6/5.4/1,
% 11.4/5.4/2}{
% \node[draw,fill=orange!30,minimum size=6mm]
% at (\x,\y){\tiny\c};
% }

% \node[align=center] at (10.9,7.8)
% {Parking nodes};

% % Manhattan path
% \draw[red,very thick,-latex]
% (2.3,2.5)--(10.8,2.5);

% \draw[red,very thick,-latex]
% (10.8,2.5)--(10.8,5.8);

% \node[red] at (6.2,2.9)
% {$\Theta(a+b)$};

% % Legend
% \fill[blue] (1,-2) circle (2.5pt);
% \node[right] at (1.2,-2){Robot};

% \node[draw,fill=orange!30,
% minimum size=5mm] at (5,-2){};

% \node[right] at (5.3,-2){Parking node};

% \end{tikzpicture}

% \label{fig:upperbound}
% \end{figure}
\begin{theorem}
\label{thm:lower}
In the worst case, any algorithm solving the
$\mathcal {SPG}$ problem requires
$\Omega(n(a+b))$ robot moves.
\end{theorem}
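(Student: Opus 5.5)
The plan is to exhibit, for every sufficiently large $n$ and every pair of side lengths $a,b$ (large compared with $n$), an explicit family of initial configurations. For these configurations, any correct algorithm must move $\Theta(n)$ robots each over Manhattan distance $\Theta(a+b)$. The bound then follows by summing distances, because each hop reduces a robot's Manhattan distance to its final position by at most one.

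\textbf{Construction.} Fix $\mathcal{MER}_{\mathcal C}(t_0)$ with corners $A,B,C,D$ as in the figure, with $A$ opposite to $C$. Place the $n$ robots on distinct nodes inside a square block of side $\lceil\sqrt n\,\rceil$ anchored at $A$, making their placement asymmetric so that $\mathcal C(t_0)$ is solvable (e.g.\ in $\mathcal C_1$). Place $m$ parking nodes inside a block of side $O(\sqrt{m})$ anchored at $C$, chosen asymmetric, with $\sum_i \kappa_i = n-s$. Choose $s=2$, or more generally $s\le n/2$, so that $n-s\ge n/2=\Theta(n)$. The parking nodes lie within distance $O(\sqrt n)$ of $C$ and the robots lie within distance $O(\sqrt n)$ of $A$. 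Hence every robot is at Manhattan distance at least $(a+b)-O(\sqrt n)$ from every parking node. Under the assumption $a+b\ge c\sqrt n$ for a suitable constant $c$, this distance is at least $(a+b)/2$.

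\textbf{Counting.} In any execution that ends in a surplus parking gathering configuration, exactly $n-s$ robots end on parking nodes, by condition (1) of the problem definition. Moves are single hops along grid edges, so a robot that travels from $u$ to $v$ performs at least $d_{\mathcal M}(u,v)$ moves, regardless of scheduler, symmetry, or detours. Summing over the $n-s\ge n/2$ robots that end on parking nodes gives at least $(n/2)\cdot(a+b)/2=\Omega(n(a+b))$ moves. This bound holds for every algorithm, and in particular for \textsc{spg}(). Surplus robots and the choice of $\mathscr G$ are simply ignored, since they only add moves.

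\textbf{Main obstacle.} The delicate point is legitimacy of the instance. It must be a solvable initial configuration, and $a,b$ must really be the side lengths of $\mathcal{MER}_{\mathcal C}(t_0)$. I would handle this by putting one robot exactly at $A$ and one parking node exactly at $C$, which pins the rectangle. Asymmetry of $\mathcal P$ follows by choosing capacities that are all distinct (or an irregular shape), which places the configuration in $\mathcal C_1$, where Theorem~\ref{thm:spg-correct} applies. The regime $a+b=O(\sqrt n)$ needs a remark. There the claimed bound is $O(n^{3/2})$, and I would either restrict the statement to $a+b=\Omega(\sqrt n)$ ("in the worst case" allows us to choose the instance) or observe that the displacement bound still gives $\Omega(n(a+b))$ once the blocks are taken of side $\le (a+b)/4$.
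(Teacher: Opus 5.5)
Your proof is essentially the paper's own argument: robots clustered at one corner of $\mathcal{MER}_{\mathcal C}(t_0)$, parking nodes at the opposite corner, and $n-s=\Theta(n)$ parking-bound robots each forced to travel $\Theta(a+b)$ hops. You add details the paper leaves implicit: the $O(\sqrt n)$ cluster radii via the triangle inequality, pinning the rectangle by putting entities at $A$ and $C$, solvability of the instance, and the restriction $a+b\ge c\sqrt n$. That restriction is legitimate for a worst-case family and only excludes $a+b=\Theta(\sqrt n)$, since $(a+1)(b+1)\ge n$ already forces $a+b\ge 2\sqrt n-2$. One small correction: pairwise distinct capacities make $\mathcal P$ asymmetric only if the parking nodes do not all lie on one horizontal, vertical, or diagonal line, because otherwise the reflection in that line is a nontrivial automorphism. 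Use at least three non-collinear parking nodes.
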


\begin{proof}
Let $D=a+b$ denote the Manhattan diameter of the initial minimum enclosing
rectangle $\mathcal{MER}_{\mathcal C}(t_0)$.
Consider the family of instances illustrated in
Figure~\ref{fig:upperbound}, in which all $n$ robots are initially clustered in a
small neighborhood of one corner of
$\mathcal{MER}_{\mathcal C}(t_0)$, whereas all parking nodes are clustered near
the opposite corner. The parking capacities are chosen such that
$n-s=\Theta(n)$ robots must eventually occupy parking nodes. Such an instance is
valid since the parking-node locations and capacities are specified as part of
the input. The Manhattan distance between the two clusters is
$\Theta(D)=\Theta(a+b)$. Consequently, every one of the
$n-s=\Theta(n)$ parking-bound robots must traverse at least
$\Theta(D)$ edges before reaching its assigned parking node, regardless of the
algorithm used. Hence every correct algorithm performs at least $
(n-s)\cdot\Theta(D)
=
\Theta(n)\cdot\Theta(D)
=
\Omega(nD)
$
robot moves. Substituting $D=a+b$ yields the desired lower bound
$\Omega(n(a+b))$.
\end{proof}

\begin{remark}
The above results establish a worst-case upper bound of
$O(n(a+b)+n^2)$ and a general lower bound of
$\Omega(n(a+b))$. The gap between these bounds remains open.
The additional $n^2$ term arises from the line-formation strategy employed by
\textsc{spg}(). Whether this term can be eliminated by designing a more efficient
distributed algorithm, or whether a stronger lower bound can be established for
the SPG problem, remains an interesting direction for future research.
\end{remark}
\section{Conclusion and Future Work}
\label{sec:conclusion}

In this paper, we introduced the $\ mathcal {SPG}$ problem, a new coordination problem for mobile robots deployed on the nodes of an infinite grid. Unlike the classical parking and gathering problems, $\mathcal{SPG}$ simultaneously requires every designated parking node to be saturated according to its prescribed capacity while all remaining surplus robots gather at a common grid node determined autonomously during the execution. We considered anonymous, oblivious, disoriented, and silent robots operating under the asynchronous model with global visibility and global strong multiplicity detection.

We first characterized the initial configurations from which $\mathcal{SPG}$ is deterministically unsolvable. For every remaining solvable configuration, we presented a deterministic distributed algorithm that correctly solves the problem. We proved that the algorithm is collision-free and terminates in finite time. Upon termination, every parking node is occupied according to its prescribed capacity, and all surplus robots are gathered at a uniquely determined gathering node.

Several interesting directions remain for future research. One natural extension is to study the case where the total number of robots is smaller than the total parking capacity. In this setting, an interesting objective is to maximize the number of saturated parking nodes. Another direction is to design algorithms that optimize movement complexity, for example by minimizing the total distance traveled by all robots or the maximum distance traveled by any individual robot. Another interesting direction for future work is to investigate whether the current upper bound can be improved, or whether a matching lower bound can be established for the $\mathcal {SPG}$ problem. Finally, extending the problem to other graph classes, such as trees, rings, and general graphs, or investigating fault-tolerant variants in the presence of robot failures, constitutes promising directions for future research.

%
% ---- Bibliography ----
%
% BibTeX users should specify bibliography style 'splncs04'.
% References will then be sorted and formatted in the correct style.
%
% \bibliographystyle{splncs04}
% \bibliography{mybibliography}

\bibliographystyle{plain} 
\bibliography{Biblo}

\end{document}